\newcommand{\enc}{\mathsf{Enc}}
\newcommand{\dec}{\mathsf{Dec}}
\newcommand{\zo}{\{0,1\}}
\title{Oblivious Deletion Codes}
\author{Roni Con\thanks{Department of Computer Science, Technion--Israel Institute of Technology. \href{mailto:roni.con93@gmail.com}{\texttt{roni.con93@gmail.com}}. This work was supported by the European Union (DiDAX, 101115134). 
Views and opinions expressed are those of the author(s) only and do not necessarily reflect those of the European Union or the European Research Council Executive Agency. Neither the European Union nor the granting authority can be held responsible for them. 
}
\, , Ray Li\thanks{Math \& CS Department, Santa Clara University. Email: \url{rli6@scu.edu}. Research supported by NSF grant CCF-2347371}}
\begin{document}

\maketitle

\begin{abstract}
    We construct deletion error-correcting codes in the \emph{oblivious model}, where errors are adversarial but oblivious to the encoder's randomness. 
    Oblivious errors bridge the gap between the adversarial and random error models, and are motivated by applications like DNA storage, where the noise is caused by hard-to-model physical phenomena, but not by an adversary. 
    \begin{itemize}
        \item (Explicit oblivious) We construct $t$ oblivious deletion codes, with redundancy $\sim 2t\log n$, matching the existential bound for adversarial deletions. 
        \item (List decoding implies explicit oblivious) We show that explicit list-decodable codes yield explicit oblivious deletion codes with essentially the same parameters. By a work of Guruswami and \Hastad (IEEE TIT, 2021), this gives 2 oblivious deletion codes with redundancy  $\sim 3\log n$, beating the existential redundancy for 2 adversarial deletions. 
        \item (Randomized oblivious) We give a randomized construction of oblivious codes that, with probability at least $1-2^{-n}$, produces a code correcting $t$ oblivious deletions with redundancy $\sim(t+1)\log n$, beating the existential adversarial redundancy of $\sim 2t\log n$.
        \item (Randomized adversarial) Studying the oblivious model can inform better constructions of adversarial codes. The same technique produces, with probability at least $1-2^{-n}$, a code correcting $t$ adversarial deletions with redundancy $\sim (2t+1)\log n$, nearly matching the existential redundancy of $\sim 2t\log n$.
    \end{itemize}
    The common idea behind these results is to reduce the hash size by modding by a prime chosen (randomly) from a small subset, and including a small encoding of the prime in the hash.
\end{abstract}

\newpage
\tableofcontents
\newpage

\section{Introduction}
Error-correcting codes (hereafter just codes) play a fundamental role in ensuring the reliability and integrity of data in modern communication systems. 
As information is transmitted or stored, it is often subject to noise, interference, or other disruptions that can lead to errors in the data. 
Codes add redundant information to the original data, allowing them to detect and, in many cases, correct errors without the need for retransmission. 

Tremendous progress has been made over the decades --- since the pioneering works of Shannon \cite{shannon1948mathematical} and Hamming \cite{hamming1950error} --- on the design of efficient codes that can recover from bit flips and erasures, under both probabilistic and adversarial noise models.
Deletions --- errors that remove a symbol from then transmitted word --- are another important and well-studied type of error.
Although there has been substantial advancement in recent years in understanding deletions errors, our comprehension of this model still lags behind that of codes for correcting erasures and flips. 

The main challenge in correcting deletion errors lies in the loss of synchronization between the sender and the receiver.
Consider the codeword $1001011$. If deletions occur at positions 2 and 5, the received word becomes $10111$.
Unlike the erasure setting (where the received word in this case would be $1?01?11$, explicitly indicating the missing positions), here, we \emph{do not} know where the errors occurred.
Moreover, due to the nature of deletions, most bits are shifted from their original positions, making it even harder to recover the original message.

The problem of correcting deletions has been primarily studied for two error models: adversarial and randomized. 
In the adversarial setting, the error channel is an adversary that sees the entire codeword, has unlimited computational power, and applies a fixed number of deletions.
In the randomized setting, the channel deletes every bit of the codeword independently with some fixed probability.
We refer the reader to the following excellent surveys for results in both settings \cite{Mitzenmacher08survey,mercier2010survey,cheraghchi2020overview,haeupler2021synchronization}.

In this work, we focus on the \emph{oblivious model}, where $t$ deletions are arbitrary but oblivious to the encoded codeword. Viewed another way, the deletions are applied by an adversary that knows the message but not the codeword.
For deterministic codes, the oblivious model is equivalent to the adversarial model, but if the encoding may be randomized, the oblivious model may permit less redundancy and more error tolerance. 

\begin{definition}[Oblivious Deletions Code] \label{def:stoch-codes-del}
    A stochastic code with randomized encoding function $\enc$ and (deterministic or randomized) decoding function $\dec$ is a \emph{$t$ oblivious deletion code with error $\varepsilon$} if for all messages $m$ and deletion patterns $\tau$ applying at most $t$ deletions, we have that 
    \begin{align*}
    \Pr [ \dec(\tau(\enc(m))) \neq m ] &\le \varepsilon \;,
    \end{align*} 
    The randomness of the encoder is private to the encoder and not known to the decoder.
\end{definition}

In this paper, we study oblivious deletion-correcting codes in the regime where the number of deletions, $t$, is a constant independent of the code length, $n$.
We believe that this study of oblivious deletions is important  and timely for several reasons.
\begin{enumerate}
    \item First, the oblivious model is a natural model that bridges the gap between the adversarial and random models.     
    For more well-studied types of errors like bit-flips and erasures, the oblivious model arises naturally and is well studied.
    For bit-flip and erasure errors, the oblivious channel is a specific instance of a broader class of channels known as arbitrarily varying channels (AVCs), which have been extensively studied in information theory (see the survey by Lapidoth and Narayan \cite{lapidoth1998reliable}).
    For bit-flip and erasure errors, we know their capacities for oblivious channels are $1 - H(p)$ and $1-p$ \cite{csiszar1988capacity,langberg2008oblivious}\footnote{$H(\cdot)$ denotes the binary entropy function}, matching the capacity for random channel, and there are explicit and efficient constructions \cite{guruswami2016optimal}.
    Thus, even though oblivious errors represent a stronger error model than random errors, reliable communication is still achievable at the same rate in both cases.
    Our results, along with those in \cite{guruswami2020coding}, demonstrate a similar story for oblivious deletions.
    Thus, while the errors in the oblivious model are applied adversarially, coding for oblivious errors is often more similar to coding for random errors. 
    In this sense, the model bridges the adversarial and random error models.

    \item Second, the oblivious model is well-motivated for DNA storage, one of the main motivations for deletion codes. In DNA storage systems, digital data is encoded and stored in DNA molecules sequences (with a process called synthesizing), utilizing the high density storage capacity, durability, and longevity of DNA for long-term data preservation. 
    Deletions appear naturally in DNA-based storage and thus codes that can correct them are very much needed to ensure the reliability of the stored data. The nature of errors in DNA-based storage systems is  highly dependent on elements such as the technology that was used to synthesize (write) and sequence (read) the data and the environment in which the molecules are stored. Thus, the error model, though not adversarial, is not easily predictable. 
    The oblivious model is ideally suited for this context, when the noise arises from a complex or poorly understood physical process (where the exact error distribution is unknown), but not by an adversary that sees the codeword and selects the worst-case deletions accordingly. 
    For a comprehensive survey about recent advancements regarding coding for DNA-based storage systems, we refer the reader to \cite{sabary2024survey}.

\item Third, studying the oblivious model is timely given the barriers to progress in the adversarial model. 
Designing explicit codes that can correct a constant number of adversarial deletions has received significant attention in recent years (see Table~\ref{tab:adv} for a list of works), but there appear to be barriers to progress. Adversarial $t$-deletion-codes need redundancy at least $\sim t\log n$, but existentially the best codes only achieve redundancy $\sim 2t\log n$. 
This is only beaten for the $t=1$ deletion case, where the Varshamov--Tenengolts code \cite{varshamov1965codes} achieves optimal redundancy $\sim \log n$. 
For $t\ge 2$, this factor-of-2 redundancy gap is a well-known barrier.

There is also a gap between the existential construction and the best known explicit constructions.
For $t=2$ deletions, Guruswami and \Hastad \cite{guruswami2021explicit}, improving on \cite{gabrys2018codes,sima2019two}, gives a construction matching the existential redundancy $\sim 4\log n$ and closing this gap. However, for more deletions, $t\ge 3$, a gap remains. 
Several constructions give an asymptotically optimal redundancy of $\Theta(t\log n)$, but the best construction only achieves redundancy $\sim 4t\log n$ \cite{sima2020optimal-systematic}.
Given these barriers, studying the oblivious model is relevant and timely. 

\begin{table}[t]
    \centering
    \begin{tabular}{c|c|c|c|c|c}
       Codes  & $t$ & Redundancy & Explicit? & Polytime? & Model\\ \hline
      Existential bound  & all $t$ & $\sim 2t\log n$ & No & No & Adversarial \\\hline
       VT Code \cite{varshamov1965codes}  & 1 & $\sim\log n$ & Yes & Yes & Adversarial \\\hline
    \cite{gabrys2018codes}  & 2 & $\sim 8\log n$ & Yes & Yes & Adversarial\\\hline
    \cite{sima2019two}  & 2 & $\sim 7\log n$ & Yes & Yes & Adversarial\\\hline
    \cite{guruswami2021explicit}  & 2 & $\sim 4\log n$ & Yes & Yes & Adversarial\\\hline
    \cite{belazzougui2015efficient} & all $t$ & $O(t^2+t\log n)$ & Yes & Yes & Adversarial (D.E) \\\hline
    \cite{brakensiek2017efficient}  & all $t$ & $O(t^2\log t\log n)$ & Yes & Yes& Adversarial\\\hline
    \cite{haeupler2018synchronization,cheng2022deterministic}  & all $t$ & $O(t\log^2(n/t))$ & Yes & Yes& Adversarial (D.E) \\\hline
    \cite{sima2020optimal}  & all $t$ & $\sim 8t\log n$ & Yes & Yes & Adversarial\\\hline
    \cite{sima2020optimal-systematic}  & all $t$ & $\sim 4t\log n$ & Yes & Yes & Adversarial\\\hline
    Theorem~\ref{thm:rand-adv} & all $t$ & $\sim (2t+1)\log n$ & Yes, Random & Yes & Adversarial \\\hline
       \hline
        Existential bound  & all $t$ & $\sim t\log n$ & No & No &Oblivious \\\hline
        \cite{chakraborty2016streaming}  & all $t$ & $O(t^2\log n)$ & Yes & Yes  &Oblivious (D.E)\\\hline
        \cite{belazzougui2016edit}  & all $t$ & $O(t\log^2t + t\log n)$ & Yes & Yes  &Oblivious (D.E)\\\hline
        \cite{haeupler2019optimal}  & all $t$ & $O(t\log (n/t))$ & Yes & Yes  &Oblivious (D.E)\\\hline

        Corollary~\ref{cor:list}   & 2 & $\sim 3\log n$ & Yes & Yes  &Oblivious\\\hline
        Theorem~\ref{thm:explicit}  & all $t$ & $\sim 2t\log n$ & Yes & Yes  &Oblivious\\\hline
        Theorem~\ref{thm:rand-obliv}  & all $t$ & $\sim (t+1)\log n$ & Yes, Random & Yes  &Oblivious\\\hline
       \hline
       Lower bounds & all $t$ & $\ge t\log n$ & & & Adv/Obliv
    \end{tabular}
    \caption{$t$ deletion codes under adversarial and oblivious errors. For readability, multiplicative factors of $1+o(1)$ are suppressed by the $\sim$ notation. The Polytime column means polynomial time encodable and decodable, assuming $t$ is a constant. The (D.E.) in the Model column means a result that was originally presented for the respective document exchange variant.}
    \label{tab:adv}
\end{table}

    We demonstrate that, while the oblivious model has practical relevance like the adversarial model, we can circumvent the barriers imposed by the adversarial models and construct less redundant codes (Theorem~\ref{thm:explicit}, Corollary~\ref{cor:list}, Theorem~\ref{thm:rand-obliv}). As a basic illustration of this, the above factor-of-2 gap --- between the best existential redundancy and lower bound for adversarial deletions --- does not exist for oblivious deletions (see Proposition~\ref{prop:obl-lower-bound} and Proposition~\ref{prop:exist-oblv}).
Further, as we demonstrate, studying the oblivious model can be a stepping stone to better constructions in the adversarial model. We found better constructions for oblivious deletion codes with a randomized construction (Theorem~\ref{thm:rand-obliv}), and the same techniques yield better constructions of adversarial deletion codes with randomized code constructions (Theorem~\ref{thm:rand-adv}).
\end{enumerate}

\subsection{Our results}

As far as we know, the oblivious‐deletion model has only been explicitly studied by Guruswami and Li \cite{guruswami2020coding}, who focused on the regime where the number of deletions is a constant fraction of the codeword length (another work \cite{hanna2018guess} considered a relaxed setting where the decoder succeeds with high probability over a \emph{uniformly random} codeword).
Oblivious deletions have also been studied implicitly in the form of an equivalent problem called \emph{randomized document exchange} \cite{belazzougui2016edit, chakraborty2016streaming, haeupler2019optimal} (see Related Work, Section~\ref{ssec:related} and Preliminaries, Section~\ref{ssec:equiv}).
This work considers the regime in which the number of deletions is a constant independent of the codeword length.
Our results, along with prior work, are summarized in Table~\ref{tab:adv}.

\paragraph*{Existential result and lower bound.}

We first show the optimal redundancy for $t$ oblivious deletion codes is $\sim t\log n$ by giving matching constructions and impossibility results. This stands in contrast to the adversarial model, where there is a significant gap between the best existential construction and impossibility result.
We first show the lower bound, that any code correcting $t$ oblivious deletions must have redundancy of at least $\sim t\log n$. 
We note that our lower bound is stronger than any lower bounds that may be implicit from the document exchange literature, because our lower bound holds even for non-systematic codes.

\begin{restatable}{proposition}{PropOblMinRedund} \label{prop:obl-lower-bound}
Let $n$ be a large enough integer. Let $C$ be a code with block length $n$ that can correct $t$ oblivious deletions with error $\varepsilon \leq 1/16$. Then, the redundancy of $C$ is at least 
$t\log n-O_t(\log\log n)$.
\end{restatable}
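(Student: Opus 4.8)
The plan is to show that a code with too little redundancy must fail against oblivious deletions by exhibiting, for some message $m$, a deletion pattern $\tau$ that causes decoding failure with probability more than $\varepsilon$. The key observation is that against oblivious deletions we get to pick $\tau$ knowing $m$ but not the encoder's randomness, so it suffices to find two messages $m_0 \neq m_1$ and a single deletion pattern $\tau$ (of size $t$) under which the distributions of $\tau(\enc(m_0))$ and $\tau(\enc(m_1))$ overlap significantly --- then the decoder cannot distinguish them and one of the two messages is decoded incorrectly with probability $\geq 1/2 - o(1) > \varepsilon$. So the heart of the argument is a counting/pigeonhole statement: if the code has too many messages, then after applying a fixed small deletion pattern, two messages' output supports must collide.

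First I would set up the counting. Suppose the redundancy is $r$, so the number of messages is $|C| = 2^{n-r}$. Fix the specific deletion pattern $\tau^*$ that deletes the last $t$ coordinates (or any fixed pattern of size $t$). For each message $m$, let $\mu_m$ be the distribution of $\tau^*(\enc(m)) \in \zo^{n-t}$, and let $S_m$ be a ``heavy'' support set: the set of strings receiving probability mass at least some threshold under $\mu_m$, chosen so that $\mu_m(S_m) \geq 1 - \varepsilon$ while $|S_m|$ is not too large (this uses that probability mass at least threshold $\theta$ can be carried by at most $1/\theta$ strings, and we pick $\theta$ appropriately, say $\theta = \varepsilon / 2^{n-t}$ or a polynomial shave thereof, so that $|S_m| \leq 2^{n-t+1}/\varepsilon$ --- actually we want the sets small, so I'd instead argue that the *correct* decoding region $D_m \subseteq \zo^{n-t}$ must satisfy $\mu_m(D_m) \geq 1-\varepsilon$, and the $D_m$ are disjoint over $m$). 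Since the $D_m$ are disjoint subsets of $\zo^{n-t}$, we get $\sum_m |D_m| \leq 2^{n-t}$, so the average $|D_m| \leq 2^{n-t}/|C| = 2^{r-t}$; hence many messages have small decoding regions. But also each $\mu_m$ is supported on $\zo^{n-t}$ and must put $\geq 1-\varepsilon$ mass on $D_m$; since $|D_m|$ is small, $\mu_m$ is concentrated, and I'd try to derive a contradiction from the fact that $\enc(m)$ itself lives in $\zo^n$ and its image under a size-$t$ deletion can't be *too* concentrated unless $\enc(m)$ is ``almost constant'' --- but that's not quite a contradiction on its own, so I need to be more careful about which deletion pattern I use.

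The cleaner route, which I expect is the intended one: pick $\tau^*$ adaptively as the pattern deleting the final $t$ symbols, and count *prefixes*. If two codewords agree on their first $n-t$ coordinates, then deleting the last $t$ coordinates maps them to the same string; more generally I want: among the $2^{n-r}$ encodings (over all messages and all randomness), if we look at the multiset of length-$(n-t)$ prefixes, a counting argument forces a collision structure across two distinct messages. Concretely, for message $m$ let $P_m$ be the distribution of the length-$(n-t)$ prefix of $\enc(m)$; this is exactly $\tau^*(\enc(m))$ for the suffix-deletion pattern. The decoder, seeing a length-$(n-t)$ string, must output $m$ with probability $\geq 1-\varepsilon$ when the input is drawn from $P_m$. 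If redundancy $r < t\log n - O_t(\log\log n)$, then $|C| = 2^{n-r} > 2^{n-t\log n + O_t(\log\log n)} = 2^n \cdot n^{-t} \cdot (\log n)^{O_t(1)}$. Now here is the pigeonhole: the total number of length-$(n-t)$ strings is $2^{n-t}$, and I want to say the $P_m$'s can't all be ``separated.'' Apply the disjointness of decoding regions $D_m$ (inputs decoded to $m$): $\sum_m P_m(D_m) \geq (1-\varepsilon)|C|$, but also $\sum_m P_m(D_m) \leq \sum_m \max_{x} [\text{number of }m' \text{ with } x \in D_{m'}] \cdot (\ldots)$ --- since the $D_m$ partition, each $x$ is in at most one $D_m$, giving $\sum_m P_m(D_m) \leq \sum_x (\text{mass}) $. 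This needs the extra leverage that each $P_m$, being a prefix of a random length-$n$ string, has min-entropy or support bounds. Actually the right extra fact: $\enc(m)$ is a string in $\zo^n$, so its length-$(n-t)$ prefix takes at most... no bound better than $2^{n-t}$ in general.

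So the genuinely delicate step --- and the one I flag as the main obstacle --- is extracting a quantitative contradiction that yields precisely $t\log n - O_t(\log\log n)$ rather than just $t\log n - \omega(1)$ or $(t-o(t))\log n$. I would handle this by the following sharper argument. Consider all $\binom{n}{t}$ deletion patterns of size $t$ simultaneously; equivalently, the \emph{$t$-deletion ball} structure. For a fixed message $m$ and its encoding (a random variable $X_m = \enc(m) \in \zo^n$), consider the random length-$(n-t)$ string obtained by a *worst-case for the adversary* pattern. The key combinatorial input is a bound on how many distinct length-$n$ strings can share a common length-$(n-t)$ subsequence pattern in a controlled way --- i.e., the standard fact that a string of length $n-t$ is a subsequence of at most $\binom{n}{t} \cdot 2^t = O(n^t)$... wait, it's the reverse: each length-$n$ string has at most $\binom{n}{t} \leq n^t$ distinct length-$(n-t)$ subsequences. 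This is the crucial asymmetry. So: pick the adversary's pattern $\tau$ to be a *fixed* pattern (say suffix deletion) but now lower-bound the failure probability using that $\sum_m |D_m| \leq 2^{n-t}$ forces, for a $(1-\delta)$ fraction of messages, $|D_m| \leq 2^{n-t}/(\delta |C|)$; and $P_m$ must place $1-\varepsilon$ mass on this set. Finally, I would argue two such messages $m_0, m_1$ with $|D_{m_0}|, |D_{m_1}|$ both at most $2^{n-t}/(\delta|C|) =: B$ and $D_{m_0} \cap D_{m_1} = \emptyset$ cannot both be ``reachable'': because the set of length-$n$ strings whose suffix-deletion lands in a set of size $B$ has size at most $B \cdot 2^t$, and we can union-bound / count to show that not all $2^{n-r}$ encodings (across all $m$) can have their prefixes confined to small disjoint sets unless $B \cdot 2^t \cdot |C| \geq 2^n$, i.e., $2^{n-t}/(\delta|C|) \cdot 2^t \cdot |C| \geq 2^n$, which is always true and hence gives nothing --- confirming that the suffix-deletion pattern alone is too weak, and one really must use the $n^t$-vs-$2^t$ asymmetry by choosing the deletion pattern *depending on $m$*. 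I would therefore, as the final and most technical step, for each heavy message $m$ choose the size-$t$ deletion pattern $\tau_m$ that maximizes the collision probability $\Pr[\tau_m(X_m) = \tau_m(X_{m'})]$ over the at-most-$n^t$ choices, show this max is at least (average over patterns) which by a double-counting over the $\binom{n}{t}$ patterns and the concentration of $X_m$ is at least $\approx 1/n^t$ times a combinatorial factor, and conclude that if $|C| \gg n^t/\mathrm{polylog}(n)$ then some pair $(m,m')$ collides under a common pattern with probability $> 2\varepsilon$, defeating the code; unwinding gives redundancy $\geq t\log n - O_t(\log\log n)$. The $O_t(\log\log n)$ slack is exactly the room needed to absorb the $\mathrm{polylog}(n)$ and the $2^t$, $\binom{n}{t}$ vs $n^t$ discrepancies and the $\varepsilon \leq 1/16$ error budget.
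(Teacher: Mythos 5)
There is a genuine gap here, and you essentially flag it yourself: the step you call ``the most technical step'' --- showing that among $\gg n^t/\mathrm{polylog}(n)$ messages, two must collide under some common deletion pattern with probability $> 2\varepsilon$ --- is exactly the entire content of the theorem, and your double-counting sketch does not actually establish it. The sketch says the maximum collision probability is at least the average, which ``is at least $\approx 1/n^t$ times a combinatorial factor,'' but this combinatorial factor is precisely what needs to be controlled, and without controlling it you get a circular estimate. Your earlier observations are correct diagnoses (a single fixed pattern only buys redundancy $\geq t$; you need to exploit the $\binom{n}{t}$ patterns) but the argument never converts them into a closed calculation. Moreover, the ``disjoint decoding regions $D_m$'' framing only works for a deterministic decoder, which is a side issue but still needs to be addressed.

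The paper's route is quite different and is worth comparing. Rather than arguing directly, it first proves a reduction (Claim~\ref{clm:obl-to-random}): a $t$ oblivious deletion code with error $\varepsilon$ yields, by sampling one codeword per message and an averaging argument over deletion patterns, a \emph{deterministic} code correcting $t$ \emph{random} deletions in the average case with error $2\sqrt{\varepsilon}$. It then invokes a (reformulated) theorem of Kalai, Mitzenmacher and Sudan (Theorem~\ref{thm:kms-lower-bound}) giving a lower bound of $\log\binom{n}{t} - O_t(\log\log n)$ on the redundancy of such average-case codes. The KMS argument is the one that carries out the double-counting you are gesturing at, but it does so through a specific combinatorial device: the notion of an $(\ell,t)$-bad string, i.e.\ a string $x$ admitting two deletion patterns $\tau,\tau'$ with $\Delta(\tau,\tau')\ge \ell$ and $\tau(x)=\tau'(x)$. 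They bound the number of bad strings by $\binom{n}{t}^2 2^{n-\ell}$, choose $\ell = 3t\log n$ so that bad strings are rare, and then show that on non-bad codewords the map $(c,\tau)\mapsto\tau(c)$ is nearly invertible up to a polylog-sized ambiguity in $\tau$. Combining this with the information-theoretic fact that a function cannot recover a uniform element of $S$ from $T$ with probability better than $|T|/|S|$ (Lemma~\ref{lem:helping-joint-prob}, applied with $S = C\times P_{t,n}$ and $T=\zo^{n-t}$) yields $\log|C| \lesssim n - t\log n + O_t(\log\log n)$. Your sketch has the right target and the right intuition about the $n^t$-vs-$2^{n-t}$ tension, but it is missing the bad-string/near-invertibility machinery (or something equivalent) that actually makes the count close, and it also omits the reduction from oblivious to average-case random deletions, which is needed so that you can average over $(m,\tau)$ in the first place.
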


We complement this result by showing that a random code construction gives an oblivious code with optimal redundancy (up to some lower-order terms). 
\begin{restatable}{proposition}{PropOblExist}\label{prop:exist-oblv}
    Let $t$ be a constant integer and let $n$ be a large enough integer that does not depend on $t$. Let $\varepsilon\in (0,1)$. 
    Then, there exists a $t$ oblivious deletion code with error $\varepsilon$, redundancy $t\log n+ O_{t,\varepsilon}(\log\log n)$ and the amount of randomness the encoder uses is $\log\log n + O_{t,\varepsilon}(1)$.
\end{restatable}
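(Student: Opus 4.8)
The plan is to use the probabilistic method with a small amount of shared randomness realized as a ``mod a random prime'' hash, following the common idea advertised in the abstract. Concretely, the encoder will append to the message $m$ a short hash consisting of (i) a description of a prime $p$ chosen uniformly at random from a set $\mathcal{P}$ of roughly $\log n$ primes, each of magnitude $\mathrm{poly}(\log n)$, and (ii) the value $h_p(m) := f(m) \bmod p$, where $f$ is a suitable fixed ``fingerprint'' function mapping length-$n$ strings to integers that distinguishes strings at edit distance $\le 2t$ (for instance, the integer whose base-$(2t+1)$ or binary digits encode the $2t$-deletion-ball-separating statistics used in prior deletion-code constructions, or simply $f(m) = \sum m_i B^i$ for a large base $B$; any $f$ that is injective on each relevant cluster works). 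The description of $p$ costs $\log\log n + O_{t,\varepsilon}(1)$ bits, the value $h_p(m)$ costs $\log p = O_{t,\varepsilon}(\log\log n)$ bits, and crucially the decoder is \emph{not} told the randomness — but since $p$ itself is written into the (randomized) encoding, the decoder can read it off, so this is really a stochastic code whose ``private randomness'' is exactly the choice of $p$.

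The key steps, in order: First, fix $m$ and a deletion pattern $\tau$ deleting at most $t$ symbols. The decoder, on input $y = \tau(\enc(m))$, first strips the (fixed-length) prime-description field to recover $p$ and the hash value $h := h_p(m)$; note $\tau$ may have deleted symbols from the hash region as well, but because $t$ is constant and this region has length $O_{t,\varepsilon}(\log\log n)$, the decoder can brute-force over all $O_t(\mathrm{poly}(\log n))$ ways the $\le t$ deletions could have been distributed and over all short hash fields, so WLOG assume the hash is received intact (this only multiplies the error bound by a $\mathrm{poly}(\log n)$ factor, which we absorb by enlarging $\mathcal{P}$). Second, the decoder enumerates all candidate messages $m'$ in the ``deletion cluster'' of $y$ — i.e.\ all $m'$ with $n$ symbols such that $\tau'(m')$ could equal the data part of $y$ for some $\le t$-deletion pattern $\tau'$; there are at most $O(n^{t})$ such candidates — and outputs the unique $m'$ with $f(m') \equiv h \pmod p$, declaring failure if there is not exactly one. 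Third, bound the failure probability: decoding fails only if some wrong candidate $m' \ne m$ in the cluster also satisfies $f(m') \equiv f(m) \pmod p$, i.e.\ $p \mid (f(m') - f(m))$. Since $m'$ and $m$ are distinct and $f$ separates cluster-mates, $f(m') - f(m)$ is a nonzero integer of magnitude at most $2^{O_t(n)}$, hence divisible by at most $O_t(n/\log n)$ of the primes in $\mathcal P$ (each prime being $\ge 2$, or $\ge \log n$ if we take $\mathcal P$ to consist of primes in $[\log n, C\log n]$). Taking $|\mathcal{P}| \ge c \cdot n / \log n$ (so $\log|\mathcal P| = \log n - \log\log n + O(1)$) and union-bounding over the $O(n^{t})$ bad candidates $m'$ and the $\mathrm{poly}(\log n)$ hash-shift possibilities, the probability over $p$ that decoding fails is at most $O_t(n^{t} \cdot \mathrm{poly}(\log n) \cdot (n/\log n) / |\mathcal P|)$ — which is \emph{not} yet $\le \varepsilon$ with a single prime, so the actual construction takes a tuple of $O_{t,\varepsilon}(1)$ independent random primes and requires all their hash checks to pass, driving the failure probability below $\varepsilon$ while keeping the prime-description field $O_{t,\varepsilon}(\log\log n)$ bits and the randomness $\log\log n + O_{t,\varepsilon}(1)$ bits. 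Fourth, tally redundancy: we need the hash values $f(m) \bmod p_i$ to pin down $m$ within its cluster of size $O(n^t)$, which information-theoretically requires $\ge t\log n - O_t(1)$ bits, and with the above parameters the hash values contribute $t\log n + O_{t,\varepsilon}(\log\log n)$ bits total and the prime descriptions contribute $O_{t,\varepsilon}(\log\log n)$ more, giving the claimed redundancy $t\log n + O_{t,\varepsilon}(\log\log n)$.

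The main obstacle I anticipate is the interplay between two competing requirements in choosing $|\mathcal P|$ and the prime sizes: the primes must be \emph{large enough} that $f(m) \bmod p$ carries close to $\log n$ bits of fingerprint per prime (so the total over $t$-ish primes is $\sim t\log n$, matching the lower bound rather than overshooting), yet the \emph{set} $\mathcal P$ must be large enough (of size $\gtrsim n/\mathrm{polylog}(n)$) that a fixed nonzero integer of magnitude $2^{O(n)}$ is killed by only a tiny fraction of primes in $\mathcal P$ — and simultaneously the \emph{description} of which prime was chosen must cost only $\log\log n + O(1)$ bits, which forces $|\mathcal P| = \mathrm{polylog}(n)$, in apparent conflict with the previous need. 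Resolving this is exactly where the ``small subset of primes'' trick bites: we do not take $\mathcal P$ to be all primes up to some bound, but a carefully sized collection where the number of primes (hence description length) is $\mathrm{polylog}(n)$, while each prime is itself of size $\mathrm{polylog}(n)$ so that $f(m)\bmod p$ is only $O(\log\log n)$ bits — and then the ``at most $O(n/\log n)$ primes divide a given integer'' bound is replaced by the bound that at most $O(n/\log\log n)$ of the \emph{polylog-sized} primes in our collection can divide a $2^{O(n)}$-magnitude integer, which still lets a random prime from a $\mathrm{polylog}$-sized well-spread set avoid it with the required probability once we use $O_{t,\varepsilon}(1)$ independent copies. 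Getting the three constraints (per-prime hash size $\approx$ contributes to $t\log n$ total, description size $\le \log\log n + O(1)$, failure probability $\le\varepsilon$) to hold simultaneously is the delicate bookkeeping, but there is enough slack in the $O_{t,\varepsilon}(\log\log n)$ and $O_{t,\varepsilon}(1)$ terms that it goes through.
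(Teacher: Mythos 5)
Your proposal takes a genuinely different route from the paper, but as written it does not go through. The paper's proof of Proposition~\ref{prop:exist-oblv} is a pure random-code argument: for each message $m_i$ it samples a multiset $E_i$ of $s = \varepsilon^{-2}\cdot 10t\log n$ uniformly random strings in $\{0,1\}^n$, the encoder outputs a random element of $E_i$, and the decoder returns $m_i$ if the received word is a subsequence of something in $E_i$ only. A Hoeffding bound shows that, for a fixed $(m_i,\tau)$, the probability (over the $E_j$'s) that more than an $\varepsilon$-fraction of $E_i$ is ``confusable'' with some other $E_j$ is at most $n^{-5t}$; linearity of expectation over $\tau$ then gives a choice of $\{E_i\}$ for which only an $n^{-4t}$-fraction of messages are bad, and these are expurgated. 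The randomness budget is exactly $\log s = \log\log n + O_{t,\varepsilon}(1)$ and no hash or prime appears anywhere. Your proposal instead instantiates the ``mod a random prime'' template from the other theorems — a fixed deterministic fingerprint $f$ plus a prime description — which is a different decomposition entirely.

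The concrete gap is in your third paragraph, the ``main obstacle'' resolution, and it is not a matter of bookkeeping slack. You correctly identify that the prime description costs $\log|\mathcal{P}|$ bits, so $\log\log n + O(1)$ randomness forces $|\mathcal{P}| = O(\log n)$. You then propose to make the primes themselves of size $\mathrm{polylog}(n)$. But then $f(m)\bmod p$ carries only $O(\log\log n)$ bits, and $O_{t,\varepsilon}(1)$ independent copies carry $O_{t,\varepsilon}(\log\log n)$ bits total — far short of the $t\log n$ bits required to single out $m$ among the $n^t$ supersequences of the received word. The ``at most $O(n/\log\log n)$ primes divide a given $2^{O(n)}$-magnitude integer'' bound you invoke is vacuous here: $|\mathcal{P}|$ is only $\mathrm{polylog}(n) \ll n/\log\log n$, so a single adversarially chosen $m$ and candidate $m'$ can have $f(m)-f(m')$ divisible by \emph{every} prime in $\mathcal{P}$ (their product has only $\mathrm{polylog}(n)\cdot\log\log n \ll n$ bits), and no choice of prime, nor any constant number of independent primes, rescues that message. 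A prime-hash proof \emph{can} be made to work with primes of magnitude $\widetilde\Theta(n^t)$ — so $\log p \approx t\log n$ and the description still costs only $\log|\mathcal{P}|$ — together with the same kind of Chernoff-plus-expurgation argument the paper uses in Theorem~\ref{thm:rand-obliv} (but expurgating messages rather than union-bounding over all of them, to get by with $|\mathcal{P}| = O_{t,\varepsilon}(\log n)$). Your proposal has neither the correct prime magnitude nor any expurgation step, and the final ``there is enough slack'' claim is exactly where the proof breaks.
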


\paragraph*{Explicit oblivious deletion codes.} Next, we turn to explicit constructions. 
\begin{restatable}{theorem}{thmexplicit}
There exists $t$ oblivious deletion code with error $\varepsilon$ and redundancy $2t\log n +O_t((\log\log n)^2 + \log\frac{1}{\varepsilon})$, encoding time $\tilde O(n)$ and decoding time $\tilde O(n^{t+1})$.
\label{thm:explicit}
\end{restatable}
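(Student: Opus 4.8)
The plan is to send the message $m$ essentially in the clear, together with a short, deletion-robust \emph{randomized} hash that pins $m$ down among the few messages consistent with the received word. Concretely, set
\[
\enc(m)\;=\;D\bigl(\mathrm{hash}_p(m)\bigr)\;\|\;m,\qquad \mathrm{hash}_p(m)=\bigl(\langle p\rangle,\ \mathrm{Syn}(m)\bmod p\bigr),
\]
where $p$ is a prime drawn uniformly at random from a set $\mathcal P$ of primes (the encoder's private randomness), $D$ is an explicit short deletion code protecting the hash, $\langle p\rangle$ is a succinct binary encoding of $p$, and $\mathrm{Syn}\colon\zo^{\le n}\to\mathbb Z_{\ge 0}$ is a deterministic, $\tilde O(n)$-time computable ``syndrome'' with two properties: \emph{(separation)} if $x\ne y$ have equal length and a common subsequence of length $\ge|x|-t$ then $\mathrm{Syn}(x)\ne\mathrm{Syn}(y)$; and \emph{(boundedness)} $\mathrm{Syn}(x)\le n^{O_t(1)}$, so that in the separation case the nonzero integer $\mathrm{Syn}(x)-\mathrm{Syn}(y)$ has only $O_t(1)$ distinct prime divisors exceeding $n^{t}$. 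Separation is the standard fact underlying syndrome-based deletion codes (one may take $\mathrm{Syn}$ to be a Chinese-remainder-style packing of the higher-order parities $\sum_i x_i\binom i j$ for $0\le j\le 2t$), and boundedness is what makes the modular reduction cheap; the latter is also why we do \emph{not} simply take $\mathrm{Syn}(x)=x$, whose $\Omega(n)$-bit differences would force a prime of size $n^{\Omega(t)}$ \emph{and} leave $\mathrm{Syn}(m)\bmod p$ of full width, costing an extra $\Theta(\log n)$ in the hash.

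For correctness, fix $m$ and any pattern of at most $t$ deletions. The decoder will recover $p$, a residue $r=\mathrm{Syn}(m)\bmod p$, and a string $\hat m$ that is $m$ with at most $t$ symbols deleted (see below). The set $S$ of length-$|m|$ strings obtainable from $\hat m$ by at most $t$ insertions has size $|S|\le\binom{|m|}{t}2^t=\Theta_t(n^t)=:N$, contains $m$, and every $m''\in S$ shares a subsequence of length $\ge|m|-t$ with $m$; hence by separation $\mathrm{Syn}(m'')\ne\mathrm{Syn}(m)$ for $m''\in S\setminus\{m\}$, and by boundedness that difference has $O_t(1)$ prime divisors above $n^{t}$. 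Choosing $\mathcal P$ to be the primes in $[M,2M]$ with $M=\Theta_t(N\log n/\varepsilon)=n^{t(1+o(1))}/\varepsilon$ (so $|\mathcal P|=\Theta(M/\log M)=\Omega_t(N/\varepsilon)$ by the prime number theorem), a union bound gives
\[
\Pr_{p\in\mathcal P}\bigl[\,\exists\,m''\in S\setminus\{m\}:\ \mathrm{Syn}(m'')\equiv\mathrm{Syn}(m)\pmod p\,\bigr]\ \le\ N\cdot\frac{O_t(1)}{|\mathcal P|}\ \le\ \varepsilon ,
\]
so the decoder outputs the unique $m''\in S$ with $\mathrm{Syn}(m'')\equiv r\pmod p$, namely $m$. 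Since $p<2M$, each of $\langle p\rangle$ and $r$ takes $\log(2M)=t\log n+\log\tfrac1\varepsilon+O(\log\log n)$ bits, so $|\mathrm{hash}_p(m)|=2t\log n+O_t(\log\log n+\log\tfrac1\varepsilon)$.

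It remains to encode and locate the hash block. Encode the $\Theta_t(\log n)$-bit string $\mathrm{hash}_p(m)$ with an explicit short code $D$ correcting $t$ deletions; known explicit constructions cost redundancy $O_t(\log\log n)$ here, and (with careful, possibly recursive, bookkeeping) the total lower-order term stays within $O_t((\log\log n)^2+\log\tfrac1\varepsilon)$; since $m$ is transmitted verbatim, this is the \emph{entire} redundancy, $2t\log n+O_t((\log\log n)^2+\log\tfrac1\varepsilon)$. To decode a received word $w$: the hash block is a fixed-length prefix of $\enc(m)$ that suffered $t_1\le t$ deletions, so the true corrupted hash block is a prefix of $w$ of one of $O_t(1)$ possible lengths; try each, run $D$'s decoder, and when it returns a syntactically valid hash $(\langle p\rangle,r)$ with $p\in\mathcal P$, take $\hat m$ to be the complementary suffix of $w$ (again trying the $O_t(1)$ possible lengths) and scan $S$ for an $m''$ with matching syndrome. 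With probability $\ge1-\varepsilon$ (the union bound above, extended over the $O_t(1)$ alignments) the correct alignment yields $m$ and no alignment yields any other message. Encoding is a single $\tilde O(n)$-time pass; decoding enumerates $O_t(1)$ alignments times $\Theta_t(n^t)$ candidates, each checked in $\tilde O(n)$ time, for $\tilde O(n^{t+1})$ overall. (By contrast, the existential code of Proposition~\ref{prop:exist-oblv} reaches $t\log n$ redundancy with only $\log\log n$ random bits; making the hash \emph{explicit} is what costs the extra $t\log n$.)

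The main obstacle is producing $\mathrm{Syn}$: a single, explicitly and quickly computable integer function that \emph{separates} strings sharing a length-$(|x|-t)$ subsequence yet is \emph{polynomially bounded}, so that its differences factor into $O_t(1)$ primes of size $\approx n^{t}$. This polynomial bound is exactly what lets us reduce modulo a prime of size only $\approx n^t/\varepsilon$ --- rather than a prime as large as the syndrome itself, or $2^{\Theta(n)}$ for the trivial syndrome $\mathrm{Syn}(x)=x$ --- while still defeating every false candidate with slack for the union bound; this is the ``mod by a prime from a small subset, and store a short encoding of the prime'' idea, and it is what brings the hash, hence the redundancy, down to $\sim 2t\log n$. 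The remaining points are routine but delicate: verifying that a higher-order-parity packing has the claimed separation over strings with a long common subsequence; choosing $D$; handling the data/hash boundary (a $0/1$-run marker is unaffordable since $m$ is arbitrary, so one instead tries the $O_t(1)$ alignments and uses the syndrome check and the validity of $\langle p\rangle$ to discard spurious reconstructions); and accounting for the $O_t((\log\log n)^2+\log\tfrac1\varepsilon)$ lower-order terms.
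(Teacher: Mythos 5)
Your proposal follows the same overall plan as the paper: reduce a deterministic, deletion-separating hash modulo a random prime of size $\approx n^t$ drawn as encoder randomness, transmit a succinct encoding of the prime together with the residue (protected by a short adversarial deletion code), and decode by brute-force search over the $\Theta_t(n^t)$ length-$n$ supersequences of the received word. This matches the paper's Lemma~\ref{lem:explicit} combined with the document-exchange-to-oblivious-code reduction (Lemma~\ref{lem:equiv}).

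However, there is a genuine gap in your instantiation of the syndrome $\mathrm{Syn}$. You assert that a Chinese-remainder-style packing of the higher-order parities $\sum_i x_i\binom{i}{j}$, $0\le j\le 2t$, has the separation property --- that it distinguishes every pair of equal-length strings sharing a common subsequence of length $\ge |x|-t$ --- and you call verifying this ``routine.'' It is not: this is not a known fact for $t\ge 2$, and the entire line of work in \cite{sima2019two,sima2020optimal,sima2020optimal-systematic,guruswami2021explicit} exists precisely because higher-order parities alone do \emph{not} separate all confusable pairs. If such a separating function of magnitude $n^{O_t(1)}$ and $\tilde O(n)$-time computable existed, it would immediately give an explicit adversarial $t$-deletion code with redundancy $O_t(\log n)$, which is a major open problem (Table~\ref{tab:adv}). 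Moreover, your imposed boundedness $\mathrm{Syn}(x)\le n^{O_t(1)}$ is stronger than necessary and is precisely what forces you toward this unjustified instantiation. The paper instead uses the hash $h_{\textup{uniq}}$ of \cite{belazzougui2015efficient}, whose separation property holds by definition of deterministic document exchange, whose size is only $O(t^2+t\log^2 n)$ bits (so $h_{\textup{uniq}}(x)\le 2^{O_t(\log^2 n)}$, much larger than $n^{O_t(1)}$), and which is $\tilde O(n)$-time computable. The weaker boundedness costs nothing essential: the difference $h_{\textup{uniq}}(m)-h_{\textup{uniq}}(m')$ then has $O_t(\log n)$ rather than $O_t(1)$ prime factors above $M/2$, and one simply inflates $M$ by the hash length $f(n,t)$, contributing only $O_t(\log\log n)$ to $\log M$ and hence to the redundancy. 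Replacing your $\mathrm{Syn}$ by this hash, and correspondingly replacing $N\cdot O_t(1)/|\mathcal P|\le\varepsilon$ by $n^t\cdot O_t(f(n,t))/|\mathcal P|\le\varepsilon$, closes the gap while leaving the rest of your argument (alignment handling, protecting the hash with a short deletion code, brute-force decoding in $\tilde O(n^{t+1})$) intact. Note that using the optimal $O(t\log n)$-bit hash of \cite{cheng2022deterministic} instead would satisfy your boundedness requirement but would break the claimed $\tilde O(n)$ encoding and $\tilde O(n^{t+1})$ decoding times, since its encoding runs in $\poly(n)$ time with unspecified exponent.
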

\noindent 
This gives an explicit $t$ oblivious deletion code that achieves a redundancy within a factor-of-2 of optimal.
Our construction also matches (up to lower order terms) the existential redundancy $\sim 2t\log n$ of adversarial deletion codes, which is not known to be achievable by explicit adversarial codes for $t\ge 3$.

\paragraph*{List-decoding implies oblivious.} Our next result connects deletion list decodable codes with oblivious deletions, showing that an explicit deletion list-decodable code gives an explicit oblivious deletion code.
\begin{restatable}{theorem}{thmlist}
Given any code that is $(t,L)$ list-decodable against deletions with encoding and decoding times $T_{\enc}$ and $T_{\dec}$, and redundancy $r_{\textup{list}}=r_{\textup{list}}(n)$, we can construct in deterministic $\poly\log n$ time a $t$ oblivious deletion code with error $\varepsilon$ that has redundancy $r_{\textup{list}}(n) + O_t(\log(L/\varepsilon)+\log\log n)$, is encodable in time $T_{\enc}+\tilde O_t(n)$ and decodable in time $T_{\dec}+\tilde O_t(Ln)$.
\label{thm:list}
\end{restatable}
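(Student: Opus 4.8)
The plan is to take the list-decodable code $C_{\textup{list}}$ as a "backbone" that narrows the set of candidate messages down to at most $L$, and then append a small randomized hash that, with high probability over the encoder's coins, distinguishes the true message from the $L-1$ impostors — even after $t$ deletions corrupt the hash itself. The key point is that, since the deletions are oblivious, the hash value is fixed before the encoder's randomness is revealed, so the adversary cannot tailor the deletion pattern to cause a collision.

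Concretely, I would encode $m$ as the concatenation $\enc(m) = C_{\textup{list}}(m) \circ \mathsf{buffer} \circ h(m)$, where $h(m)$ is a short hash. For the hash I would use the paper's central idea: pick a random prime $p$ from a set of $\mathrm{poly}\log n$ primes of magnitude $\mathrm{poly}\log n$, and let $h(m)$ encode $\big(p,\ \textup{num}(m) \bmod p\big)$, where $\textup{num}(m)$ is the integer whose binary representation is $m$ — each of the two parts protected against $t$ deletions by a small off-the-shelf $t$-deletion code (redundancy $O_t(\log\log n)$, since the hash has length $O_t(\log\log n)$), and separated by buffers. To decode: first run the list-decoder on the prefix portion of the received word to get a list $m_1,\dots,m_L$; then for each way of splitting the received word into (corrupted backbone) $\circ$ (corrupted hash) — there are only $O_t(n)$ such splits, or we can brute-force over the $\binom{?}{\le t}$ ways the $t$ deletions distributed themselves, contributing the $\tilde O_t(Ln)$ decoding time — decode the hash portion to recover a candidate $(p, v)$, and output the unique $m_i$ on the list with $\textup{num}(m_i) \equiv v \pmod p$, if one exists.

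For correctness: condition on a fixed message $m$ and deletion pattern $\tau$. Since the backbone is $(t,L)$-list-decodable, the true $m$ appears on the list produced from the (correctly identified) backbone portion. The subtlety is locating where the backbone ends and the hash begins in the received word; the buffer of $0$'s and a length-type marker, combined with trying all $O_t(n)$ candidate split points, ensures we consider a split under which the backbone suffered $\le t$ deletions and the hash suffered $\le t$ deletions, so both sub-decoders succeed and we recover the correct $(p, \textup{num}(m)\bmod p)$. It remains to bound the failure probability: a wrong output requires some $m_j\neq m$ on the list with $\textup{num}(m_j)\equiv \textup{num}(m)\pmod p$, i.e. $p \mid \textup{num}(m)-\textup{num}(m_j)$. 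Since $|\textup{num}(m)-\textup{num}(m_j)| < 2^n$, it has fewer than $n$ prime factors of size $\ge 2$, and more relevantly $O(n/\log\log n)$ prime factors of size $\ge \mathrm{poly}\log n$; choosing the prime set to have size $\gg Ln/(\varepsilon\log\log n)$ (still $\mathrm{poly}\log n$ primes, each of size $\mathrm{poly}\log n$ by the prime number theorem) and union-bounding over the $\le L$ list elements gives total collision probability $\le \varepsilon$. Crucially this probability is over $p$ alone, and $p$ is independent of $\tau$ because $\tau$ is oblivious — this is exactly where the oblivious assumption is used, and a worst-case adversary seeing $p$ could defeat this.

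The main obstacle I anticipate is the bookkeeping around resynchronization: making sure that after $t$ deletions scramble the boundary between the backbone and the hash, the decoder can still isolate a substring that is a $\le t$-deletion corruption of $h(m)$ and feed a correctly-delimited backbone corruption to the list-decoder. I would handle this by (i) using a buffer long enough ($\Theta_t(1)$ blocks of a fixed pattern) that at least one candidate cut survives, and (ii) brute-forcing over all $O_t(n^{?})$ — really just $O_t(n)$ after accounting for the buffer — placements of the $t$ deletions among "backbone part" and "hash part," which is where the $\tilde O_t(Ln)$ decoding-time overhead comes from. A secondary, purely bookkeeping point is verifying the claimed redundancy: the hash contributes $\log(\#\text{primes}) + \log(\text{prime size}) + O_t(\log\log n) = O_t(\log(L/\varepsilon) + \log\log n)$ bits, matching the statement, and the $\poly\log n$ preprocessing is just the time to enumerate the small prime set.
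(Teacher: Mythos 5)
There is a genuine gap in your plan for the hash, and a simpler route for resynchronization that you worried about but that the paper handles more cleanly.

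\paragraph*{The gap: hashing $\textup{num}(m)$ directly is too large.}
You propose letting the hash carry $(p,\ \textup{num}(m)\bmod p)$ where $\textup{num}(m)$ is the integer whose binary form is $m$ (or $\enc_{\textup{list}}(m)$). Then $|\textup{num}(m)-\textup{num}(m_j)|<2^{O(n)}$, which can have up to $\Theta(n/\log\log n)$ prime factors of size $\poly\log n$. To drive the union-bounded collision probability over the $L$ list elements below $\varepsilon$, your prime pool must therefore have at least $\Omega(Ln/(\varepsilon\log\log n))$ primes. This quantity is polynomial in $n$, not $\poly\log n$ as you asserted, so by the prime number theorem the primes themselves must have magnitude $\tilde\Omega(Ln/\varepsilon)$. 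Writing either $p$ or $\textup{num}(m)\bmod p$ then costs $\Theta(\log n)$ bits, so your hash has redundancy $\Theta(\log n)$ rather than the claimed $O_t(\log(L/\varepsilon)+\log\log n)$. The crucial missing idea is to first shrink the argument to $\poly\log n$ bits \emph{before} reducing mod $p$. The paper does this by taking $h_{\textup{uniq}}:\zo^{n+r_{\textup{list}}}\to[2^{O(t\log^2 n)}]$, a \emph{deterministic document exchange} hash for $t$ deletions. Any two distinct list elements $m\neq m'$ are distinct length-$(|z_0|+t)$ supersequences of the same received prefix $z_0$, so by the defining property of document exchange, $h_{\textup{uniq}}(\enc_{\textup{list}}(m))\neq h_{\textup{uniq}}(\enc_{\textup{list}}(m'))$, and their difference is at most $2^{O(t\log^2n)}$, hence has only $O(t\log n)$ prime factors of size $\ge M/2$. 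Then a pool of $\Theta(tL\varepsilon^{-1}\log^2 n) = \poly\log n$ primes, each of size $\poly\log n$, suffices, and the hash $(p,\ h_{\textup{uniq}}(\enc_{\textup{list}}(m))\bmod p)$ takes only $O(\log(L/\varepsilon)+\log\log n)$ bits.

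\paragraph*{A simpler resynchronization.}
You worry at length about locating the boundary between the list-decodable prefix and the hash, proposing buffers plus a brute-force over $O_t(n)$ split points. None of that is necessary. Let $a=n+r_{\textup{list}}$ and $b=r_{\textup{rep}}$ be the lengths of the two blocks, so $|z|=a+b-t$. Take $z_0$ to be the first $a-t$ bits of $z$ and $z_1$ to be the last $b-t$ bits. If $t_1$ deletions landed in the first block and $t_2=t-t_1$ in the second, the first $a-t_1$ bits of $z$ all come from the first block and the last $b-t_2$ bits all come from the second; since $a-t\le a-t_1$ and $b-t\le b-t_2$, each of $z_0, z_1$ is automatically a subsequence of its block obtained by at most $t$ deletions. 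No buffer and no search over splits is needed, so the list decoder and the repetition decoder can be invoked on fixed, deterministically chosen windows. (The $\tilde O_t(Ln)$ decoding time then comes simply from evaluating $h_{\textup{uniq}}$ on each of the $L$ list candidates, each an $\tilde O(n)$-time computation, not from enumerating splits.) Otherwise your outline — list decode, recover $(p,g)$ from the protected suffix, filter the list by $h(\cdot)\bmod p=g$, and argue independence of $p$ from the oblivious $\tau$ — matches the paper's approach.
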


We can combine this with a result of Guruswami and \Hastad \cite{guruswami2021explicit} that gives codes with redundancy $3\log n + O(\log\log n)$ that are list-decodable against 2 deletions with list size 2.
This yields oblivious deletion codes against 2 deletions with similar redundancy.
\begin{restatable}{corollary}{corlist}
    There exists an explicit $2$ oblivious deletion code with error $\varepsilon$, redundancy $3\log n + O(\log\log n)$, and encoding and decoding times $\tilde O(n)$.
    \label{cor:list}
\end{restatable}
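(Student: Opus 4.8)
The plan is to derive Corollary~\ref{cor:list} by plugging a known explicit deletion list-decodable code into the reduction of Theorem~\ref{thm:list}. Intuitively, Theorem~\ref{thm:list} says that once a decoder can narrow the candidate messages down to a short list $m_1,\dots,m_L$, a short extra hash (in the spirit of reducing the hash size by modding by a small prime, as in the rest of the paper) suffices to let the decoder single out the correct $m_i$ with high probability: because the deletions are oblivious to the encoder's randomness, the hash is effectively random from the adversary's view, so a union bound over the $L$ candidates shows the wrong ones are ruled out except with probability $\varepsilon$. Hence one pays only the list-decoding redundancy plus lower-order terms.

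Concretely, the first step is to recall the Guruswami--\Hastad construction~\cite{guruswami2021explicit}: an explicit binary code of block length $n$ and redundancy $r_{\textup{list}}(n) = 3\log n + O(\log\log n)$ that is $(2,2)$ list-decodable against deletions, meaning that from any word obtained by at most $2$ deletions one can recover a list of at most $2$ codewords containing the transmitted one. I would verify that both its encoder and its list-decoder run in near-linear time, $T_{\enc}, T_{\dec} = \tilde O(n)$. This is the only ingredient imported from outside, and checking the running times (rather than merely polynomial time) is essentially the only obstacle here: the construction is hashing-based, so both procedures should be near-linear, but this warrants a careful inspection of their algorithm.

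The second step is to invoke Theorem~\ref{thm:list} with $t = 2$, $L = 2$, and this code. After $\poly\log n$ deterministic preprocessing it produces a $2$ oblivious deletion code with error $\varepsilon$ whose redundancy is
\[
r_{\textup{list}}(n) + O_t\!\left(\log(L/\varepsilon) + \log\log n\right) = 3\log n + O(\log\log n) + O(\log(1/\varepsilon)),
\]
which is $3\log n + O(\log\log n)$ for constant $\varepsilon$ (and for any $\varepsilon \ge 1/\poly\log n$). Its encoding time is $T_{\enc} + \tilde O_t(n) = \tilde O(n)$ and its decoding time is $T_{\dec} + \tilde O_t(Ln) = \tilde O(n)$, matching the claimed bounds. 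Modulo confirming the $\tilde O(n)$ running times of the Guruswami--\Hastad code, this completes the proof.
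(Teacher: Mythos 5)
Your proposal is correct and takes essentially the same approach as the paper: instantiate Theorem~\ref{thm:list} with $t=2$, $L=2$, and the Guruswami--H\r{a}stad code, whose redundancy is $3\log n + O(\log\log n)$ and whose encoding and decoding are linear time. The one verification you flag (the $\tilde O(n)$ running times of \cite{guruswami2021explicit}) is exactly what the paper records in the theorem statement it quotes just before the corollary, so nothing further is needed; your observation that the $O(\log(1/\varepsilon))$ term is absorbed into $O(\log\log n)$ for $\varepsilon$ constant (or $\varepsilon \ge 1/\polylog n$) is also accurate.
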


Observe that this $2$ oblivious deletion code has redundancy that surpasses the existential bound in the adversarial case. Namely, while the best known (non-explicit) $2$-deletion correcting code has redundancy $\sim 4\log n$, we achieve explicit and efficient $2$ oblivious deletion codes with redundancy $\sim 3\log n$.

Theorem~\ref{thm:list} further motivates studying list-decoding against deletions.
We ask the obvious follow-up question, which would imply explicit oblivious deletion codes beyond the adversarial existential bound for all constant $t$:
\begin{question*}
Are there explicit codes list-decodable against $t$ deletions with constant list size and redundancy $\sim c\log n$ for $c<2t$, or, ideally, $c=t$?\footnote{We note that in \cite[Theorem 2]{hanna2019list}, the authors construct list-decodable codes against $t$ deletions with list size $O_t(1)$ and redundancy $\sim 2t\log n$. Combining with Theorem~\ref{thm:list}, we get another construction of $t$ oblivious deletion with redundancy $\sim 2t \log n$, as in Theorem~\ref{thm:explicit}.}
\end{question*}

\paragraph*{Randomized Explicit constructions approaching the existential bound.}
Our next results benefit from the power of randomness when constructing the code. Specifically, we show that there is a randomized process that, with probability $1 - 2^{-\Omega(n)}$, produces codes that are $t$ oblivious deletion codes with redundancy $\sim (t+1)\log n$, almost matching the optimal redundancy $\sim t\log n$, and substantially beating the adversarial existential redundancy $\sim 2t\log n$.
\begin{restatable}{theorem}{thmrandobliv}
For all positive integers $t$ and $\varepsilon\in(0,1)$, for $n$ sufficiently large, there exists a randomized construction that, in time $\tilde O(n)$, with probability $1-2^{-8n}$, produces a $t$ oblivious deletion code with decoding error $\varepsilon$, redundancy $(t+1)\log n + O(\log\log n) + O(\log(1/\varepsilon))$, encoding time $\tilde O(n)$ and decoding time $\tilde O(n^{t+1})$.
\label{thm:rand-obliv}
\end{restatable}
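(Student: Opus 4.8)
The plan is to take the deterministic hash underlying our explicit construction (Theorem~\ref{thm:explicit}) and \emph{compress it modulo a prime}, where the prime is drawn from a small pool that is chosen \emph{randomly at construction time} and baked into the code, and the index of the chosen prime is transmitted as part of a short, deletion-protected payload. Precisely, the machinery behind Theorem~\ref{thm:explicit} (see also \cite{brakensiek2017efficient}) supplies (i) a deterministic hash $\mathrm H$ mapping binary strings to non-negative integers, computable in $\tilde O(n)$ time, with $B=O_t(\log n)$ output bits, that is \emph{perfect for $t$-confusable strings}---distinct length-$k$ strings with a common subsequence of length $\ge k-t$ get distinct hash values---and (ii) a deterministic ``packaging'' gadget $\mathcal E$ that wraps a length-$k$ message $m$ together with a short string $s$ into a length-$n$ word $\mathcal E(m,s)$, so that from $\tau(\mathcal E(m,s))$, for any pattern $\tau$ of at most $t$ deletions, one recovers $s$ exactly, learns how many deletions fell inside the $m$-part, and outputs the induced corrupted copy $\tilde m$ of $m$ (of length $\ge k-t$); the overhead of $\mathcal E$ beyond $|s|$ is $O(\log\log n)$ bits. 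Set $Q=C\,2^t n^t(\log n)/\varepsilon$ and $r=C'n/\varepsilon$ for suitable constants $C=C(t)$ and $C'$. The randomized construction samples $r$ independent uniformly random primes $p_1,\dots,p_r$ in $[Q,2Q]$ (in $\tilde O(n)$ time) and outputs the oblivious code whose encoder, on $m$, picks $j\in[r]$ uniformly, sets $p=p_j$, and outputs $\enc(m)=\mathcal E\bigl(m,\,(j,\ \mathrm H(m)\bmod p)\bigr)$, and whose decoder, on input $y$, uses the decoder of $\mathcal E$ to recover $j$ (hence $p$), $h:=\mathrm H(m)\bmod p$, and $\tilde m$, then enumerates all $O_t(n^t)$ length-$k$ supersequences $m'$ of $\tilde m$, keeps those with $\mathrm H(m')\equiv h\pmod p$, and outputs the unique survivor (failing unless there is exactly one). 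The payload $s=(j,\mathrm H(m)\bmod p)$ has $\lceil\log r\rceil+\lceil\log 2Q\rceil=(t+1)\log n+O(\log\log n)+O(\log(1/\varepsilon))$ bits, so the redundancy is $(t+1)\log n+O(\log\log n)+O(\log(1/\varepsilon))$; encoding is $\tilde O(n)$, and decoding is $\tilde O(n^{t+1})$ (enumerate $O_t(n^t)$ supersequences, each hash-evaluation modulo $p$ costing $\tilde O(n)$).

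For correctness, fix any message $m$ and any deletion pattern $\tau$ with at most $t$ deletions. For \emph{any} choice of the primes, the decoder correctly recovers $j,p,h,\tilde m$ (this step is deterministic and guaranteed by $\mathcal E$), and since $\tilde m$ is $m$ with at most $t$ deletions, $m$ is among the enumerated supersequences and passes the hash test. Hence the decoder fails only if some $m'\ne m$ also passes, namely some length-$k$ supersequence $m'$ of $\tilde m$ has $\mathrm H(m')\equiv \mathrm H(m)\pmod p$. Such an $m'$ shares the subsequence $\tilde m$ (of length $\ge k-t$) with $m$, so the pair is $t$-confusable and $\mathrm H(m')\ne\mathrm H(m)$; thus $0<|\mathrm H(m)-\mathrm H(m')|<2^{B}=n^{O_t(1)}$, and a positive integer this small has only $O_t(1)$ prime divisors larger than $Q$ (and $Q\ge n^t$). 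So each of the $\le(2n)^t$ candidates $m'$ rules out only $O_t(1)$ primes of $[Q,2Q]$, for $O_t(n^t)$ bad primes in total; by the choice of $Q$ and the prime-counting bound $\pi(2Q)-\pi(Q)\ge Q/(2\log Q)$ (valid for $n$ large), the bad primes are at most an $\varepsilon/10$ fraction of the primes in $[Q,2Q]$. Now the construction randomness enters: each $p_j$ is bad for $(m,\tau)$ independently with probability at most $\varepsilon/10$, so $\Pr\bigl[\#\{j:p_j\text{ is bad for }(m,\tau)\}>\varepsilon r\bigr]\le\binom{r}{\varepsilon r}(\varepsilon/10)^{\varepsilon r}\le(e/10)^{\varepsilon r}=2^{-\Omega(\varepsilon r)}=2^{-\Omega(n)}$. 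When this good event holds, decoding of $(m,\tau)$ fails with probability at most $\varepsilon$ over the encoder's choice of $j$. Finally, union-bound over the $\le 2^{k}(n+1)^t\le 2^{1.1n}$ pairs $(m,\tau)$: taking $C'$ a large enough constant, with probability $\ge 1-2^{-8n}$ the good event holds for all pairs simultaneously, and then the produced code is a $t$ oblivious deletion code with error $\varepsilon$.

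The main obstacle is reconciling two counts inside a single interval of primes: the set of length-$k$ supersequences of $\tilde m$ has size $\Theta_t(n^t)$ and so contributes $\Theta_t(n^t)$ bad primes, whereas there are only about $Q/\log Q$ primes available; this is precisely what forces $Q$ to sit a little above $n^t$, and hence why the redundancy comes out as $(t+1)\log n$ rather than the information-theoretically optimal $t\log n$. A related subtlety, absent in the adversarial analysis, is that the decoder cannot simply try every prime in the pool---that would reintroduce $\Omega_t(n^{t})$ spurious survivors from wrong moduli---so the index $j$ genuinely must be transmitted and deletion-protected together with $\mathrm H(m)\bmod p$, and one should verify that the packaging gadget $\mathcal E$ and the supersequence-enumeration decoder behind Theorem~\ref{thm:explicit} carry over to this slightly larger payload without disturbing the stated lower-order terms.
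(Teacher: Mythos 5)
Your proposal is correct and follows essentially the same route as the paper: reduce to randomized document exchange (the paper's Lemma~\ref{lem:equiv} plays the role of your packaging gadget $\mathcal E$), sample a small pool $P$ of $\Theta(n/\varepsilon)$ primes from an interval near $n^t$ at construction time, encode by picking a random $p\in P$ and storing $(\text{index of }p,\ h_{\textup{uniq}}(m)\bmod p)$, decode by enumerating supersequences, and prove correctness by bounding bad primes per $(m,\tau)$, applying a concentration bound, then union-bounding over all $2^n\binom{n}{t}$ pairs. The only cosmetic differences are the choice of base deterministic hash ($B=O_t(\log n)$ via \cite{brakensiek2017efficient} rather than the paper's $O_t(\log^2 n)$ from \cite{belazzougui2015efficient}, which slightly alters the per-candidate bad-prime count and hence the constant in $Q$) and the use of a direct $\binom{r}{\varepsilon r}(\varepsilon/10)^{\varepsilon r}$ bound in place of the paper's multiplicative Chernoff bound; both yield the same $2^{-\Omega(n)}$ failure probability and the same $(t+1)\log n + O(\log\log n)+O(\log(1/\varepsilon))$ redundancy.
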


Next, we show that the same randomized construction also gives a $t$-adversarial deletion code that almost matches the adversarial existential redundancy $\sim 2t\log n$.
\begin{restatable}{theorem}{thmrandadv}
For all positive integers $t$, and $n$ sufficiently large, there exists a randomized construction that, in time $\tilde O(n)$, with probability $1-2^{-8n}$, produces a $t$-adversarial deletion code with redundancy $\sim(2t+1)\log n$, encoding time $\tilde O(n)$ and decoding time $\tilde O(n^{t+1})$.
\label{thm:rand-adv}
\end{restatable}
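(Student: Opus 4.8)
The plan is to reuse the construction behind Theorem~\ref{thm:rand-obliv}, now with a \emph{deterministic} encoder, tuning its prime pool to the adversarial deletion-confusability set of a codeword, which has size $\sim n^{2t}$ rather than the $\sim n^{t}$ that the obliviousness of the channel buys in Theorem~\ref{thm:rand-obliv}. Recall the ingredients. First, an explicit, near-linear-time \emph{labeling} $f_0$ of the message into $[n^{O_t(1)}]$ with the property that distinct messages $m\ne m'$ sharing a label satisfy $\mathrm{LCS}(m,m')<|m|-t$, i.e.\ $f_0$ is injective on every ball $\mathcal B(m):=\{m':\mathrm{LCS}(m,m')\ge|m|-t\}$; such an $f_0$, of range $n^{O_t(1)}$, is the starting point of the syndrome-compression codes of \cite{sima2020optimal,sima2020optimal-systematic} (and can be obtained from \cite{brakensiek2017efficient}), and we may assume it has range $n^{ct}$ for some $c>2$, as otherwise $f_0$ itself is already a $t$-deletion code of redundancy $\le 2t\log n$. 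Second, a pool $\mathcal P$ of primes; the codeword of $m$ is $\enc(m)=[\,m\,]\,\|\,[\,\mathrm{prot}(f_0(m)\bmod p,\ \mathrm{idx}(p))\,]$, where $p\in\mathcal P$ is selected below, $\mathrm{idx}(p)$ is its index in $\mathcal P$, and $\mathrm{prot}$ is any explicit (even wasteful) $t$-deletion code applied to the $O_t(\log n)$-bit tail, contributing only $O_t((\log\log n)^{O(1)})$ extra bits. The twist relative to \cite{sima2020optimal-systematic} --- which names the prime with $\sim 2t\log n$ bits and so pays $\sim 4t\log n$ --- is that $\mathcal P$ is a \emph{small random} pool, so $\mathrm{idx}(p)$ costs only $\sim\log n$ bits.

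Here is the pool and why it works. Call a prime $q$ \emph{bad for $m$} if $f_0(m)\equiv f_0(m')\pmod q$ for some $m'\in\mathcal B(m)\setminus\{m\}$. Since $|\mathcal B(m)|\le(2n)^{2t}$ and each nonzero difference $f_0(m)-f_0(m')$ is an integer of magnitude $n^{O_t(1)}$ (so has $O_t(\log n)$ prime factors), at most $n^{2t+o(1)}$ primes are bad for $m$. Let $\mathcal P_0$ consist of the first $n^{2t+o(1)}$ primes, with the constant chosen so that at least half of $\mathcal P_0$ is good for every $m$; each prime in $\mathcal P_0$ is $n^{2t+o(1)}$, hence $f_0(\cdot)\bmod p$ fits in $2t\log n+O_t(\log\log n)$ bits. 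Draw $\mathcal P\subseteq\mathcal P_0$ uniformly at random of size $s=\Theta(n)$. For each fixed $m$ the probability that \emph{every} prime of $\mathcal P$ is bad for $m$ is at most $2^{-s}$, so a union bound over the $\le 2^n$ messages shows that, with probability at least $1-2^{-8n}$, $\mathcal P$ contains a good prime for every message --- precisely where $s=\Theta(n)$ and the constant good-fraction are used. Fix such a $\mathcal P$; the encoder, on input $m$, computes $f_0(m)$ and scans $\mathcal P$ for the first good prime $p$ (using the efficient prime-selection subroutine of syndrome compression \cite{sima2020optimal-systematic}), which is $\tilde O(n)$ time since $|\mathcal P|=\tilde O(n)$.

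It remains to verify that $C$ has pairwise non-confusable codewords, from which the standard decoder --- enumerate the $\le(2n)^t$ length-$n$ supersequences of the received word, re-encode the payload of each, and output the unique candidate reproducing the received word --- succeeds in time $\tilde O(n^{t+1})$. If $\enc(m),\enc(m')$ had $\mathrm{LCS}\ge n-t$, then since the redundant suffix has length $r=O_t(\log n)\ll n$, any $\le t$ deletions realizing this common subsequence leave both payload blocks and both protected tails with $\le t$ deletions each; hence $\mathrm{LCS}(m,m')\ge|m|-t$ and, decoding the tails through $\mathrm{prot}$, $(f_0(m)\bmod p,\mathrm{idx}(p))=(f_0(m')\bmod p',\mathrm{idx}(p'))$, so $p=p'$ and $f_0(m)\equiv f_0(m')\pmod p$ with $m'\in\mathcal B(m)\setminus\{m\}$, contradicting goodness of $p$ for $m$. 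The redundancy is $2t\log n+O_t(\log\log n)$ (hash) $+\ \log s=\log n+O(1)$ (prime index) $+\ O_t((\log\log n)^{O(1)})$ ($\mathrm{prot}$), i.e.\ $\sim(2t+1)\log n$. The step I expect to carry the weight is the tradeoff in the second paragraph: the pool must be \emph{simultaneously} small enough that $\mathrm{idx}(p)$ costs only $\sim\log n$ bits and --- with overwhelming probability over its random choice --- rich enough that some $p\in\mathcal P$ separates $f_0(m)$ from $f_0(m')$ modulo $p$ for all $\sim n^{2t}$ neighbors $m'$ of \emph{every} one of the $2^n$ messages. Balancing these pins $|\mathcal P_0|$ at $n^{2t+o(1)}$, hence the $\sim 2t\log n$ hash (versus $\sim t\log n$ in the oblivious case, where the channel's obliviousness cuts the neighbor count to $\sim n^t$), and the union bound then closes because $2^n\cdot 2^{-\Theta(n)}=2^{-\Theta(n)}$. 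Everything else --- the explicit labeling $f_0$, its efficient goodness test, and the short deletion-analysis of the concatenation --- is inherited from prior work and from the proof of Theorem~\ref{thm:rand-obliv}; if anything the adversarial analysis is lighter here, involving neither encoder randomness nor an error parameter $\varepsilon$.
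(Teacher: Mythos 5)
Your proposal is the same construction and argument as the paper's proof: reduce the output of a deterministic document-exchange hash modulo a random prime from a small pool $P$ of $\Theta(n)$ primes drawn from $\sim n^{2t}$ candidates, store both $f_0(m)\bmod p$ ($\sim 2t\log n$ bits) and the index of $p$ in $P$ ($\sim\log n$ bits), and argue that with probability $1-2^{-\Omega(n)}$ over the choice of $P$, for \emph{every} message there is a pool prime avoiding all $\sim n^{2t}$ confusable collisions, by a union bound over the $2^n$ messages using the fact that at least half the candidate primes work for each message. The differences from the paper are cosmetic: the paper routes the construction through the document-exchange $\leftrightarrow$ systematic-code equivalence (Lemma~\ref{lem:equiv}) rather than explicitly protecting the tail, it instantiates the deterministic hash with \cite{belazzougui2015efficient} (Theorem~\ref{thm:b15}) rather than a syndrome-compression labeling, and it samples primes from an interval $[M/2,M]$ rather than the initial segment. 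One caveat: you assert $\tilde O(n)$ encoding time by invoking an ``efficient prime-selection subroutine'' of \cite{sima2020optimal-systematic} and by pointing to $|P|=\tilde O(n)$; but the costly step is not iterating over $P$, it is certifying that a given $p$ is good for $m$, which naively requires examining all $\sim n^{2t}$ confusable $m'$. The paper's own proof uses that brute force and obtains encoding time $\tilde O(n^{2t+1})$, so your $\tilde O(n)$ claim needs an explicit argument that the cited subroutine avoids that enumeration.
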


\begin{remark}
    Our oblivious code constructions in Theorem~\ref{thm:explicit} and Theorem~\ref{thm:rand-obliv} enjoy two additional properties.
    First, they are \emph{systematic}, meaning that for every message $m$, the encoded codeword $c$ begins with $m$ in its first $|m|$ bits (see Definition~\ref{def:sys-code} for a formal definition). 
    Second, the decoding algorithm of these codes never outputs an incorrect message. Formally, for every message $m$ and every deletion pattern $\tau$, we have that 
    \begin{align*}
    \Pr[\dec(\tau(\enc(m)))\in\{m,\perp\}] = 1 \quad\textup{and } \quad   \Pr[\dec(\tau(\enc(m))) = \perp] \leq \varepsilon \;. 
    \end{align*}
    We observe that the codes in Theorem~\ref{thm:rand-adv} is also systematic, but the codes given in Proposition~\ref{prop:exist-oblv} and Corollary~\ref{cor:list} are not systematic.

\end{remark}

\subsection{Related work}
\label{ssec:related}
We now describe additional related works.

    \paragraph*{Document exchange.} Deletion codes are, up to lower order terms in the redundancy, equivalent to the (single-round) document exchange problem \cite{orlitsky1991interactive}.
    In the document exchange problem Alice and Bob have documents represented by strings $x$ and $x'$, respectively.
    Neither knows the others' document, but they are promised the documents are ``similar.''
    Alice's goal is to send a \emph{hash} (or \emph{summary}) to Bob so that Bob can learn $x$.
    The main question is: how long does the hash need to be (as a function of the similarity) for Bob to learn $x$ (efficiently).

    Most relevant to deletion codes is the setting where Bob's document $x'$ is a subsequence of Alice's document $x$ obtained by $t$ deletions, but the problem is studied more generally when $x$ and $x'$ are at bounded edit distance or hamming distance. 
    The edit distance case is equivalent to our setting up to constant multiplicative factors in the hash size.

    If Alice's hash is a deterministic function of her string $x$ (this is \emph{deterministic document exchange}), the optimal hash size is equal to, up to lower order terms in the redundancy, the optimal redundancy for correcting $t$ adversarial deletions (see \cite{haeupler2019optimal, cheng2022deterministic}).
    If Alice's hash is a randomized function (with private randomness) and Bob recovers with high probability (this is \emph{randomized document exchange}), the optimal hash size is equal to, up to lower order terms in the redundancy, the optimal redundancy for correcting $t$ oblivious deletions. The randomized--oblivious equivalence holds for the same reason as the deterministic--adversarial equivalence, but we justify it for completeness in Section~\ref{ssec:equiv}: in short, we can encode the hash itself in a $t$-adversarial-deletion code at negligible cost (when $t\ll n$).
    
    The document exchange problem has been the subject of extensive research \cite{barbara1988exploiting,orlitsky1991interactive,barbara1991class,abdel1994optimal,cormode2000communication,suel2004improved,irmak2005improved,jowhari2012efficient,belazzougui2015efficient,belazzougui2016edit,chakraborty2016streaming,haeupler2019optimal,cheng2022deterministic}. 
    Orlitsky showed that the hash needs to have length $\Omega(t\log (n/t))$ bits. 
    He also showed that one can compute in \emph{exponential time} a hash of $\Theta(t\log (n/t))$ bits. 
    For deterministic document exchange, Belazzougui \cite{belazzougui2015efficient} achieved a hash of size $\Theta(t^2 + t \log^2 n)$. Later, two independent works \cite{haeupler2019optimal,cheng2022deterministic} achieved a hash of size $\Theta(t\log^2(n/t))$ which is near optimal.
    The problem of designing an efficient deterministic document exchange protocol with optimal hash size $\Theta(t\log (n/t))$ remains open for $t=\Omega(n)$.
    For randomized document exchange, \cite{chakraborty2016streaming,belazzougui2016edit} achieved hash size $\Theta(t^2 \log n)$ and $\Theta(t\log^2t + t\log n)$, respectively. Later, Haeupler \cite{haeupler2019optimal} achieved optimal (randomized) hash size $\Theta(t\log(n/t))$.

    \paragraph*{Codes against oblivious errors and stochastic codes}

    Several works have studied codes that correct from oblivious substitution and erasure errors. 
    Langberg \cite{langberg2008oblivious} was the first to  explicitly define \emph{oblivious} channels, though the oblivious channel is a specific instance of a broader class of channels known as arbitrarily varying channels (AVCs), which have been extensively studied in information theory (see the survey by Lapidoth and Narayan \cite{lapidoth1998reliable}). The capacities of the oblivious substitution channel and erasure channels are $1 - H(p)$ and $1-p$, respectively \cite{csiszar1988capacity}.
    In \cite{guruswami2016optimal}, Guruswami and Smith provided the first explicit and efficient codes for the oblivious substitution channel that achieve capacity.

    We re-emphasize that codes for the oblivious errors must be stochastic to achieve better rates than for adversarial errors. 
    Stochastic codes have also found applications in the design of codes against channels that are computationally bounded. 
    Lipton \cite{lipton1994new}, Micali et al. \cite{micali2005optimal}, Chen et al. \cite{chen2015characterization}, Guruswami and Smith \cite{guruswami2016optimal}, and Shaltiel and Silbak \cite{shaltiel2021explicit-a,shaltiel2021explicit-b,shaltiel2022error,shaltiel2024explicit} studied various computationally bounded channels: some defined by polynomial‐size circuits, others by online or space‐bounded constraints. 

    \paragraph*{The binary deletion channel.}
    The binary deletion channel with parameter $p$ (BDC$_p$) is the most natural channel that models random deletions. More specifically, in this channel, each bit is deleted independently at random with probability $p$. 
    Tremendous efforts were put on determining the capacity of the BDC$_p$ and the reader is referred to the excellent surveys \cite{Mitzenmacher08survey,cheraghchi2020overview} and references therein. 

    Since our interest in this paper is on correcting a constant number of oblivious deletions, a closely related question is correcting $t$ random deletions.
    Thus, of particular interest to our question is the capacity of the binary deletion channel in the regime where the deletion probability goes to $0$. This question was addressed in two independent papers \cite{kalai2010tight,KM13} where it was shown that when $p\to 0$, the capacity of the BDC$_p$ is $1 - (1-o(1))H(p)$ (where the $o(1)$ goes to $0$ as $p\to0$). 

    \paragraph*{List decoding vs other error models.}
    Theorem~\ref{thm:list} shows a connection between list-decoding deletions and oblivious deletions and further motivates the study of list-decoding against deletions. 
    Like oblivious errors, list-decoding can be seen as a bridge between the adversarial and random error models, and exploring the connections between these various error models is a fundamental question. 
    Along these lines, one recent work \cite{pernice2024listdecoding}  answered a long-standing question by showing that list-decoding from adversarial substitutions is closely linked to random substitutions (known as the symmetric channel): any capacity-achieving list-decodable code with sufficiently large Hamming distance also achieves capacity on the symmetric channel.

\subsection{Organization}
In Section~\ref{sec:overview}, we give an overview of the main technical ideas in our constructions.
In Section~\ref{sec:prelims}, we give some preliminaries for our constructions.
In Section~\ref{sec:exist}, we show Proposition~\ref{prop:obl-lower-bound} and Proposition~\ref{prop:exist-oblv}, that the optimal redundancy for $t$ oblivious deletions is $\sim t\log n$.
In Section~\ref{sec:explicit}, we show Theorem~\ref{thm:explicit}, giving explicit $t$ oblivious deletion codes with redundancy $\sim 2t\log n$.
In Section~\ref{sec:list}, we show Theorem~\ref{thm:list}, that list-decodable deletion codes yield oblivious deletion codes.
In Section~\ref{sec:rand}, we show Theorem~\ref{thm:rand-obliv} and Theorem~\ref{thm:rand-adv}, our randomized constructions of oblivious deletion codes and adversarial deletion codes.

\section{Technical Overview}
\label{sec:overview}

Our main idea for obtaining $t$ oblivious deletion codes, which is based on \cite{sima2020optimal-systematic}, is to take $t$-adversarial-deletion hashes, modulo a random prime. 
We first illustrate this with Theorem~\ref{thm:explicit}, giving explicit $t$ oblivious deletion codes with redundancy $\sim 2t\log n$, then sketch our other results, which are variations on this idea.

By the equivalence between randomized document exchange and systematic oblivious deletion codes (Lemma~\ref{lem:equiv}), it suffices to construct randomized document exchange protocols correcting $t$ deletions with similar redundancy. Equivalently, it suffices to construct a systematic code --- a code whose encoding is $m\mapsto m\circ h(m)$ for some \emph{hash} $h(m)$ --- for the easier setting that the hash $h(m)$ is transmitted noiselessly and only the message is corrupted. 
The equivalence holds because we can encode the hash in a $t$ deletion code at negligible cost (see proof of Lemma~\ref{lem:equiv}).

Our randomized document exchange hash $h(\cdot)$ is computed as $h(m) = (h_{\textup{uniq}}(m)\mod p, p)$, where $h_{\textup{uniq}}:\{0,1\}^n\to\{0,1\}^{r_{\textup{uniq}}}$ is a deterministic document exchange hash with reasonable (but not necessarily optimal) redundancy $r=\poly\log n$, and where $p$ is a prime sampled uniformly at random from a set of $\tilde O(n^t)$ primes.
By the prime number theorem, we can sample our prime from primes $p\le \tilde O(n^t)$, so each of $(h_{\textup{uniq}}(m)\mod p)$ and $p$ can be stored in $\sim t\log n$ bits, for a total redundancy of $\sim 2t\log n$.
The decoder is the trivial decoder that, given a received word $z$ and a hash $(g,p)$, computes the hash $h_{\textup{uniq}}(m')\mod p$ of each supersequence $m'$ of the received word $z$, and outputs the (hopefully) unique supersequence $m'$ with the matching hash mod $p$.

Now we see why this is correct. For any message $m$, and any deletion pattern $\tau$, there are at most $n^t$ supersequences of the received word $\tau(m)$.
Importantly, the randomness of the hash is independent of $\tau$.
Identify the output of $h_{\textup{uniq}}$ with a number in $[2^{r_{\textup{uniq}}}]$.
Any of the $n^t$ supersequences of $\tau(m)$ have pairwise distinct hashes by the definition of deterministic document exchange.
Thus, by the Chinese Remainder Theorem, the hash $h(m)$ collides with any other hash on at most $r_{\textup{uniq}}$ primes; that is, for any $m'\neq m$, $h_{\textup{uniq}}(m)-h_{\textup{uniq}}(m')\equiv 0\mod p$ for at most $r_{\textup{uniq}}$ distinct primes. Hence, there are at most $n^t\cdot r_{\textup{uniq}}$ ``bad'' primes that cause a collision involving the hash $h_{\textup{uniq}}(m)$. 
Thus, if we sample a prime from a set of at most $n^t\cdot r_{\textup{uniq}}/\varepsilon$ primes, we avoid all bad primes with probability $1-\varepsilon$, so our decoding correctly returns the message $m$ with probability $1-\varepsilon$.
This is true for all $m$ and $\tau$, so we have our oblivious deletion code.

We can modify this for our other results as well.
\begin{itemize}
    \item We now sketch the idea of Theorem~\ref{thm:list}, which derives an oblivious deletion code from list-decodable deletion codes.
    Assume for simplicity the list-decodable deletion code is systematic, with hash $h_{\textup{list}}:\{0,1\}^n\to\{0,1\}^{r_{\textup{list}}}$ (the non-systematic case is similar). 
    Choose the hash $h(m) = (h_{\textup{list}}(m), h_{\textup{uniq}}(m)\mod p, p)$ for some $p$ sampled randomly from a set of $\frac{L}{\varepsilon}\polylog n$ primes $p$.
    Now our decoder list-decodes using $h_{\textup{list}}(m)$, then deduces the correct message from the list using $h_{\textup{uniq}}(m)\mod p$ and $p$. 
    Deducing the correct message from the list only requires distinguishing the correct message from $L$ other messages with probability $1-\varepsilon$, which requires a negligible $O(\log (L/\varepsilon) + \log\log n)$ redundant bits.

    \item To obtain better redundancy with a randomized explicit construction (Theorem~\ref{thm:rand-obliv}), we first sample a set $P$ of $O(n/\varepsilon)$ primes from the set of all primes $P_{\textup{all}}$ smaller than $\tilde O(n^t)$, where $\varepsilon$ is the decoding error.\footnote{we actually sample from primes in $[M/2,M]$, where $M=\tilde \Theta(n^t)$.}
    This set $P$ defines the code.
    Now, to encode, we sample from $P$ rather than $P_{\textup{all}}$.
    This yields redundancy $\sim(t+1)\log n$, rather than $\sim2t\log n$, because it only takes $\log n$ bits (rather than $t\log n$ bits) to index the prime.
    By concentration inequalities, for any message $m$ and deletion pattern $\tau$, with probability at least $1-2^{-10n}$, the set $P$ contains at least $1-\varepsilon$-fraction of primes that don't cause a collision.
    Union bounding over $m$ and $\tau$ gives that the resulting code is an oblivious deletion code with overwhelming probability.

    For a randomized construction of adversarial deletion codes (Theorem~\ref{thm:rand-adv}), we can use the same idea. This time, we sample a set of $O(n)$ primes from the set of all primes $P_{\textup{all}}$ that are at most $\tilde O(n^{2t})$, giving a redundancy of $\sim(2t+1)\log n$. 
\end{itemize}

\section{Preliminaries}
\label{sec:prelims}
All logs are base 2 unless otherwise specified. $\ln(x)$ is log base $e$. For an integer $k$, we denote $[k] = \{1,2, \ldots, k\}$. 
Throughout, we identify integers with their binary representations and vise versa, so that, for example, we identify $[2^a]$ with $\{0,1\}^a$, and we may speak of $x\mod p$ for a binary string $x$ and a prime $p$.
We use Landau notation $O(\cdot), \Omega(\cdot), \Theta(\cdot),o(\cdot)$.
We use the notation $\tilde O(\cdot), \tilde \Omega(\cdot), \tilde \Theta(\cdot)$ to indicate that polylog factors are suppressed, and we use the notation $O_a(\cdot), \Omega_a(\cdot), \Theta_a(\cdot)$ to indicate that multiplicative factors depending on variable $a$ are suppressed.
Throughout, the notation $\sim$ suppresses $1\pm o(1)$ factors.

A \emph{substring} of a string $s\in \zo^n$ is obtained by taking consecutive symbols from $s$ while a \emph{subsequence} of a string of $s$ is obtained by deleting some (possibly none) of the symbols in $s$. A \emph{supersequence} of a string $s$ is another string $s'$ which contain $s$ as a subsequence.
We write $x\sqsubseteq y$ to say that $x$ is a subsequence of $y$.
A \emph{run} in a string $s$ is a single-symbol substring of $s$ of maximal length. Every string can be uniquely written as the concatenation of its runs.  
For example, if $s=0111001$, then $s$ is decomposed into $0 \circ 111\circ 00 \circ 1$ where the symbol $\circ$ denotes concatenation of two strings. 
A \emph{deletion pattern} is a function $\tau$ that removes a fixed subset of symbols from strings of a fixed length.

It is well known that for any $x\in \zo^{n - t}$, the number of supersequences of length $n$ is exactly $\sum_{i=0}^{t}\binom{n}{i}$ (see, e.g., \cite[Equation 24]{levenshtein2001efficient}). For our purposes, we use the following crude upper bound.

\begin{lemma} 
    Let $n$ and $t$ be integers\footnote{The bound is slightly incorrect for $t=1$, but all our results are irrelevant anyway for $t=1$, where the VT-code \cite{varshamov1965codes} achieves optimal redundancy even for adversarial errors, so we just state and use the $t\ge 2$ version in favor of a cleaner bound.} such that $2\leq t< n/2$ and let $z\in \zo^{n-t}$. The number of strings $s\in \zo^n$ such that $z$ is a subsequence of $s$ is at most $n^t$.
    \label{lem:bound-superseq}
\end{lemma}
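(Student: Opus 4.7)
The plan is to start from the exact count of supersequences mentioned just before the lemma statement and then bound the resulting sum of binomial coefficients. Specifically, Levenshtein's formula tells us that the number of length-$n$ supersequences of a fixed string $z \in \{0,1\}^{n-t}$ equals $\sum_{i=0}^{t}\binom{n}{i}$, so the task reduces to proving that
\[
\sum_{i=0}^{t}\binom{n}{i} \;\le\; n^{t}
\]
whenever $2 \le t < n/2$. There is no combinatorial subtlety left, only an elementary estimate.

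For the main estimate I would exploit that the hypothesis $t < n/2$ makes the sequence $\binom{n}{i}$ nondecreasing for $i \le t$: the ratio $\binom{n}{i}/\binom{n}{i-1} = (n-i+1)/i$ is at least $1$ whenever $i \le n/2$. So every term in the sum is bounded by the last, yielding the crude inequality
\[
\sum_{i=0}^{t}\binom{n}{i} \;\le\; (t+1)\binom{n}{t} \;\le\; \frac{(t+1)\,n^{t}}{t!}.
\]
For every $t \ge 3$, the prefactor $(t+1)/t!$ is at most $4/6 < 1$, and so $\sum_{i=0}^{t}\binom{n}{i} \le n^{t}$ is immediate, finishing those cases with no further work.

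The only step that requires any care, and the main (very mild) obstacle, is that the uniform bound $(t+1)/t! \le 1$ breaks for $t = 2$, where it only gives $\tfrac{3}{2} n^{2}$. I would handle this case by direct computation: the sum becomes $1 + n + n(n-1)/2$, and the desired inequality $1 + n + n(n-1)/2 \le n^{2}$ rearranges to $(n-2)(n+1) \ge 2$. The hypothesis $t < n/2$ with $t = 2$ forces $n \ge 5$, which comfortably implies the inequality. Hence all cases $2 \le t < n/2$ are covered, establishing the claimed bound of $n^{t}$.
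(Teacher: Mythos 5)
Your proof is correct and takes essentially the same route as the paper: bound the sum by $(t+1)\binom{n}{t}\le (t+1)n^t/t!\le n^t$ for $t\ge 3$, and verify $t=2$ by direct computation. (One tiny slip: the $t=2$ inequality rearranges to $(n-2)(n+1)\ge 0$ rather than $\ge 2$, but the stronger statement you prove holds for $n\ge 5$, so nothing breaks.)
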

\begin{proof}
    We wish to show $\sum_{i=0}^{t}\binom{n}{i}\leq n^t$.
    For $t=2$, we can prove this bound directly, and for $t\ge 3$, we can use $\sum_{i=0}^{t}\binom{n}{i}\le (t+1)\binom{n}{t}\le (t+1)n^t/t! \le n^t$.
\end{proof}

\subsection{Equivalence to Randomized Document Exchange}
\label{ssec:equiv}
We show that systematic oblivious deletion codes are equivalent (up to lower order redundancy terms) to randomized document exchange hashes. First, we define a systematic code.

\begin{definition}[Systematic code]
    A code with encoding function $\Enc:\{0,1\}^n\to\{0,1\}^{n+r}$ is \emph{systematic} if, for all $x$, the first $n$ bits of $\Enc(x)$ are $x$ (with probability 1).
    \label{def:sys-code}
\end{definition}
Randomized document exchange is similar to systematic codes for oblivious deletions, but we assume the redundant symbols are uncorrupted.
\begin{definition}[Deterministic and Randomized document exchange]
    A \emph{deterministic} document exchange protocol for length $n$ messages and $t$ deletions is given by a hash function $h:\{0,1\}^n\to \{0,1\}^r$ and a decoding function $\Dec:\{0,1\}^{n-t}\times \{0,1\}^r \to \zo^n$ such that, for all $x$ and all subsequences $x'$ of length $n-t$, we can recover $x$ from $x'$ and $h(x)$. That is, $\Dec(x',h(x))=x$.\footnote{typically, the definition states that $x'$ is any string with $ED(x,x')\le t$, but this definition is equivalent up to a constant factor in $t$, and for our work this definition is more relevant.}

    A \emph{randomized} document exchange protocol with error $\varepsilon$ is a document exchange protocol such that the hash function is randomized and we recover $x$ from $x'$ and $h(x)$ with high probability: $\Pr_{h(\cdot)}[\Dec(x',h(x))=x]\ge 1-\varepsilon$.
\end{definition}
By encoding the redundant bits in a $t$-deletion code, we get that these two models are equivalent.
This equivalence arises for the same reason that deterministic document exchange is equivalent to adversarial deletion codes (see, e.g., \cite{haeupler2019optimal, cheng2022deterministic}). We note this connection is implicit in \cite{hanna2018guess}. For completeness, we state this equivalence in the next lemma and defer its proof to the appendix (Section~\ref{sec:proof-of-equiv-obl-rand-doc}).

\begin{restatable}{lemma}{EquivDocOblLem}[Randomized document exchange is equivalent to systematic oblivious deletions]
The following hold:
\begin{enumerate}
    \item Suppose we have a systematic oblivious $t$-deletion code with $n$ message bits, $r$ redundancy bits, encoding time $T_E$, and decoding time $T_D$. Then we can construct in $O(n)$ time a randomized document exchange protocol with length $n$, hash length $r$, distance $t$, error $\varepsilon$, encoding time $T_E + O(n)$, and decoding time $T_D+O(n)$.

    \item Suppose we have a randomized document exchange protocol with length $n$, hash length $r$, distance $t$, error $\varepsilon$, encoding time $T_E$, and decoding time $T_D$. Then we can construct in $O(n)$ time a systematic oblivious $t$-deletion code with $n$ message bits, $r+O(t\log(r))$ redundancy bits, encoding time $T_E + O(n+\poly r)$, and decoding time $T_D+O(n+\poly r)$.
\end{enumerate}
\label{lem:equiv}
\end{restatable}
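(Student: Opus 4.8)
Both directions are proved by a direct construction, so I would handle them one at a time.

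\emph{Systematic oblivious code $\Rightarrow$ randomized document exchange.} Write the systematic encoder as $\enc(m)=m\circ s(m)$, where $s(m)\in\zo^r$ is the redundant block (a randomized function of $m$). I would take the document exchange hash to be $h(m):=s(m)$ and the document exchange decoder to be $g(x',w):=\dec(x'\circ w)$. The point is that for any message $x$ and any length-$(n-t)$ subsequence $x'$ of $x$, after fixing an embedding of $x'$ inside $x$, the string $x'\circ h(x)=x'\circ s(x)$ is exactly $\tau_{x'}(\enc(x))$, where $\tau_{x'}$ is the deletion pattern that deletes from the first $n$ coordinates the $t$ positions of $x$ not used by that embedding and deletes nothing from the last $r$ coordinates. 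Since $\tau_{x'}$ is a fixed $t$-deletion pattern depending only on $x$ and the embedding --- and crucially \emph{not} on the encoder's private randomness $s(x)$ --- it is a legal oblivious pattern, whence $\Pr_s[g(x',h(x))=x]=\Pr_s[\dec(\tau_{x'}(\enc(x)))=x]\ge 1-\varepsilon$. The claimed parameters (hash length $r$, error $\varepsilon$, encoding time $T_E+O(n)$, decoding time $T_D+O(n)$) are then immediate.

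\emph{Randomized document exchange $\Rightarrow$ systematic oblivious code.} Let $(h,g)$ be the document exchange protocol, and fix an explicit, efficiently decodable $t$-deletion-correcting code $E:\zo^r\to\zo^R$ with $R=r+O(t\log r)$ (e.g.\ \cite{sima2020optimal-systematic}), with decoder $\dec_E$. Define $\enc(m):=m\circ E(h(m))$; this is systematic, with redundancy $R=r+O(t\log r)$. For the decoder, given a received word $z$ coming from a pattern $\tau$ with $d:=(n+R)-|z|\le t$ deletions, I would \emph{not} try to locate how $\tau$'s deletions split between the two blocks; instead I would simply let $z_1'$ be the first $n-t$ symbols of $z$ and $z_2$ be the symbols of $z$ from position $n+1$ onward, set $\hat h:=\dec_E(z_2)$, and output $\hat m:=g(z_1',\hat h)$. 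The analysis rests on a single observation. Write $\tau_1,\tau_2$ for the restrictions of $\tau$ to the first $n$ and last $R$ coordinates, and say $\tau_1$ deletes $d_1\le d$ symbols; then $z=\tau_1(m)\circ\tau_2(E(h(m)))$ with $|\tau_1(m)|=n-d_1\le n$, so $z$ begins with $\tau_1(m)$. Hence $z_1'$ equals the first $n-t$ symbols of $\tau_1(m)$, which (since $d_1\le d\le t$) is a genuine length-$(n-t)$ subsequence of $m$; and $z_2$ is $E(h(m))$ with exactly $d\le t$ deletions, so $\dec_E(z_2)=h(m)$ unconditionally. Therefore, for every $m$ and every $\le t$-deletion pattern $\tau$, $\dec(\tau(\enc(m)))=g(x',h(m))$ for the \emph{fixed} string $x'$ equal to the first $n-t$ symbols of $\tau_1(m)$, and this equals $m$ with probability $\ge1-\varepsilon$ over the randomness of $h$. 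Since $E$ and $\dec_E$ run in $\poly r$ time and all other steps are $O(n)$, the stated time bounds follow.

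The step I expect to be the main obstacle is the decoding in the second direction. The naive worry is that the $d$ deletions split in an a priori unknown way between the message block and the redundancy block, seemingly forcing a search over the $O(t)$ possible boundary positions together with a mechanism to identify the correct candidate (and possibly a stronger inner code, e.g.\ one correcting $2t$ insertions and deletions). The observation above sidesteps all of this: because $|\tau_1(m)|\le n$ always, the fixed truncations ``first $n-t$ symbols'' and ``symbols past position $n$'' automatically carve $z$ into a legal corrupted message and a $\le t$-deletion corruption of the hash codeword, so a plain $t$-deletion inner code with its trivial decoder suffices. The only other point that needs care, in the first direction, is verifying that the deletion pattern $\tau_{x'}$ we feed to $\dec$ is genuinely oblivious, i.e.\ independent of $s(x)$; this is immediate from the construction, but it is precisely the reason the reduction is sound.
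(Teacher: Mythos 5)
Your proof is correct and follows the same construction as the paper's: for direction 1 you treat the redundant block as the hash with decoder $g(x',w)=\dec(x'\circ w)$, and for direction 2 you protect the hash with an inner $t$-deletion code and decode by applying fixed truncations (the first $n-t$ symbols and the symbols past position $n$) that automatically carve the received word into a $\le t$-deleted copy of $m$ and a $\le t$-deleted copy of the hash codeword. The differences are cosmetic: you handle received words of length greater than $n+R-t$ directly rather than normalizing by first applying extra deletions, you cite a different (but equally suitable) inner deletion code, and you spell out the obliviousness of the induced deletion pattern in direction 1, which the paper leaves implicit.
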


\subsection{Concentration Inequalities}
We use the following multiplicative version of the Chernoff bound.
\begin{lemma}[Multiplicative Chernoff bound; see, e.g., \cite{mitzenmacher2017probability}]\label{lem:chernoff}
    Suppose $X_1, \ldots, X_n$ are independent identically distributed random variables taking values in $\zo$. Let $X = \sum_{i=1}^n X_i$ and $\mu = \bbE[X]$. Then, for any $0<\alpha \le 1$:
    \[
    \Pr[X > (1 + \alpha) \mu] \le e^{-\frac{\mu  \alpha^2}{2+\alpha}}
    \]
\end{lemma}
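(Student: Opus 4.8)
The statement to prove is the multiplicative Chernoff bound as stated in Lemma~\ref{lem:chernoff}. The plan is to use the standard exponential-moment (Bernstein--Chernoff) method: for a parameter $\lambda>0$ to be chosen later, apply Markov's inequality to the nonnegative random variable $e^{\lambda X}$, bound the moment generating function $\mathbb{E}[e^{\lambda X}]$ using independence of the $X_i$, optimize over $\lambda$, and finally simplify the resulting expression to the clean form $e^{-\mu\alpha^2/(2+\alpha)}$.

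\textbf{Key steps, in order.} First, fix $0<\alpha\le 1$ and any $\lambda>0$. By Markov applied to $e^{\lambda X}$,
\[
\Pr[X>(1+\alpha)\mu] = \Pr\!\left[e^{\lambda X} > e^{\lambda(1+\alpha)\mu}\right] \le \frac{\mathbb{E}[e^{\lambda X}]}{e^{\lambda(1+\alpha)\mu}}.
\]
Second, since the $X_i$ are independent, $\mathbb{E}[e^{\lambda X}] = \prod_{i=1}^n \mathbb{E}[e^{\lambda X_i}]$. Writing $p_i = \Pr[X_i=1]$, we have $\mathbb{E}[e^{\lambda X_i}] = 1 + p_i(e^\lambda-1) \le \exp\!\big(p_i(e^\lambda-1)\big)$ using $1+x\le e^x$. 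Multiplying over $i$ and using $\sum_i p_i = \mu$ gives $\mathbb{E}[e^{\lambda X}] \le \exp\!\big(\mu(e^\lambda-1)\big)$. Third, substitute back to get
\[
\Pr[X>(1+\alpha)\mu] \le \exp\!\big(\mu(e^\lambda - 1) - \lambda(1+\alpha)\mu\big),
\]
and choose $\lambda = \ln(1+\alpha)>0$, which yields the bound $\left(\frac{e^\alpha}{(1+\alpha)^{1+\alpha}}\right)^{\mu}$. Fourth, it remains to show the analytic inequality $\frac{e^\alpha}{(1+\alpha)^{1+\alpha}} \le e^{-\alpha^2/(2+\alpha)}$ for $0<\alpha\le 1$, equivalently $(1+\alpha)\ln(1+\alpha) - \alpha \ge \frac{\alpha^2}{2+\alpha}$. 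This is a single-variable calculus fact: define $f(\alpha) = (1+\alpha)\ln(1+\alpha) - \alpha - \frac{\alpha^2}{2+\alpha}$, check $f(0)=0$, and show $f'(\alpha)\ge 0$ on $(0,1]$ (in fact on all of $[0,\infty)$), for instance by computing $f'(\alpha) = \ln(1+\alpha) - \frac{\alpha(4+\alpha)}{(2+\alpha)^2}$ and verifying this is nonnegative via the standard estimate $\ln(1+\alpha)\ge \frac{2\alpha}{2+\alpha}$ (or $\ge \frac{\alpha}{1+\alpha/2}$) together with elementary algebra.

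\textbf{Main obstacle.} The probabilistic part is entirely routine. The only mildly delicate step is the last one: verifying the elementary inequality $(1+\alpha)\ln(1+\alpha)-\alpha\ge \frac{\alpha^2}{2+\alpha}$ on $(0,1]$. One must be slightly careful to pick a clean route (e.g., reducing to $\ln(1+\alpha)\ge \frac{2\alpha}{2+\alpha}$, which itself follows from the Hermite--Padé/continued-fraction bound for $\ln$, or from comparing derivatives) rather than getting bogged down in an unenlightening expansion. Since the paper only needs this in the stated regime $0<\alpha\le1$, even a coarser argument (Taylor expansion with a sign-controlled remainder) suffices, so there is no real risk — it is just bookkeeping.
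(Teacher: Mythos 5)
Your proof is correct and is the standard Chernoff--Cramér argument. The paper itself does not prove this lemma; it cites it as a textbook fact from Mitzenmacher and Upfal, so there is no proof in the paper to compare against. Your derivation — Markov applied to $e^{\lambda X}$, the bound $\mathbb{E}[e^{\lambda X}]\le\exp(\mu(e^{\lambda}-1))$ via $1+x\le e^{x}$ and independence, the choice $\lambda=\ln(1+\alpha)$, and the final reduction to $(1+\alpha)\ln(1+\alpha)-\alpha\ge\alpha^{2}/(2+\alpha)$ via $\ln(1+\alpha)\ge 2\alpha/(2+\alpha)$ — is exactly how the reference establishes this form of the bound, and all steps check out. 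In particular, in your last step, $f'(\alpha)=\ln(1+\alpha)-\frac{\alpha(4+\alpha)}{(2+\alpha)^{2}}\ge\frac{2\alpha}{2+\alpha}-\frac{\alpha(4+\alpha)}{(2+\alpha)^{2}}=\frac{\alpha^{2}}{(2+\alpha)^{2}}\ge 0$, so $f$ is indeed nondecreasing from $f(0)=0$, and the inequality in fact holds for all $\alpha>0$, not just $\alpha\le 1$.
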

Also, we shall use Hoeffding's inequality.
\begin{lemma} [Hoeffding's inequality; see, \cite{hoeffding1994probability}] \label{lem:hoeff}
        Suppose that $X_1, \ldots, X_n$ are independent random variables with finite first and second moments and $a_i\leq X_i \leq b_i$ for $1\leq i \leq n$. Let $X = \sum_{i=1}^n X_i$ and $\mu = \bbE[X]$. Then, for any $t > 0$ we have 
    \[
    \Pr[X - \mu > t] < \exp \left( -\frac{2t^2}{\sum_{i=1}^n (b_i - a_i)^2}\right).
    \]
\end{lemma}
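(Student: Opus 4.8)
The plan is to prove Hoeffding's inequality by the classical Chernoff--Cram\'er exponential-moment method, whose essential ingredient is a bound on the moment generating function of a bounded, mean-zero random variable (Hoeffding's lemma). Write $\mu_i = \bbE[X_i]$, so that $\mu = \sum_i \mu_i$, and set $Y_i = X_i - \mu_i$; then $\bbE[Y_i] = 0$ and $Y_i$ takes values in an interval of length $b_i - a_i$. For any parameter $s>0$, since $e^{s(X-\mu)} = \prod_i e^{sY_i}$ is nonnegative, Markov's inequality gives
\[
\Pr[X-\mu > t] \le \Pr\!\big[e^{s(X-\mu)} \ge e^{st}\big] \le e^{-st}\,\bbE\!\left[e^{s(X-\mu)}\right] = e^{-st}\prod_{i=1}^n \bbE\!\left[e^{sY_i}\right],
\]
where the final equality uses independence of the $X_i$ (hence of the $Y_i$). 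It remains to bound each factor $\bbE[e^{sY_i}]$ and then optimize over $s$.

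The heart of the argument is Hoeffding's lemma: if $Y$ has $\bbE[Y]=0$ and $Y\in[\alpha,\beta]$ almost surely, then $\bbE[e^{sY}] \le \exp\!\big(s^2(\beta-\alpha)^2/8\big)$. I would prove this by first invoking convexity of $y\mapsto e^{sy}$ to get the chord bound $e^{sy}\le \frac{\beta-y}{\beta-\alpha}e^{s\alpha} + \frac{y-\alpha}{\beta-\alpha}e^{s\beta}$ for $y\in[\alpha,\beta]$; taking expectations and using $\bbE[Y]=0$ yields $\bbE[e^{sY}] \le \frac{\beta}{\beta-\alpha}e^{s\alpha} - \frac{\alpha}{\beta-\alpha}e^{s\beta}$. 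Writing $p = -\alpha/(\beta-\alpha)\in[0,1]$ and $u = s(\beta-\alpha)\ge 0$, the right-hand side equals $e^{\varphi(u)}$ with $\varphi(u) = -pu + \ln(1-p+pe^u)$. A direct computation shows $\varphi(0)=\varphi'(0)=0$ and $\varphi''(u) = q(u)\big(1-q(u)\big) \le \tfrac14$, where $q(u) = pe^u/(1-p+pe^u)\in[0,1]$; hence by Taylor's theorem with remainder $\varphi(u)\le u^2/8 = s^2(\beta-\alpha)^2/8$, which is the claim. (The degenerate cases $\beta=\alpha$ and $p\in\{0,1\}$ are immediate, as then $Y$ is a.s.\ constant and $\varphi\equiv 0$.)

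Applying Hoeffding's lemma to each $Y_i$ (with interval length $b_i-a_i$) gives $\bbE[e^{sY_i}] \le \exp\!\big(s^2(b_i-a_i)^2/8\big)$, so
\[
\Pr[X-\mu > t] \le \exp\!\left(-st + \frac{s^2}{8}\sum_{i=1}^n (b_i-a_i)^2\right).
\]
The exponent is a quadratic in $s$ minimized at $s^\star = 4t/\sum_i(b_i-a_i)^2$, where it equals $-2t^2/\sum_i(b_i-a_i)^2$; plugging in $s=s^\star$ gives the stated exponential bound. The strict inequality in the statement is a minor point: since $\bbE[X-\mu]=0 < t$, we must have $\Pr[X-\mu < t] > 0$, and because $e^{s(X-\mu)}>0$ almost surely this event contributes strictly positive mass below the threshold $e^{st}$, so the Markov step $\Pr[e^{s(X-\mu)}\ge e^{st}] < e^{-st}\bbE[e^{s(X-\mu)}]$ is strict, and strictness propagates through the subsequent (non-strict) bounds.

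I expect the main obstacle to be Hoeffding's lemma, and within it the verification that $\varphi''\le \tfrac14$; everything else --- the exponential Markov step, factoring the moment generating function using independence, and the final single-variable optimization over $s$ --- is routine bookkeeping. Some mild care is also needed for the edge cases ($b_i=a_i$, and $p\in\{0,1\}$ in the lemma), but these are all trivial.
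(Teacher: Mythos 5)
Your proposal is the classical Chernoff--Cram\'er proof of Hoeffding's inequality (exponential Markov step, factoring the moment generating function by independence, Hoeffding's lemma via the chord bound and the estimate $\varphi''\le\tfrac14$, then optimizing $s$), and it is correct, including the observation that strictness comes from the positive-mass contribution below the threshold in the Markov step. The paper does not prove this lemma at all --- it is quoted as a standard result with a citation to Hoeffding --- so there is no in-paper argument to compare against; your write-up is simply the standard proof of the cited fact. The only caveat, not worth belaboring, is the degenerate case $\sum_i(b_i-a_i)^2=0$, where the stated bound is vacuous and your optimizing choice $s^\star$ is undefined; the lemma is implicitly understood to exclude that case.
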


\subsection{Prime Number Theorem}

We utilize the prime number theorem.
\begin{theorem}[\cite{de1896recherches,hadamard1896distribution}]
Let $\pi(N)$ denote the number of primes less than $N$. Then $\lim_{n\to\infty}\frac{\pi(N)}{N/\ln N} = 1$.
In other words, for all $\varepsilon>0$, there exists $N_0$ such that, for all $N\ge N_0$,  $\pi(N)\in[(1-\varepsilon)N/\ln N, (1+\varepsilon)N/\ln N]$.
\label{thm:pnt}
\end{theorem}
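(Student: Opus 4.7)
The plan is to prove this by the classical analytic approach of Hadamard and de la Vall\'ee Poussin, encoding primes into the Riemann zeta function and extracting asymptotics via complex analysis. The Euler product $\zeta(s) = \prod_p (1-p^{-s})^{-1}$, valid for $\Re(s)>1$, connects $\zeta$ to primes, and logarithmic differentiation gives $-\zeta'(s)/\zeta(s) = \sum_{n\ge 1}\Lambda(n) n^{-s}$, where $\Lambda$ is the von Mangoldt function. Standard partial summation reduces the statement $\pi(N)\sim N/\ln N$ to $\psi(x) := \sum_{n\le x}\Lambda(n) \sim x$, so I would work with $\psi$ instead of $\pi$ throughout.

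The next step is to extend $\zeta(s)$ meromorphically to the half-plane $\Re(s)>0$, with only a simple pole at $s=1$ of residue one, and then to apply Perron's formula (or a smoothed variant) to write $\psi(x) = \frac{1}{2\pi i}\int_{(c)} \bigl(-\zeta'(s)/\zeta(s)\bigr)\frac{x^s}{s}\,ds$ for some $c>1$. Shifting the contour leftward toward $\Re(s)=1$, the simple pole of $-\zeta'/\zeta$ at $s=1$ contributes the main term $x$, and the task becomes showing that the residual line integral along $\Re(s)=1$ is $o(x)$.

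The main obstacle, and the technical heart of the argument, is the non-vanishing of $\zeta(s)$ on the line $\Re(s)=1$; without it, $-\zeta'/\zeta$ would have extra poles on the contour and the shift would fail. I would handle this by the classical inequality $3+4\cos\theta+\cos 2\theta\ge 0$, applied to $\log|\zeta(\sigma)^3\zeta(\sigma+it)^4\zeta(\sigma+2it)|$ as $\sigma\to 1^+$, which, combined with the simple pole of $\zeta$ at $s=1$, forces a contradiction if $\zeta(1+it)=0$ for some real $t\neq 0$. Together with a quantitative bound on $\zeta$ and $1/\zeta$ in a neighborhood of the line --- or, more slickly, an application of the Wiener--Ikehara Tauberian theorem to $-\zeta'(s)/\zeta(s)-1/(s-1)$ --- this yields $\psi(x)=x+o(x)$, which is equivalent to $\pi(N)\sim N/\ln N$ and immediately implies the stated two-sided bound.

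Since this is a classical theorem from 1896, in the paper itself I would of course simply cite \cite{de1896recherches,hadamard1896distribution} rather than reproduce the argument; the sketch above is included only to indicate the ingredients one would use.
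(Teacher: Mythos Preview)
Your proposal is correct and matches the paper's approach exactly: the paper does not prove the Prime Number Theorem at all but simply states it with the citations \cite{de1896recherches,hadamard1896distribution}, treating it as a standard black-box result. Your closing remark anticipates this precisely, and the classical sketch you provide (Euler product, $-\zeta'/\zeta$, Perron, non-vanishing on $\Re(s)=1$ via $3+4\cos\theta+\cos 2\theta\ge 0$) is an accurate outline of the Hadamard--de la Vall\'ee Poussin argument, though it is not needed here.
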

We use the following corollary of the prime number theorem that follows from plugging in $\varepsilon=1/10$.
\begin{corollary}[See also Corollary 5.2 of \cite{dusart2018explicit}]
    There exists an absolute constant $N_0$ such that, for all $N\ge N_0$, the number of primes between $N/2$ and $N$ is at least $N/(10\ln N)$.
    \label{cor:pnt}
\end{corollary}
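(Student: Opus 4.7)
The plan is to apply Theorem~\ref{thm:pnt} with $\varepsilon=1/10$ to get two-sided bounds on $\pi(N)$ and $\pi(N/2)$, and then take the difference. Since the number of primes in $[N/2,N]$ is exactly $\pi(N)-\pi(N/2)+\mathbb{1}[N/2\text{ is prime}]$, a lower bound on $\pi(N)-\pi(N/2)$ suffices.

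Concretely, I would invoke Theorem~\ref{thm:pnt} with $\varepsilon=1/10$ to obtain an $N_0'$ such that for all $M\ge N_0'$,
\[
\tfrac{9}{10}\cdot \tfrac{M}{\ln M} \;\le\; \pi(M) \;\le\; \tfrac{11}{10}\cdot \tfrac{M}{\ln M}.
\]
Applying this to $M=N$ and $M=N/2$ (valid once $N\ge 2N_0'$), we get
\[
\pi(N)-\pi(N/2) \;\ge\; \frac{9}{10}\cdot\frac{N}{\ln N} \;-\; \frac{11}{10}\cdot\frac{N/2}{\ln(N/2)} \;=\; \frac{9}{10}\cdot\frac{N}{\ln N} \;-\; \frac{11}{20}\cdot\frac{N}{\ln N-\ln 2}.
\]
The goal is then to show this quantity is at least $N/(10\ln N)$, or equivalently, that $\frac{11}{20}\cdot\frac{1}{\ln N-\ln 2}\le \frac{8}{10}\cdot\frac{1}{\ln N}$, which rearranges to $\ln N\ge \frac{16}{5}\ln 2$. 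This holds as long as $N$ is larger than some small absolute constant, so we set $N_0$ to be the maximum of $2N_0'$ and this threshold.

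There is no real obstacle here; the only mild care needed is that $\pi(N/2)$ and $\pi(N)$ are compared with respect to different denominators ($\ln(N/2)$ versus $\ln N$), so one must check that the $\ln 2$ gap between them does not destroy the constant factor we are aiming for. Since we have slack $\frac{9}{10}-\frac{11}{20}=\frac{7}{20}$ in the limit and only need $\frac{1}{10}$, this slack easily absorbs the $\ln 2/\ln N$ correction for large $N$.
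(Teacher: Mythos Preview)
Your proposal is correct and is essentially identical to the paper's proof: the paper also applies Theorem~\ref{thm:pnt} with $\varepsilon=1/10$, sets $N_0=\max(2N_0',100)$, and asserts $0.9N/\ln N - 1.1(N/2)/\ln(N/2) > N/(10\ln N)$ without further elaboration. Your write-up simply spells out the algebra behind that final inequality.
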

\begin{proof}
    Let $N_0$ be $\max(2N_0', 100)$, where $N_0'$ is from the prime number theorem with $\varepsilon=1/10$.
    Then the number of primes in $[N/2,N]$ is at least $0.9N/\ln N - 1.1(N/2)/\ln(N/2) > N/(10\ln N)$.
\end{proof}

\subsection{The code of \cite{belazzougui2015efficient}}

We use as a black box the following construction of deterministic document exchange protocols. We note that this construction does not achieve the state-of-the-art hash size, however, it's decoding algorithm is linear.
When the hash size of this ingredient is less important, we prefer this construction to get better decoding times.

\begin{theorem}[\cite{belazzougui2015efficient}]
    There exists a deterministic document exchange protocol for $t$ deletions with a hash of size $O(t^2+t\log^2n)$ and encoding and decoding time $\tilde O(n)$.
    \label{thm:b15}
\end{theorem}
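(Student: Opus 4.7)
The plan is to build the hash using a hierarchical block decomposition together with polynomial (Karp--Rabin) fingerprints modulo a deterministic set of small primes. First I would decompose Alice's string $x\in\{0,1\}^n$ into a balanced binary tree of blocks: the root covers all of $x$, and each internal node splits its block into two equal halves, yielding $O(\log n)$ levels with $2^i$ blocks of length $n/2^i$ at level $i$. For each block, compute a polynomial fingerprint $\sum_j b_j\alpha^j\bmod p$ for each prime $p$ in a fixed deterministic set $P$ of size $|P|=O(\log n)$, with each prime bounded by $\poly(n)$. The set $P$ is chosen so that any nonzero polynomial of degree $\le n$ with $\{0,1\}$ coefficients evaluates to a nonzero residue modulo at least one $p\in P$: this is possible by a counting argument since each such polynomial has at most $n$ roots per prime, while there are many more primes of size $\poly(n)$ to choose from, and one can derandomize the choice of $P$ by iterative pruning.

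To decode, Bob traverses the tree top-down, comparing each block fingerprint of his received string $x'$ against Alice's fingerprint and recursing into blocks whose fingerprints disagree. Since there are at most $t$ deletions, at most $O(t)$ blocks per level can be affected, so Alice only needs to transmit fingerprints for $O(t)$ blocks at each of the $O(\log n)$ levels. Bob descends until affected sub-blocks shrink to length $O(t)$, at which point he reconstructs them by brute-force enumeration assisted by a small amount of side information that Alice includes to pin down the exact deleted positions and bit values.

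Summing the costs: $O(t)$ affected blocks per level times $O(\log n)$ levels times $O(\log n)$ bits per fingerprint (covering both the prime index and the fingerprint value) gives $O(t\log^2 n)$ bits, and the base-case side information contributes $O(t^2)$ bits for the position-value pairs of the $t$ deletions, for a total of $O(t^2+t\log^2 n)$. Encoding runs in $\tilde O(n)$ by computing all block fingerprints in a single bottom-up pass with fast polynomial arithmetic, and decoding runs in $\tilde O(n)$ via the top-down tree walk plus brute force at the leaves.

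The main obstacle I anticipate is the synchronization drift caused by deletions near block boundaries: when a deletion occurs close to the boundary between two blocks, subsequent content in $x'$ is shifted, so Bob's recomputed fingerprints can mismatch Alice's even in blocks that contain no deletions, potentially blowing up the number of affected blocks he must investigate. Resolving this would require Alice to transmit fingerprints for a small family of shift offsets within suspicious regions, or to use boundary sentinels that can be re-aligned at decode time. Managing these offsets carefully without inflating the hash is the delicate step that determines whether the construction meets the claimed $O(t^2+t\log^2 n)$ bound rather than losing an extra factor of $t$ or $\log n$.
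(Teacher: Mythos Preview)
The paper does not prove this theorem; it is quoted as a black-box result from \cite{belazzougui2015efficient} and used only as an ingredient. So there is no ``paper's own proof'' to compare against beyond the citation itself.

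That said, your sketch has two genuine gaps relative to what the cited construction actually does. First, you write that ``Alice only needs to transmit fingerprints for $O(t)$ blocks at each of the $O(\log n)$ levels.'' But this is a one-round protocol: Alice does not know Bob's string and cannot know which blocks will mismatch. The fix (and the mechanism actually used in the IMS-style schemes and in \cite{belazzougui2015efficient}) is that Alice sends a \emph{systematic error-correcting encoding} of \emph{all} block hashes at each level; since at most $O(t)$ of Bob's recomputed hashes can be wrong, the ECC redundancy lets Bob recover all of Alice's hashes. That redundancy is what costs $O(t)$ hash-values per level, not selective transmission.

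Second, you correctly flag the synchronization drift as the main obstacle, but you do not resolve it; ``sentinels'' or ``shift offsets'' will not work deterministically without further structure, and bounding the number of spuriously mismatching blocks by $O(t)$ per level is precisely what fails under naive fixed-width partitioning once deletions shift content across boundaries. Belazzougui's construction handles alignment by a more careful choice of block anchors and by protecting the hash vectors with codes that tolerate the resulting bounded number of erroneous positions. Without an explicit mechanism here, the per-level error count is not $O(t)$, and the claimed $O(t\log^2 n)$ budget is not justified. In short, the high-level picture (hierarchical blocks, fingerprints, ECC across levels) is right, but both the one-round transmission model and the alignment argument need to be made concrete before this is a proof.
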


\section{Existential Result and Lower Bound}
\label{sec:exist}
In this section, we prove Propositions~\ref{prop:obl-lower-bound} and~\ref{prop:exist-oblv}. We first introduce a definition for a \emph{deterministic} code that corrects \emph{random deletions in the average-case}.

\begin{definition}
    A deterministic code $C$ is called \emph{$t$-random deletion code (in the average-case) with error probability $\varepsilon$} if there exists a decoder $\dec :\zo^{n-t}\to C$ such that
    \[
    \Pr_{c,\tau}[\Dec(\tau(c))\neq c]\leq \varepsilon
    \]
    Here, the probability is over a uniformly random codeword and a uniformly random $t$ deletion pattern.
\end{definition}

For the sake of readability, we will write $t$-random deletion code to refer to $t$-random deletion code (in the average case).

This section is organized as follows. In Section~\ref{sec:avg-obl-equiv}, we show that if one has an oblivious deletion code, then there also exists a deterministic code that is a $t$-random deletion code. The lower bound on the redundancy of oblivious deletion codes (Proposition~\ref{prop:obl-lower-bound}) is given in Section~\ref{sec:lower-bound} with the assistance of a result by Kalai, Mitzenmacher, and Sudan \cite{kalai2010tight} that shows a lower bound on the redundancy of a $t$-random deletion code.
In Section~\ref{sec:random-strongly-obl}, we show that a random construction results in an oblivious deletion code.

\subsection{Oblivious deletions $\implies$ Random deletions in the average-case}
\label{sec:avg-obl-equiv}
In this section, we show that if we have a $t$ oblivious deletion code with error $\varepsilon$, then there exists a deterministic code $C$ that can correct $t$-random deletions with error $\Theta(\sqrt{\varepsilon})$. 
Roughly speaking, we construct $C$ by ``sampling'' the oblivious code: For each message $m$, include $c=\enc(m)$ in the codebook. 
We then show that there exists such sampling for which the resulting code is a $t$-random deletion code.

To describe the ``sampling'' process, we give a formal definition of a stochastic code (recall that oblivious codes are stochastic codes, which means that their encoder is randomized).
\begin{definition}[stochastic code]
    A \emph{stochastic binary code} with redundancy $r\in \bbN$, \emph{randomness length} $b\in \bbN$ and \emph{block length} $n \in \bbN$ is given by an encoder $\enc: \zo^{n-r} \times \zo^b \rightarrow \zo^n$ and a decoder $\dec: \zo^*\rightarrow \zo^{n-r} \cup \{\perp\}$.
\end{definition}

Note that the encoder function $\enc$ takes a message and a random string. However, we mostly write $\enc(v)$ to refer to the process of sampling a random $u\leftarrow \zo^b$ and then encoding the message $v$ with $\enc(v,u)$. Nevertheless, there are some places where we explicitly write $\enc(v,u)$ to indicate an encoding with a specific seed.

\begin{claim} \label{clm:obl-to-random}
    Let $t$ be a constant integer and $n$ be a large enough integer. Assume that $D$ is a stochastic code of length $n$, redundancy $r$, and assume that $D$ is a $t$ oblivious deletion code with error $\varepsilon$. 
    Then, there exists a deterministic code $C$ that can correct from $t$-random deletions with the same length and redundancy, and with error $\leq 2\sqrt{\varepsilon}$.
\end{claim}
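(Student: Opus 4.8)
The plan is to use the probabilistic method: we build the deterministic code $C$ by letting each codeword be $\enc(m,u_m)$ for a single random seed $u_m$ drawn per message, and show that in expectation over the seeds the resulting average-case random-deletion error is small, so some fixed choice of seeds works. Let me set up notation. For a message $m$, a seed $u$, and a deletion pattern $\tau$, write $\mathrm{err}(m,u)$ for the indicator (or rather a suitably averaged quantity) of failure. Precisely, define $p(m,u) := \Pr_\tau[\dec(\tau(\enc(m,u)))\neq m]$, the failure probability over a uniform $t$-deletion pattern for the fixed codeword $\enc(m,u)$. The oblivious guarantee says that for \emph{every} fixed $\tau$, $\Pr_u[\dec(\tau(\enc(m,u)))\neq m]\le\varepsilon$; averaging over a uniform random $\tau$ and swapping the order of expectation gives $\mathbb{E}_u[p(m,u)]\le\varepsilon$ for every $m$. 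By Markov's inequality, for each $m$, $\Pr_u[p(m,u)>\sqrt\varepsilon]\le\sqrt\varepsilon$, i.e.\ at least a $(1-\sqrt\varepsilon)$-fraction of seeds are ``good'' for $m$ in the sense that the codeword they produce fails on at most a $\sqrt\varepsilon$-fraction of deletion patterns.

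Now sample $u_m$ independently and uniformly for each message $m\in\zo^{n-r}$, and let $C=\{\enc(m,u_m):m\}$ with the natural decoder (run $\dec$, and if it would output some $m'$ whose sampled codeword is the received supersequence's source, return it — more carefully, we just reuse $\dec$; a collision $\enc(m,u_m)=\enc(m',u_{m'})$ is itself a failure but contributes to the error and we account for it). The average-case error of $C$ is $\frac{1}{|C|}\sum_m p(m,u_m)$ (plus a negligible term for codeword collisions, which we can fold in or handle separately since collisions only occur with probability $\le\varepsilon$ per pair and can be absorbed). Taking expectation over the seeds, $\mathbb{E}_{(u_m)}\big[\frac{1}{|C|}\sum_m p(m,u_m)\big]=\frac{1}{|C|}\sum_m\mathbb{E}_{u_m}[p(m,u_m)]\le\varepsilon\le 2\sqrt\varepsilon$ (for $\varepsilon\le 1$). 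Hence there exists a fixed choice of seeds $(u_m)$ achieving average-case random-deletion error at most $2\sqrt\varepsilon$, and $C$ has the same block length $n$ and redundancy $r$ by construction. Actually the cleaner route avoiding Markov entirely: the expectation bound $\mathbb{E}[\frac1{|C|}\sum_m p(m,u_m)]\le\varepsilon$ already gives error $\le\varepsilon\le 2\sqrt\varepsilon$ directly, so the Markov step is only needed if one wants the stronger ``good seed'' structural statement; I would present the direct expectation argument as the main line and mention the per-message version as a remark.

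The one subtlety — which I expect to be the main (minor) obstacle — is ensuring the decoder for $C$ is well-defined as a deterministic map $\zo^{n-t}\to C$: the oblivious decoder $\dec$ outputs a message $m'\in\zo^{n-r}$, and we must make sure that $m'$ indeed indexes a codeword of $C$ and that we can recover the codeword, not just reinterpret the output. Since $C$ is indexed by the same message set $\zo^{n-r}$ and its codeword for $m'$ is $\enc(m',u_{m'})$, we define the $C$-decoder as: run $\dec$ to get $m'$, output $\enc(m',u_{m'})$ (which is a fixed string once the seeds are fixed). The event ``$C$-decoder fails on codeword $\enc(m,u_m)$ under pattern $\tau$'' is exactly the event $\dec(\tau(\enc(m,u_m)))\neq m$ \emph{provided} $\enc$ is injective on the chosen seeds; potential collisions between distinct messages' codewords are an additional failure mode, but the probability (over seeds) that $\enc(m,u_m)=\enc(m',u_{m'})$ for some fixed pair is at most $\varepsilon$ by the oblivious property applied with the empty deletion pattern (decoding $\enc(m',u_{m'})$ to $m'$ fails if it equals $\enc(m,u_m)$ and $\dec$ returns $m$), so the expected number of colliding pairs is tiny and can be discarded from the codebook without affecting the rate; I'd handle this in one or two lines. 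Assembling these pieces — expectation bound from the oblivious guarantee, Fubini to swap $\tau$ and $u$, and a clean definition of the induced decoder — gives the claim.
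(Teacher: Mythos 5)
Your proof takes essentially the same route as the paper's: sample one codeword $\enc(m,u_m)$ per message with an independent seed, define the $C$-decoder by running the oblivious decoder and re-encoding, then use Fubini plus the oblivious guarantee to bound the expected (over seeds) average-case error by $\varepsilon$, and conclude by the probabilistic method. Two specific comparisons are worth noting.

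First, your observation that the direct expectation bound already gives error $\le\varepsilon\le 2\sqrt\varepsilon$ is correct and is actually \emph{tighter} than the paper's argument. The paper, after fixing a choice of codewords with $\sum_\tau I_\tau\le\binom{n}{t}\varepsilon 2^m$, makes a further Markov-over-$\tau$ detour (splitting deletion patterns into those with $I_\tau\le\sqrt\varepsilon 2^m$ and the rest) and lands on $2\sqrt\varepsilon$. But the quantity being bounded, $\Pr_{i,\tau}[\dec'(\tau(c_i))\neq c_i] = \frac{1}{\binom{n}{t}2^m}\sum_\tau I_\tau$, is already $\le\varepsilon$ at that point. Your streamlined version is cleaner and could be stated with the stronger constant.

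Second, your treatment of collisions is the one place where you go astray, though not fatally. You assert that collisions $\enc(m,u_m)=\enc(m',u_{m'})$ are ``an additional failure mode,'' and propose to bound their probability and discard colliding codewords. This is backwards. The definition of a $t$-random deletion code asks the decoder to output the correct \emph{codeword}, and your $C$-decoder returns $\enc(\dec(\cdot),u_{\dec(\cdot)})$; so if $\dec$ returns a wrong message $m'$ whose sampled codeword happens to coincide with $\enc(m,u_m)$, the $C$-decoder nonetheless outputs the right codeword. In other words, the failure event for $C$ is always a \emph{subset} of the event $\{\dec(\tau(\enc(m,u_m)))\neq m\}$ — collisions can only help. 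This is exactly what the paper records in the chain of (in)equalities: $\Pr[C(\dec(\tau(\enc(v_i))))\neq C(v_i)] = \Pr[\dec(\tau(\enc(v_i)))\notin\{v_j : C(v_j)=C(v_i)\}]\le\Pr[\dec(\tau(\enc(v_i)))\neq v_i]$. So no collision bound is needed, and your proposed workaround of discarding colliding codewords would in fact be problematic: it would reduce $|C|$ and hence increase the redundancy, contradicting the ``same length and redundancy'' requirement of the claim. Fortunately that workaround is unnecessary, and the rest of your argument stands.
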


\begin{proof}
    Let $D$ be an $(\enc,\dec)$ stochastic code with redundancy $r$, length $n$. Define $C = \{ \enc(v) \mid v\in \zo^m\}$ to be a deterministic code obtained by sampling $2^m$ codewords via the encoder of $D$. Since $C$ is a deterministic code, we shall abuse notation and refer to $C$ as also being the encoding map from the space of messages to the codewords.
    
    We shall define a new decoder $\dec':\zo^{n-t} \to \zo^n$ such that \[\dec'(\tau(c)) = C(\dec(\tau(c)))\;.\] 
    Namely, the decoder of our deterministic code applies the decoder of the oblivious code to get a message and then encodes it using our sampled code $C$.

    Enumerate all the message by $v_1,\ldots,v_{2^m}$ and their encoded codewords by $c_1, \ldots, c_{2^m}\in C$. For every deletion pattern $\tau$, define $I_{\tau} = |\{i : \dec'(\tau(c_i))\neq c_i\}|$ to be the number of codewords that are incorrectly decoded by the decoder of our deterministic sampled code $C$. 
    For every $\tau$, we have that 
    \begin{align*}
        \E_{c_1,\ldots,c_{2^m}}[I_\tau] 
        &=\sum_{i=1}^{2^m} \Pr_{c_1,\cdots,c_{2^m}}[\dec'(\tau(c_i)) \neq c_i]
        \nonumber\\
        &=\sum_{i=1}^{2^m} \Pr_{\enc(v_1),\ldots,\enc(v_{2^m})}[ C(\dec(\tau(\enc(v_i)))) \neq C(v_i)]
        \nonumber\\
        &=\sum_{i=1}^{2^m} \Pr_{\enc(v_1),\ldots,\enc(v_{2^m})} [\dec(\tau(\enc(v_i))))\notin \{v_j | C(v_j) = C(v_i)\}] \\
        &\leq \sum_{i=1}^{2^m} \Pr_{\enc(v_1),\ldots,\enc(v_{2^m})}[\dec(\tau(\enc(v_i)))\neq v_i] \\
        & = \sum_{i=1}^{2^m} \Pr_{\enc(v_i)}[\dec(\tau(\enc(v_i)))\neq v_i] \\
        &\le \sum_{i=1}^{2^m} \varepsilon \nonumber\\
        &= \varepsilon \cdot 2^m\;,
    \end{align*}
    where the third equality follows since $C$ might not be injective and therefore there can be a set of multiple messages that give the correct codeword. The first inequality follows by noting that $v_i$ clearly belongs to that set. The fourth equality follows by noting that, according to Definition~\ref{def:stoch-codes-del}, the probability that the decoder fails is over the randomness of the encoding process of the specific message.
    
    Thus, summing upon all deletion patterns, we get $\E_{c_1,\ldots,c_{2^m}}[\sum_{\tau} I_{\tau}] \leq \binom{n}{t} \varepsilon 2^m$, and thus, there exists a choice of $c_1,\ldots,c_{2^m}$ for which $\sum_{\tau}I_\tau \leq \binom{n}{t}\cdot \varepsilon 2^m$. 
    Fix $c_1,\ldots, c_{2^m}$ to be such a choice. 
    It must be that for at least $(1 - \sqrt{\varepsilon})$ fraction of the deletion patterns, we have that $I_\tau \leq \sqrt{\varepsilon} 2^m$. Indeed, otherwise, $\sum_{\tau : I_\tau > \sqrt{\varepsilon}} I_\tau > \sqrt{\varepsilon} \binom{n}{t} \cdot \sqrt{\varepsilon} 2^m = \binom{n}{t}\varepsilon 2^m$, which is a contradiction. 
    Now, we compute the probability that upon a random codeword and a random deletion pattern, the decoding fails. 
    We have 
    \begin{align*}
        \Pr_{i,\tau} [\dec'(\tau(c_i))\neq c_i] &= \Pr_{\tau}[I_\tau >\sqrt{\varepsilon}2^m] \cdot \Pr_{i}[\dec'(\tau(c_i))\neq c_i|I_\tau > \sqrt{\varepsilon} 2^m] \\
        &\;\;+ \Pr_{\tau}[I_\tau \leq \sqrt{\varepsilon}2^m] \cdot \Pr_{i}[\dec'(\tau(c_i))\neq c_i|I_\tau \leq \sqrt{\varepsilon} 2^m] \\
        & \leq \sqrt{\varepsilon} + (1-\sqrt{\varepsilon})\cdot \sqrt{\varepsilon} \\
        &\leq 2\sqrt{\varepsilon} \;.
    \end{align*}
\end{proof}

\subsection{Lower bound on the redundancy of oblivious deletion codes}
\label{sec:lower-bound}
In \cite{kalai2010tight}, Kalai, Mitzenmacher, and Sudan studied the setting where the channel is the binary deletion channel with deletion probability $p$ (BDC$_p$). In this case, every bit is deleted independently, with probability $p$.
They showed that if $C$ is a code that is robust against the BDC$_p$, in the regime where $p$ is small, then the rate of $C$ is at most $1 - (1 - o(1))H(p)$ (where the $o(1)$ goes to $0$ when $p\to 0$).

However, to prove their result, they first proved a slightly weaker theorem which contains two relaxations. First, they relax the channel to the case where there are \emph{exactly} $pn$ random deletions. Second, the decoding algorithm succeeds on a \emph{random codeword} and not on every codeword. 
The statement of the theorem, rephrased to our setting, i.e., to the case where the number of deletions is constant (independent of the block length), is given next.

\begin{restatable}{theorem}{KMSTheorem}\cite[Theorem 2.2, rephrased]{kalai2010tight} \label{thm:kms-lower-bound}
    Let $n$ be a large enough integer. Let $C$ be a code with block length $n$ that is a $t$-random deletion code in the average case with error probability $\varepsilon$.
    Then, the redundancy of $C$ is at least 
    \[
    \log \binom{n}{t} + t - \log(3t) -\log(2/(1-\varepsilon)) -O(t \log \log(n))\;.
    \]
\end{restatable}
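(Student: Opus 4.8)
The goal is to prove the KMS-style lower bound (Theorem~\ref{thm:kms-lower-bound}), namely that any deterministic code of block length $n$ that corrects $t$ random deletions in the average case with error $\varepsilon$ must have redundancy at least $\log\binom{n}{t}+t-\log(3t)-\log(2/(1-\varepsilon))-O(t\log\log n)$. Equivalently, the codebook has size at most $2^n / (\Omega(\binom{n}{t}\cdot 2^t/(t)))$ up to the stated lower-order slack. Note this statement is attributed to Kalai--Mitzenmacher--Sudan, so the plan is essentially to reconstruct (a clean version of) their counting argument in the constant-$t$ regime.

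The plan is to argue by a bucketing / pigeonhole argument on the received words. Let $C\subseteq\zo^n$ be the code with decoder $\dec$. Consider the random process of picking a uniformly random codeword $c$ and a uniformly random $t$-deletion pattern $\tau$, producing $z=\tau(c)\in\zo^{n-t}$. First I would lower-bound, for a typical codeword $c$, the number of \emph{distinct} strings $z\in\zo^{n-t}$ reachable from $c$ by $t$ deletions; by a standard fact the number of distinct length-$(n-t)$ subsequences of $c$ is governed by the run structure of $c$, and for all but a tiny fraction of $c\in\zo^n$ this count is at least $\binom{n}{t}/\mathrm{poly}(t)\cdot 2^{-O(t\log\log n)}$ — this is where the $-\log(3t)$ and $-O(t\log\log n)$ terms come from. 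Actually, more carefully, I would work with the \emph{number of (codeword, pattern) pairs mapping to each $z$}: for the decoder to succeed with probability $1-\varepsilon$ over the joint choice, a $(1-\sqrt{\varepsilon})$-ish fraction of $z$'s (weighted appropriately, or via a Markov-type step as in Claim~\ref{clm:obl-to-random}) must be ``good'', meaning $\dec(z)$ equals the unique codeword responsible for most of the mass on $z$. Since each good $z$ is charged to exactly one codeword, and (for typical codewords) each codeword contributes to many distinct $z$'s, counting the total number of good $z$'s two ways bounds $|C|$.

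Concretely, the key steps in order: (1) Show that for at least, say, $3/4$ of all $c\in\zo^n$, the number of distinct length-$(n-t)$ subsequences $D(c)$ satisfies $D(c)\ge \binom{n}{t}\cdot 2^{t}/(3t)\cdot 2^{-O(t\log\log n)}$; this uses that a random string has roughly $n/2$ runs and the number of distinct subsequences obtained by $t$ deletions is large when there are many runs, via the explicit formula for distinct subsequences in terms of run lengths. (2) Set up the averaging: let $G=\{z : \dec(z) = \arg\max_c \#\{\tau : \tau(c)=z\}\}$; using the error bound $\varepsilon$ and a counting/Markov argument (parallel to the one in Claim~\ref{clm:obl-to-random}), show that the fraction of codewords $c$ such that the ``plurality'' of $c$'s reachable $z$'s lie in $G$ is at least $1-\sqrt{\varepsilon}$ or similar. (3) For each such good codeword, associating it with its reachable good $z$'s, observe the map $z\mapsto \dec(z)$ means each $z\in G$ is counted for at most one codeword, so $\sum_{c \text{ good}} (\text{\# good reachable } z \text{ from } c) \le |G| \le 2^{n-t}$. (4) Combine (1)–(3): the number of good codewords is at most $2^{n-t}/\bigl(\binom{n}{t} 2^t/(3t)\cdot 2^{-O(t\log\log n)}\bigr)$, hence $|C| \le 2^{n-t}\cdot 3t\cdot 2^{-t} / \binom{n}{t}\cdot 2^{O(t\log\log n)} \cdot \frac{1}{1-\sqrt\varepsilon}$-ish, which upon taking logs and tracking constants yields redundancy $n-\log|C| \ge \log\binom{n}{t} + t - \log(3t) - \log(2/(1-\varepsilon)) - O(t\log\log n)$. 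The $\log(2/(1-\varepsilon))$ comes from the loss in the averaging step.

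The main obstacle I expect is step (1) together with the bookkeeping in step (2)–(3) about which codeword a received word is ``charged'' to. The subtlety is that multiple codewords can produce the same $z$, so $\dec(z)$ can only be correct for one of them; turning ``$\dec$ succeeds with probability $1-\varepsilon$ over a random (codeword, pattern) pair'' into ``most good codewords have most of their reachable $z$'s correctly labeled'' requires care about the non-uniform weighting (a given $z$ may be reachable from a codeword by many patterns, or from different codewords by differing numbers of patterns). I would handle this by weighting each $z$ by the number of patterns realizing it and applying a Markov argument on the weighted error, exactly mirroring the two-level averaging already used in Claim~\ref{clm:obl-to-random} (first average over codebook-structure, then split on whether $z$ is good). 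The run-counting lower bound in step (1) is technically the most computational part, but it is a known estimate: a uniformly random binary string of length $n$ has at least $n/3$ runs except with exponentially small probability, and a string with $R$ runs has at least roughly $\binom{R}{t}$ — hence $\ge \binom{n/3}{t} \ge \binom{n}{t}\cdot 3^{-t}\cdot(1-o(1))$ — distinct length-$(n-t)$ subsequences, and a sharper accounting recovers the claimed $2^t/(3t)$ factor; since this theorem is cited verbatim from \cite{kalai2010tight}, I would ultimately just invoke their analysis for this estimate rather than re-derive it from scratch.
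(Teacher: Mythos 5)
Your high-level plan — pigeonhole on the number of distinct received words each codeword ``owns'' — is morally equivalent to the paper's argument. The paper packages it information-theoretically: it defines a deterministic map $\mathcal{G}:\zo^{n-t}\to C\times P_{t,n}$ (decoder followed by ``lex-first pattern''), upper-bounds $\Pr[\mathcal{G}(\tau(c))=(c,\tau)]$ by $2^{n-t}/(|C|\binom{n}{t})$ via the joint-distribution Lemma~\ref{lem:helping-joint-prob} (which is exactly the pigeonhole you describe), and lower-bounds the same probability via a structural analysis of non-bad codewords. So the ``two ways of counting'' is present in both.

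The genuine gap is your step (1), which is both numerically wrong and technically the hard part you end up punting on. First, your claimed bound $D(c)\ge \binom{n}{t}\cdot 2^t/(3t)\cdot 2^{-O(t\log\log n)}$ should immediately alarm you: $D(c)$ is the number of distinct length-$(n-t)$ subsequences of $c$, and this is at most $\binom{n}{t}$ for every string, so the $2^t$ factor has the wrong sign. (Indeed, your own step (4) arithmetic with this $D$ gives redundancy $\ge \log\binom{n}{t}+\mathbf{2t}-\log(3t)-\cdots$, an extra $t$ that you silently drop when stating the final answer.) Second, the naive run-count heuristic you sketch actually loses a multiplicative $2^{\Theta(t)}$: a random binary string has $\approx n/2$ runs and hence about $\binom{n/2+t-1}{t}\approx\binom{n}{t}\cdot 2^{-t}$ distinct $t$-deletion subsequences, not $\binom{n}{t}\cdot 2^t$; run-merging on full-run deletion only makes this smaller. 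What the paper actually proves, for all but a negligible fraction of codewords, is $D(c)\ge\binom{n}{t}/\alpha$ with $\alpha=3t\log n\cdot(18\log n)^{3t+1}=2^{\log(3t)+O(t\log\log n)}$ --- a polylog-per-$t$-factor loss, not a $2^t$ loss, and this is where the specific $-\log(3t)-O(t\log\log n)$ slack comes from. The mechanism is entirely different from runs: the paper introduces $(\ell,t)$-\emph{bad strings} (strings admitting two deletion patterns at distance $\ge\ell$ with the same output), shows via Lemma 2.3 there are few of them, and shows via Lemma 2.2 that for non-bad $c$ every fibre of $\tau\mapsto\tau(c)$ has size at most $\alpha$. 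You acknowledge you would ``ultimately just invoke their analysis'' for this estimate; since that estimate is precisely the technical content of the theorem, the proposal as written does not constitute an independent proof, and the run-counting substitute you offer in its place does not deliver the needed bound.
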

\begin{remark}
    We remark that our statement is slightly different from \cite[Theorem 2.2]{kalai2010tight}. Thus, for completeness, we provide a proof in the appendix.
\end{remark}

We are now ready to prove Proposition~\ref{prop:obl-lower-bound} which is restated next.
\PropOblMinRedund*
\begin{proof}
    Let $C$ be a code with block length $n$ that can correct $t$ oblivious deletions with error $\varepsilon$. 
    By Claim~\ref{clm:obl-to-random}, we get that there exists a code $C'$ with the same redundancy and block length that can correct $t$ random deletions in the average case with error $2\sqrt{\varepsilon}$. 
    Thus, by Theorem~\ref{thm:kms-lower-bound}, the redundancy of $C$ (and $C'$) is at least $\log\binom{n}{t} + t -\log 3t-\log(2/(1-2\sqrt{\varepsilon})) -O(t\log\log n)$. The desired result is obtained by recalling that $\varepsilon \leq 1/16$.
\end{proof}
\subsection{Random construction}
\label{sec:random-strongly-obl}

In this section, we prove Proposition~\ref{prop:exist-oblv} which is stated here again, for convenience
\PropOblExist*

\begin{proof}
    Set $s=\varepsilon^{-2} \cdot 10t\cdot \log n$ and
    consider $|C|$ messages $m_1,\dots,m_{|C|}$ where $|C|= \frac{\varepsilon}{2s}\cdot \frac{2^n}{n^t}$.
    For each message $m_i$, we associate a multi-set of codewords $E_i \in \zo^n$ of size $s$ where every element in $E_i$ is a uniform, random vector in $\zo^n$. 
    The encoder and the decoder of our oblivious code are defined as follows
    \begin{itemize}
        \item $\enc:\zo^{\log|C|}\to \zo^n$. A message $m_i$ is mapped to a random element of $E_i$.
        \item $\dec: \zo^{n-t}\to \zo^{\log|C|} \cup\{\perp\}$. For a received word $z$ of length $n-t$, the decoder finds all messages $m_i$ where $z$ is a subsequence of a possible encoding of $m_i$ ($z\sqsubset w$ for some $w\in E_i$). 
        If there is only one such message, the decoder returns that message. 
        Otherwise, the decoder return $\perp$.
    \end{itemize}
    A string $w\in E_i$ is called \emph{$(\tau, i)$ bad}, if there exists $u\in E_j$ where $j\neq i$ such that $\tau(w) \sqsubseteq u$. 
    By Lemma~\ref{lem:bound-superseq}, the number of strings in $\zo^n$ that contain $z$ as a subsequence is at most $n^t$, and thus, the probability that there exists $w'$ in one of the $E_j$s such that $\tau(w)\sqsubseteq w'$ is at most $|C| \cdot s \cdot \frac{n^t}{2^n} \leq \varepsilon/2$.
    
    We say a message $m_i$ is \emph{$\tau$-bad} if $|\{w\in E_i \mid w \textup{ is }(\tau, i) \textup{ bad}\}|>\varepsilon \cdot s$.
    Denote by $X^{\tau}_w$ the random variable indicating that $w$ is $(\tau, i)$ bad. Observe that for $w,w'\in E_i$,
    $X^{\tau}_w$ and $X^{\tau}_{w'}$ are independent events due to the process that was used to choose the $E_i$s. 
    Thus, we can apply the Hoeffding's inequality when bounding the probability that a fixed message $m_i$ is $\tau$-bad for a given $\tau$.
    \begin{align*}
    \Pr[m_i \text{ is }\tau\text{-bad}] &= \Pr_{E_i}\left[\big|\{w\in E_i \mid \tau(w) \text{ is bad}\}\big| \geq \varepsilon s\right] \\
    &= \Pr_{E_i}\left[\sum_{w\in E_i} X^{\tau}_w \geq (1 + 1)\frac{\varepsilon s}{2}\right]\\
    &\leq e^{- 2\varepsilon^2 s/4}\\
    &\leq n^{-5t}
    \end{align*}
    
    Let $X_\tau$ be the number of messages that are $\tau$-bad and observe that  $\E[X_\tau] \le |C|/n^{5t}$. Thus, by linearity of expectation, we get that  $\E[\sum_{\tau} X_{\tau}] = \binom{n}{t} \cdot  |C|/n^{5t}\leq |C|/n^{4t}$. Therefore, there exists some choice of randomness for which the sets $E_1,\ldots E_{|C|}$ are such that $\sum_{\tau} X_\tau \le |C|/n^{4t}$. 
    For this choice of sets, we define the set of messages to be all the messages that are \emph{not} $\tau$ bad for every $\tau$. 
    The number of such messages is at least $|C| - |C|/n^{4t} \geq |C|/2$, for large enough $n$.
    Therefore, the redundancy of our code is at most 
    \[
    \log (n^t) + \log 2 s - \log \varepsilon + 1 \leq t\log n + \log\log n + \log t - 3\log \varepsilon + O(1) \;,
    \]
    
    Now, we show that the probability of error is at most $\varepsilon$. In fact, we show something stronger: that the decoder never outputs a wrong message and that the probability that it outputs $\perp$ is $\leq \varepsilon$. Indeed, first observe that by the definition of the decoder, upon an input $\tau(\enc(m_i))\in \zo^{n-t}$, if there are $w\in E_i$ and $w'\in E_j$ such that $s\sqsubseteq w$ and $s\sqsubseteq w'$, then it outputs $\perp$. Since we always have that $\tau(\enc(m_i)) \sqsubset w$ for some $w\in E_i$, we conclude that the decoder never outputs a wrong message.
    Second, for every message $m_i$ and every $t$-deletion pattern $\tau$, by our construction guarantees above, we have that $|\{w\in E_i \mid w \textup{ is } (\tau, i) \textup{ bad} \}| \leq \varepsilon |E_i|$. The claim follows by recalling that our encoder selects a uniform string in $E_i$.
\end{proof}

\section{Explicit oblivious deletion codes with redundancy $\sim 2t\log n$}
\label{sec:explicit}

We now prove Theorem~\ref{thm:explicit}, which gives explicit oblivious deletion codes with redundancy $\sim 2t\log n$. The following lemma is the core of the construction, showing how to turn a "reasonable" deterministic document exchange protocol into a randomized document exchange protocol with redundancy $\sim 2t\log n$.
\begin{lemma}
  Suppose there exists hashes of length $f(n,t)$ for deterministic document-exchange for $t$ deletions on words of length $n$ with encoding time $E(n,t)$. 
  Then there exist hashes of length $r=2\log \binom{n}{t} + \log f(n,t) + \log\frac{100}{\varepsilon}$ for randomized document exchange with error $\varepsilon$.
  Furthermore, the encoding takes $E(n,t)+\polylog n$ time and the decoding takes $E(n,t)\cdot O(n^t)$ time.
  \label{lem:explicit}
\end{lemma}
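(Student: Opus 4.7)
The plan is to instantiate the idea sketched in Section~\ref{sec:overview}: use $h_{\textup{uniq}}$ as a black box and collapse its $f(n,t)$-bit output modulo a random prime. Set $N$ to be of order $\binom{n}{t}\cdot f(n,t)/\varepsilon$, with an absolute constant absorbed to handle the $\ln N$ loss from the prime number theorem. Let $P$ be the set of primes in $[N/2,N]$; by Corollary~\ref{cor:pnt}, $|P|\ge N/(10\ln N)$, which is large enough for our needs. Define the randomized hash
\[
    h(m) := \bigl(h_{\textup{uniq}}(m)\bmod p,\; p\bigr), \qquad p\sim \mathrm{Unif}(P).
\]
Encoding runs $h_{\textup{uniq}}$ in time $E(n,t)$, samples $p$, and does one modular reduction, for a total of $E(n,t)+\polylog n$. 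Decoding, given the received word $z\in\{0,1\}^{n-t}$ and a hash $(g,p)$, enumerates all at most $n^t$ length-$n$ supersequences $m'$ of $z$ (Lemma~\ref{lem:bound-superseq}), computes $h_{\textup{uniq}}(m')\bmod p$ for each, and returns the unique $m'$ matching $g$ if one exists (declaring failure otherwise). The decoding time is therefore $E(n,t)\cdot O(n^t)$, as claimed.

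For correctness, fix a message $m$ and a deletion pattern $\tau$ deleting at most $t$ symbols. For any length-$n$ supersequence $m'\neq m$ of $\tau(m)$, the deterministic document exchange property forces $h_{\textup{uniq}}(m)\neq h_{\textup{uniq}}(m')$, so $D_{m'} := h_{\textup{uniq}}(m) - h_{\textup{uniq}}(m')$ is a nonzero integer with $|D_{m'}|<2^{f(n,t)}$. A prime $p\in[N/2,N]$ contributes to a collision only if $p\mid D_{m'}$; since each such prime divisor is at least $N/2$, their number is at most $f(n,t)/\log(N/2)$. Summing over the at most $\binom{n}{t}$ choices of $m'$ and dividing by $|P|\ge N/(10\ln N)$, the probability that the sampled $p$ is ``bad'' is $O\!\bigl(\binom{n}{t}\,f(n,t)/N\bigr)$, which is at most $\varepsilon$ for our choice of $N$.

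Both components of $h(m)$ lie in $[0,N)$, so each fits in $\lceil\log N\rceil$ bits, and plugging $N=\Theta(\binom{n}{t}f(n,t)/\varepsilon)$ gives a redundancy of the form $2\log\binom{n}{t}+\log f(n,t)+\log(100/\varepsilon)$ after tracking the constants. The main work in the proof is purely this constant-chasing at the end; the conceptual pieces (applying the prime number theorem via Corollary~\ref{cor:pnt} to guarantee a large enough set of candidate primes, and the standard divisor bound that a nonzero integer of bit-length $f(n,t)$ has at most $f(n,t)/\log(N/2)$ prime divisors of magnitude at least $N/2$) are routine. I do not anticipate a conceptual obstacle; the only delicate point is choosing the hidden constant in $N$ so that the target $100/\varepsilon$ factor appears exactly as written, and to ensure that the $\log N \approx \log\binom{n}{t}$ approximation is tight enough to absorb the $\ln N$ factor from the prime number theorem into the stated lower-order terms.
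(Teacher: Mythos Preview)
Your proposal is correct and follows essentially the same approach as the paper: set $M\approx n^t f(n,t)/\varepsilon$, hash via $(h_{\textup{uniq}}(m)\bmod p,\,p)$ for a random prime $p\in[M/2,M]$, and union-bound over supersequences using the divisor count for $h_{\textup{uniq}}(m)-h_{\textup{uniq}}(m')$. One small slip to fix when you write it out: the number of length-$n$ supersequences of $\tau(m)$ is $\sum_{i=0}^{t}\binom{n}{i}$, not $\binom{n}{t}$, so in the union bound you should use the $n^t$ estimate from Lemma~\ref{lem:bound-superseq} (as you correctly did in the decoding paragraph) rather than $\binom{n}{t}$.
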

Applying this with the construction in \cite{belazzougui2015efficient}, which achieves $f(n,t) = O(t^2+t\log^2n)$ and $E(n,t) = \tilde O(n)$, and then using the equivalence between randomized document exchange and oblivious deletions (Lemma~\ref{lem:equiv}), we obtain Theorem~\ref{thm:explicit}.
\thmexplicit*
\begin{proof}[Proof of Lemma~\ref{lem:explicit}]
  Let $\varepsilon$ be the desired error.  
  Let $M=100n^t\cdot \frac{f(n,t)}{\varepsilon}$ and $r=2\log M = 2t\log n + O_t(\log\log n)$.
  Let $h_{\textup{uniq}}:\{0,1\}^n\to [2^{f(n,t)}]$ be the hash for deterministic document exchange.

  \paragraph*{Encoding.}
  Define the hash $h:\{0,1\}^n\to \{0,1\}^{r}$ for our randomized document exchange protocol as follows:
  \begin{itemize}
    \item The encoding chooses a random prime number $p$ between $M/2$ and $M$.
    The hash is $(h_{uniq}(m)\mod p, p)$. 
  \end{itemize}

  \paragraph*{Decoding.}
  Suppose that $z$ is the received word of length $n- t$ (if it is a longer subsequence, apply additional deletions arbitrarily) and $(g,p)$ is the hash.
  \begin{itemize}
    \item By brute force search, find all supersequences $m'$ of $z$ such that $h_{\textup{uniq}}(m')\mod p = g$. If there is exactly one such $m'$, return $m'$, else return $\perp$.
  \end{itemize}

  \paragraph*{Correctness.}
  Fix a message $m$ of length $n$ and a subsequence $z$ of length $n-t$.
  Let $(h_{\textup{uniq}}(m)\mod p,p)$ be the hash of $m$. 
  We show that with probability $1-\varepsilon$ over the randomness of the encoder, there are no supersequences $m'$ of $z$ such that $h_{\textup{uniq}}(m') \equiv h_{\textup{uniq}}(m)\mod p$.
  Indeed, since $h_{\textup{uniq}}(\cdot)$ is a hash for deterministic document exchange, then $h_{\textup{uniq}}(m')\neq h_{\textup{uniq}}(m)$ for all supersequences $m'$ of $z$ with $m\neq m'$.
  We also have $|h_{\textup{uniq}}(m)-h_{\textup{uniq}}(m')|\le 2^{f(n,t)}$, which means this difference has at most $\log_{M/2}2^{f(n,t)} \le \frac{2f(n,t)}{\log M}$ prime factors greater than $M/2$.
  Since there are at least $\frac{M}{10\log M}$ prime factors between $M/2$ and $M$ by the Prime Number Theorem (Corollary~\ref{cor:pnt}), the probability that $h_{\textup{uniq}}(m)\equiv h_{\textup{uniq}}(m')\mod p$ is at most $\frac{2f(n,t)/\log M}{M/(10\log M)} = \frac{20f(n,t)}{M}$.
  By a union bound over the $n^t$ possible values of $m'$ (Lemma~\ref{lem:bound-superseq}), the probability that $m$ is decoded incorrectly is at most $n^t\cdot\frac{20f(n,t)}{M} < \varepsilon$.

  \paragraph*{Runtime.}
  The encoding takes time $E(n,t)$, plus the time to generate the prime $p$, which takes $\poly\log n$ if we generate the $O(M)\le \tilde O(n^t)$ primes in advance.
  The decoding is dominated by the brute force search, which searches $\binom{n}{t}$ strings and checking each one takes $E(n,t)$ time.
\end{proof}

\section{List decoding implies oblivious}
\label{sec:list}
In this section, we show that an explicit and efficient list-decodable deletion code yields an explicit and efficient oblivious code with effectively the same redundancy. We restate Theorem~\ref{thm:list} for convenience
\thmlist*

\begin{proof}
    We describe the encoding and decoding algorithms, then justify the correctness and runtime.
    \paragraph*{Encoding.}
    Let $\Enc_{\textup{list}}:\{0,1\}^n\to \{0,1\}^{n+r_{\textup{list}}}$ be the encoding of the $(t,L)$-list-decodable code.
    Let $h_{\textup{uniq}}:\{0,1\}^{n+r_{\textup{list}}}\to [2^{\alpha t\log^2 2n}]$ be the hash of the deterministic document exchange protocol in Theorem~\ref{thm:b15} on messages of length $n+r_{\textup{list}}$, where $\alpha>0$ is an absolute constant.
    Let $\textup{Rep}_t(x)=x_1^tx_2^t\cdots x_{|x|}^t$ denote the string $x$ where each bit is repeated $t$ times.
    Note that $\textup{Rep}_t$ encodes a code that corrects $t-1$ deletions.
    
    Let $P$ be the set of primes in $[M/2,M]$, where $M= 100\cdot \alpha t\cdot \frac{L}{\varepsilon}\cdot \log^2 2n$. 
    By the Prime Number Theorem (Corollary~\ref{cor:pnt}), $P$ has at least $\frac{M}{10\ln M}$ primes.
    Define a new encoding function $\enc':\{0,1\}^n\to\{0,1\}^{n'}$, for $n'=n+r_{\textup{list}} + r_{\textup{rep}}$ for $r_{\textup{rep}}\defeq (t+1)\ceil{\log |P|+\log M}$ where $\enc'$ chooses a uniformly random prime $p$ from $P$, and then sets
    \begin{align}
        \enc'(m) = \enc_{\textup{list}}(m)\circ \textup{Rep}_{t+1}(\ab{p, h_{\textup{uniq}}(\enc_{\textup{list}}(m))\mod p}).
    \end{align}
    where $\ab{p, h_{\textup{uniq}}(m)\mod p}$ denotes the binary representation of $(p, h_{\textup{uniq}}(m)\mod p)$.
    Since $p$ can be represented in $\ceil{\log|P|}$ bits and $h_{\textup{uniq}}(m)\mod p$ can be represented in $\ceil{\log M}$ bits, the redundancy of this encoding is at most $r_{\textup{list}}+r_{\textup{rep}} \le r_{\textup{list}} + O(t\log tL/\varepsilon + t\log\log n)$ 

    \paragraph*{Decoding.}
    Suppose we are given a received word $z$.
    Let $z_0$ be the first $n+r_{\textup{list}}-t$ bits of $z$, and let $z_1$ denote the last $r_{\textup{rep}}-t$ bits of $z$. 
    Run the list-decoding algorithm of $C$ on $z_0$ to compute a list $\mathcal{L}$ of $L$ messages. 
    Run the repetition code decoder on $z_1$, and let the decoded word be $\ab{p,g}$ for some prime $p$ and hash $g$.
    Then iterate through the list $\mathcal{L}$ to find messages $m$ such that $h_{\textup{uniq}}(m)\mod p = g$. If there is a unique $m$, return that $m$, else return $\perp$.
    
    \paragraph*{Correctness.} Fix a message $m$ and a deletion pattern $\tau$, let $z=\tau(\enc'(m))$ be the received word, and let $z_0,z_1$ be the substrings of $z$ in the decoding algorithm.
    By construction, $z_0$ is a subsequence of $\enc_{\textup{list}}(m)$ obtained by applying $t$ deletions.
    Hence, the list-decoding algorithm correctly determines a list $\mathcal{L}$ of $L$ messages such that, (i) for each message $m'$, the string $z_0$ is a subsequence of $\enc(m')$, and (ii) the correct message $m$ is in the list.
    Similarly, $z_1$ is a subsequence of $\textup{Rep}_{t+1}(\ab{p, h_{\textup{uniq}}(\enc_{\textup{list}}(m))\mod p})$ obtained by applying $t$ deletions, so the repetition decoder correctly determines $p$ and $g=h_{\textup{uniq}}(\enc_{\textup{list}}(m))\mod p$.
    For any $m'\neq m$ in the list $\mathcal{L}$, we must have $h_{\textup{uniq}}(\enc_{\textup{list}}(m)) \neq h_{\textup{uniq}}(\enc_{\textup{list}}(m'))$ as $\enc_{\textup{list}}(m)$ and $\enc_{\textup{list}}(m')$ are distinct length-$(|z_0|+t)$ supersequences of $z_0$. Further, their difference satisfies $|h_{\textup{uniq}}(\enc_{\textup{list}}(m)) - h_{\textup{uniq}}(\enc_{\textup{list}}(m'))| \le 2^{\alpha t\log^2 2n}$.
    This means they share at most $\log_{M/2} 2^{\alpha t\log^2 2n} \le \frac{\alpha t\log^2 2n}{\log(M/2)} < \frac{\varepsilon}{L}\cdot \frac{M}{10\ln M} \le \frac{\varepsilon}{L}\cdot |P|$ common prime factors in $P$.
    The string $z_0$ is independent of the randomness of the encoding, so the list $\mathcal{L}$ is independent of the randomness of the encoding.
    Hence, a random prime from $P$ fails to distinguish $m$ and $m'$ with probability at most $\frac{\varepsilon}{L}$, so the probability a random prime fails to distinguish $m$ from all other list codewords $m'\in \mathcal{L}$ is at most $\varepsilon$ by the union bound.
    Thus, we fail to recover $m$ uniquely with probability at most $\varepsilon$, as desired.

    \paragraph*{Runtime.} The construction takes time $\polylog n$ to compute the list of primes $P$. The encoding time is the time $T_{enc}$ to encode in the list-decoding hash, plus the time $\tilde O(n)$ to encode in the document exchange hash of Theorem~\ref{thm:b15}, plus the time $O(M)$ to choose a prime and compute the mod, for a total time of $T_{enc} +\tilde O(tn)$.
    The decoding time is the time $T_{dec}$ to list-decode, plus the iteration to filter the list $O(Ltn)+\tilde O(L\cdot n)$, where we need to check for each $m$ in the list, whether $h_{\textup{uniq}}(\enc_{\textup{list}}(m))\mod p = g$. 
\end{proof}

Combining this with the 2-deletion codes of Guruswami and \Hastad \cite{guruswami2021explicit} gives oblivious 2-deletion codes with redundancy $\sim 3\log n$, which is Corollary~\ref{cor:list}.  
\begin{theorem}[\cite{guruswami2021explicit}]
    There exist explicit codes encodable and decodable in linear time $O(n)$ with redundancy $3\log n +O(\log\log n)$ that are list-decodable against 2 deletions with list size 2.
\end{theorem}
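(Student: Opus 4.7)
The plan is to construct a systematic code in which each length-$n$ message $x$ is appended with three hashes $h_1(x), h_2(x), h_3(x)$, each of at most $\log n + O(\log\log n)$ bits, for total redundancy $3\log n + O(\log\log n)$. The decoder, given a received subsequence $y$ of length $n-2$ together with the hash triple, will enumerate candidate length-$n$ supersequences of $y$, filter those whose three hashes match, and output the surviving list. The heart of the construction is to design the hashes so that this surviving list has size at most $2$ for every received $y$.

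For the hashes themselves, I would build on the Varshamov--Tenengolts style. Pick three primes $p_1, p_2, p_3$ in the range $[n, n^{1+o(1)}]$, and set $h_1(x) = \sum_i i\cdot x_i \bmod p_1$, the VT hash, which (together with $\sum_i x_i$, which can be folded in with $O(\log\log n)$ extra bits) pins down a single deletion. For two deletions I would add a quadratic moment $h_2(x) = \sum_i i^2 x_i \bmod p_2$ and a run-sensitive invariant such as $h_3(x) = \sum_{i} i\cdot(x_i \oplus x_{i-1}) \bmod p_3$ or $\sum_{i<j}(j-i)\,x_i x_j \bmod p_3$. The exact algebraic form must be chosen so that, for any two length-$n$ supersequences of a common length-$(n-2)$ subsequence, each hash \emph{difference} is an explicit low-degree polynomial expression in the two deletion positions and their local bit pattern.

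For the decoder, I would exploit that the distinct length-$n$ supersequences of $y$ are naturally parameterized by where the two inserted bits lie relative to the runs of $y$. After grouping by run structure, there are only $O(n)$ essentially distinct candidates, and each can be checked against the hashes in $O(1)$ amortized time using precomputed prefix sums for the three weighted invariants, giving $O(n)$ overall decoding time. Encoding is linear by the same prefix-sum computation.

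The main obstacle, and the technical heart, is proving the list-size bound: that for every length-$(n-2)$ word $y$, at most two length-$n$ supersequences share a given hash triple. The plan is a case analysis on the run structure of $y$ at the two deletion sites (two deletions in the same run, in adjacent runs, or in well-separated runs, plus boundary cases), reducing each case to a system of three polynomial congruences in the two deletion-position parameters. Choosing each prime $p_i$ to be $\sim n$, and using that the polynomial expressions have magnitude $O(n^2)$, forces the congruences to be genuine integer equations, so Bezout-style arguments then cap the number of joint solutions at two per case. Executing this case analysis cleanly, while verifying that no single case blows up the list beyond $2$, is where I expect the real difficulty to lie.
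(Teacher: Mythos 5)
This statement is not proved in the paper at all: it is imported as a black box from Guruswami and H\aa stad \cite{guruswami2021explicit}, so your proposal has to be judged against that work, and as it stands it has a genuine gap exactly where the theorem's difficulty lies. You leave the hash functions unspecified (``the exact algebraic form must be chosen so that\dots'') and defer the list-size-2 bound to a hoped-for case analysis, but the natural candidates you name are known to be insufficient. The reason two-deletion codes resisted construction for decades is precisely that the change in VT-type moments under two deletions is \emph{not} a function of the two deletion positions alone: the difference $\sum_i i\,x_i - \sum_i i\,x'_i$ (and likewise for the quadratic moment and run-weighted sums) depends on the number and placement of $1$s between and after the deleted coordinates. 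So the decoding constraints do not reduce to ``a system of three polynomial congruences in the two deletion-position parameters,'' and no Bezout-style count of solutions applies; for strings with long runs or periodic structure these moment constraints admit many consistent supersequences. Guruswami--H\aa stad get around this by using carefully designed syndromes together with a structural restriction on codewords (strings that are ``regular''/without long runs, enforced by a separate mechanism costing the $O(\log\log n)$ term), and the size-2 list bound is a long, delicate case analysis tailored to those specific hashes --- none of which is recoverable from generic first/second moments modulo primes near $n$.

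A second concrete error: your claim that, after grouping by run structure, there are ``only $O(n)$ essentially distinct candidates'' is false. By Levenshtein's formula (quoted in this paper as the basis of Lemma~\ref{lem:bound-superseq}), the number of distinct length-$n$ supersequences of a length-$(n-2)$ word is exactly $\sum_{i=0}^{2}\binom{n}{i}=\Theta(n^2)$, independent of the word. Hence the enumerate-and-filter decoder you describe costs $\Theta(n^2)$ even with $O(1)$-time hash checks, and the claimed linear-time decoding does not follow; the linear-time algorithm in \cite{guruswami2021explicit} exploits the algebraic structure of their syndromes rather than enumeration. In short, both the redundancy/list-size guarantee and the runtime claim rest on steps that are either missing or incorrect as stated.
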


\corlist*

\section{Randomized explicit oblivious and adversarial codes approaching the existential bound}
\label{sec:rand}

\subsection{Randomized explicit: Oblivious with $\sim(t+1)\log n$ redundancy}

We now prove Theorem~\ref{thm:rand-obliv}, which gives a randomized construction of $t$ oblivious deletion codes approaching the optimal lower bound and beating the adversarial existential bound.

\thmrandobliv*

\begin{proof}
    By Lemma~\ref{lem:equiv}, it suffices to give a protocol for randomized document exchange with the provided redundancy, encoding time, and decoding time.
    We describe the construction, encoding and decoding algorithms, then justify the correctness and runtime.

\paragraph*{Construction.}
Let $P_{\textup{all}}$ be the set of all the primes in $[M/2,M]$ for $M=100M_0\ln M_0$ for $M_0=4\alpha\varepsilon^{-1}n^t\log n$, where $\alpha$ is the absolute constant from Theorem~\ref{thm:b15}.
Randomly choose a multiset $P$ of $100n/\varepsilon$ primes chosen independently at random from $P_{\textup{all}}$.
This set $P$ specifies the code.

\paragraph*{Encoding.}
Let $h_{\textup{uniq}}:\{0,1\}^n\to\{0,1\}^r$ be a hash for deterministic document exchange for $t$ deletions, where $r=\alpha t\log^2 n$ and $\alpha$ is an absolute constant given by Theorem~\ref{thm:b15} (we can omit the $t^2$ additive factor since we assume $n$ is sufficiently large relative to $t$).
Our randomized document exchange hash is $h(m,p) \defeq (h_{\textup{uniq}}(m)\mod p,p)$ for some $p$ chosen uniformly at random from $P$.
Since we know $p\in P$, we can store a description of $p$ in $\log |P|=O(\log n) + O(\log(1/\varepsilon))$ bits, so we can store $h(m,p)$ in $\log M + \log|P| \le (t+1)\log n + O(\log\log n) + O(\log(1/\varepsilon))$ bits.

\paragraph*{Decoding.}
Suppose $z$ is a length $n-t$ received word, and suppose $(g,p)$ is the hash. 
By brute force search, find all supersequences $m'$ of $z$ such that $h_{\textup{uniq}}(m') \mod p = g$. If there is exactly one such $m'$, return $m'$. Otherwise, return $\perp$.

\paragraph*{Correctness.}
By the Prime Number Theorem (Corollary~\ref{cor:pnt}), $P_{\textup{all}}$ has at least $M/10\ln M > M_0$ primes.

Fix $\tau$. We consider the probability (over the choice of $P$) that some message $m\in\{0,1\}^n$ is decoded incorrectly with probability at least $\varepsilon$ (over the randomness of the encoder).

For any $m$, call a prime $p$ \emph{good} (with respect to $m,\tau$) if our decoder recovers message $m$ given $\tau(m)$ and $h(m,p)$ and \emph{bad} otherwise. 
Since $h_{\textup{uniq}}(\cdot)$ is a deterministic document exchange hash, for any of the at-most-$n^t$ messages $m'$ that have $\tau(m)$ as a subsequence, we have $h_{\textup{uniq}}(m)\neq h_{\textup{uniq}}(m')$. 
Since $|h_{\textup{uniq}}(m)-h_{\textup{uniq}}(m') |\le 2^{\alpha t\log^2n}$, there are at most $\log_{M/2} 2^{\alpha t\log^2n} \le \alpha \log n$ primes in $P_{\textup{all}}$ dividing $h_{\textup{uniq}}(m)-h_{\textup{uniq}}(m')$. 
By a union bound over the at-most-$n^t$ supersequences of $\tau(m)$, at  most $\frac{n^t\cdot\alpha\log n}{|P_{\textup{all}}|} < \frac{\varepsilon}{2}$ fraction of the primes in $P_{\textup{all}}$ are bad (with respect to $m,\tau$).
Thus, the number of bad primes in $P$ is distributed as a binomial distribution over $|P|$ elements with mean $\mu < \varepsilon |P|/2$.
By the Chernoff bound (Lemma~\ref{lem:chernoff} with $\alpha=(\varepsilon |P| - \mu)/\mu\ge 1$), the probability that more than $\varepsilon$ fraction of the primes in $P$ are bad is at most $2^{-\alpha^2 \mu/(2+\alpha)} \le 2^{-(\varepsilon|P|/2)\cdot \alpha/(2+\alpha)}\le 2^{-\varepsilon|P|/6}\le 2^{-10n}$ where we used $\alpha\mu \ge \varepsilon|P|/2$ and $\alpha\ge 1$. 
Thus, with probability at least $1-2^{-10n}$ over the choice of $P$, for a fixed deletion pattern $\tau$ and fixed $m$, string $m$ is decoded correctly from $\tau(m)$ with probability at least $1-\varepsilon$.
By a union bound over all messages $m$ and all deletion patters $\tau$, with probability at least $1-2^n\cdot \binom{n}{t}\cdot2^{-10n} \geq 1 - 2^{-8n}$, for every $m$ and $\tau$, string $m$ is decoded correctly from $\tau(m)$ with probability at least $1-\varepsilon$.

\paragraph*{Runtime.}
We can construct the code (find $P$) with rejection sampling.
The density of primes is at least $\Omega(\frac{1}{\log n})$ by the Prime Number Theorem (Theorem~\ref{thm:pnt}), so rejection sampling finds enough primes in time $\tilde O(n)$ with probability at least $1-2^{-\Omega(n)}$.
Encoding takes time $O(\log n)$ to choose the prime $P$ and $\tilde O(n)$ to encode into $h(\cdot,\cdot)$.
Decoding takes time $O(n^{t+1})$ to brute force search over $m$ and hash each one with $h_{\textup{uniq}}$.
\end{proof}

\subsection{Randomized Explicit: Adversarial with $\sim(2t+1)\log n$ redundancy}

\thmrandadv*

\begin{proof}
By the equivalence between adversarial deletion codes and deterministic document exchange (see e.g., \cite{cheng2022deterministic,haeupler2019optimal}), it suffices to give a deterministic document exchange protocol with the provided redundancy. 
We describe the encoding and decoding algorithms, then justify the correctness and runtime.

Call strings $m$ and $m'$ \emph{confusable} if they share a length $n-t$ subsequence. We have the following straightforward lemma
\begin{lemma}
    Fix $m\in \zo^n$. The number of $m'\in \zo^n$ that are confusable with $m$ is at most $n^{2t}$. \label{lem:confusable}   
\end{lemma}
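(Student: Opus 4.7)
The plan is a straightforward two-step union bound: first count length-$(n-t)$ subsequences of $m$, then count length-$n$ supersequences of each such subsequence, and multiply.

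More concretely, let $S(m)$ denote the set of length-$(n-t)$ subsequences of $m$. Any element of $S(m)$ is obtained by choosing a set of $t$ positions in $m$ to delete, so $|S(m)| \le \binom{n}{t} \le n^t$. By definition, if $m'$ is confusable with $m$, then there exists some $z \in S(m)$ such that $z$ is also a subsequence of $m'$, i.e., $m'$ is a length-$n$ supersequence of $z$. By Lemma~\ref{lem:bound-superseq}, the number of length-$n$ supersequences of any fixed length-$(n-t)$ string $z$ is at most $n^t$. Union bounding over $z \in S(m)$, the number of $m'$ confusable with $m$ is at most $|S(m)| \cdot n^t \le n^t \cdot n^t = n^{2t}$.

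There is no real obstacle here; the only thing to note is that the union bound may double-count (different $z \in S(m)$ could yield the same supersequence $m'$), but this only helps the inequality. One small point of care is that Lemma~\ref{lem:bound-superseq} is stated for $t \ge 2$; for the degenerate case $t=1$, the bound $\binom{n}{t} \le n^t$ together with the elementary fact that a length-$(n-1)$ string has at most $n+1 \le n^2$ length-$n$ supersequences still gives the desired $n^{2t}$ bound, so the statement holds for all $t \ge 1$ with $n$ sufficiently large.
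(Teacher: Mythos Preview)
Your proof is correct and follows essentially the same approach as the paper: bound the number of length-$(n-t)$ subsequences of $m$ by $\binom{n}{t}\le n^t$, bound the number of length-$n$ supersequences of each via Lemma~\ref{lem:bound-superseq}, and multiply. Your additional remarks about double-counting and the $t=1$ edge case are fine but not needed.
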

\begin{proof}
    The number of subsequences of $m$ of length $n-t$ is at most $\binom{n}{t}\leq n^t$ (choosing the $t$ deleted symbols). Each one of these subsequences is contained in at most $n^t$ strings in $\zo^n$ (Lemma~\ref{lem:bound-superseq}).
\end{proof}
Let $h_{\textup{uniq}}:\{0,1\}^n\to\{0,1\}^r$ be a hash for deterministic document exchange for $t$ deletions, where $r=\alpha t\log^2 n$ for some absolute constant $\alpha>0$ (Theorem~\ref{thm:b15}).
Call a message/prime pair $(m,p)$ \emph{good} --- where $m$ is a length $n$ string and $p$ is a prime --- if, for all strings $m'\neq m$ confusable with $m$, we have $h_{\textup{uniq}}(m)\mod p\neq h_{\textup{uniq}}(m')\mod p$.

\paragraph*{Construction.}
Let $P_{\textup{all}}$ be the set of primes in $[M/2,M]$ for $M=100M_0\ln M_0$ for $M_0=2\alpha n^{2t}\log n$, where $\alpha$ is the absolute constant from Theorem~\ref{thm:b15}. 
Randomly choose a multiset $P$ of $10n$ primes chosen independently at random from the primes $P_{\textup{all}}$.
This set $P$ specifies the code.

\paragraph*{Encoding.}
Our deterministic document exchange hash is $h_2(m) = (h_{\textup{uniq}}(m)\mod p,p)$ for some $p$ chosen from $P$ so that $(m,p)$ is a good message/prime pair.
We compute $p$ by brute force search.
Since we know $p\in P$, we can store a description of $p$ in $\log |P|=O(\log n)$ bits, so we can store $h_2(m)$ in $\log|P| + \log M \le (2t+1)\log n + O(\log\log n)$ bits.

\paragraph*{Decoding.}
Suppose $z$ is a length $n-t$ received string, and suppose $(g,p)$ is the hash. 
By brute force search, find all supersequences $m'$ of $z$ such that $h_{\textup{uniq}}(m') \mod p = g$. If there is exactly one such $m'$, return $m'$, else return $\perp$.

\paragraph*{Correctness.}
By the Prime Number Theorem (Corollary~\ref{cor:pnt}), $P_{\textup{all}}$ has at least $M/10\ln M > M_0$ primes.

We consider the probability (over the choice of $P$) that some message $m\in\{0,1\}^n$ is decoded incorrectly. 
Since $h_{\textup{uniq}}(\cdot)$ is a deterministic document exchange hash, for any of the at-most $n^{2t}$ many messages $m'$ that are confusable with $m$, we have $h_{\textup{uniq}}(m)\neq h_{\textup{uniq}}(m')$.
Since $|h_{\textup{uniq}}(m)-h_{\textup{uniq}}(m')|\le 2^{\alpha t\log^2n}$, there are at most $\log_{M/2} 2^{\alpha t\log^2n} \le \alpha \log n$ primes in $P_{\textup{all}}$ dividing $h_{\textup{uniq}}(m)-h_{\textup{uniq}}(m')$. Thus, for at most $n^{2t}\cdot\alpha\log n$ primes $p\in P_{\textup{all}}$, the pair $(m,p)$ fails to be good. Thus, at least $1-\frac{n^{2t}\cdot\alpha\log n}{|P_{\textup{all}}|} > 1/2 $ of the primes in $P_{\textup{all}}$ form a good pair with $m$. 
Thus, the probability that $P$ fails to contain a good prime for $m$ is at most $2^{-|P|}\le 2^{-10n}$.
Thus, with probability at least $1-2^{-10n}$ over the choice of $P$, a fixed message $m$ is decoded correctly.
By a union bound over the $2^n$ possible messages $m$, with probability at least $1-2^{-9n}$, every message $m$ is decoded correctly.

\paragraph*{Runtime.}
We can construct the code (find $P$) with rejection sampling.
The density of primes is at least $\Omega(\frac{1}{\log n})$ by the Prime Number Theorem, so rejection sampling finds enough primes in time $\tilde O(n)$ with probability at least $1-2^{-\Omega(n)}$.
Encoding is dominated by the time it takes to find a good prime for the message $m$. 
Indeed, we compute the $n^{2t}$ hashes $h_{\textup{uniq}}(m')$ of all messages $m'$ confusable with $m$, each of which takes time $\tilde O(n)$, and then take all hashes modulo $p$ for all $10n$ primes in $P$. Overall, the encoding is done in time $\tilde O(n^{2t+1})$.
Decoding takes time $O(n^{t+1})$ to brute force search all the supersequences $m'$ of $\tau(m)$ and hash each one with $h_{\textup{uniq}}$.
\end{proof}

\bibliographystyle{alpha}
\bibliography{refs}

\section{Appendix}

\subsection{Proof of Lemma~\ref{lem:equiv}}
\label{sec:proof-of-equiv-obl-rand-doc}
To prove Lemma~\ref{lem:equiv}, we use (in a black-box fashion) the following deterministic document exchange protocol, which achieves the optimal hash size and polynomial decoding time.

\begin{theorem}[\cite{cheng2022deterministic}]
    There exists a deterministic document exchange protocol for $t$ deletions with hash size $O(t\log n)$ and encoding and decoding time $\poly n$, where the exponent in the polynomial is independent of $t$.
    There also exists a systematic adversarial $t$-deletion code with redundancy $O(t\log n)$ and encoding and decoding time $\poly n$, where the exponent in the polynomial is independent of $t$.
    \label{thm:cjlw22}
\end{theorem}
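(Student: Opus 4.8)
The plan is to prove the document exchange statement and then obtain the systematic code statement as a black-box corollary. For the code, given the deterministic hash $h:\zo^n\to\zo^{r}$ with $r=O(t\log n)$, encode a message $m$ as $m$, followed by a short marker, followed by a copy of $h(m)$ that is itself protected against $t$ deletions by an off-the-shelf adversarial $t$-deletion code (applying Theorem~\ref{thm:b15} to the length-$r$ string adds only $O(t^2+t\log^2 r)=O(t\log n)$ further redundant bits, which is negligible). On the decoding side, one peels off the protected-hash suffix (locating the marker, or trying all $O(n)$ split points), corrects its at-most-$t$ deletions to recover $h(m)$ exactly, and then runs the document exchange decoder on the corrupted prefix together with $h(m)$; since at most $t$ deletions were applied in total, at most $t$ landed in the prefix, so the decoder succeeds. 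This is just the standard deterministic-DE $\Leftrightarrow$ adversarial-code correspondence, so everything reduces to building the hash $h$.

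For document exchange I would use a recursive, hierarchical block-hashing scheme. Partition Alice's length-$N$ string $x$ into $B\approx\sqrt N$ equal contiguous blocks, and call a block \emph{clean} if no deletion falls inside it or within a short neighborhood of its boundaries, so that at most $O(t)$ blocks are unclean. Alice sends three things: (i) enough information --- short ``anchor'' hashes together with a synchronization string (an $\varepsilon$-self-matching string) interleaved position by position --- for Bob to \emph{align}, i.e.\ to determine which window of his received string corresponds to each of Alice's blocks; (ii) a Reed--Solomon-type syndrome of $O(t)$ symbols over a field of size $\poly(N)$, computed from the sequence of per-block hashes, which lets Bob recover the true hash of every block even though $O(t)$ of his candidate blocks (the misaligned or unclean ones) are wrong; and (iii) recursively, a document-exchange hash for the \emph{concatenation} of the unclean blocks, whose positions Bob now knows from the alignment of the clean ones. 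The key point for the redundancy bound is that concatenating the unclean blocks preserves the total deletion budget $t$ while shrinking the instance length from $N$ to $O(tN/B)$, which for $B\approx\sqrt N$ is $O(t\sqrt N)$; hence the per-level cost $O(t\log(\text{length at level }i))$ sums over the $O(\log\log N)$ levels of recursion to $O(t\log N)$, the recursion bottoming out once the instance is short enough to transmit verbatim within an $O(t\log N)$ budget.

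The main obstacle is the alignment step and making it simultaneously deterministic, robust, and efficient. Because every symbol after a deletion is shifted left, Bob cannot directly read off Alice's blocks; the fix is the synchronization string, exploiting that it has no long subsequence matching a nontrivially shifted copy of itself, so the $t$ deletions can corrupt only $O(t)$ of Bob's greedy alignment decisions --- exactly within the error-correcting budget of the $O(t)$-symbol syndrome. Doing this deterministically requires an \emph{explicit} synchronization string (or a derandomization of the probabilistic existence argument), and one must also check that every exhaustive subroutine --- decoding the tiny base-case instances, brute-forcing local alignments --- operates only on windows of size $\poly(t,\log N)$, so that it costs $\poly(N)$ time with the exponent of $N$ bounded by an absolute constant rather than growing with $t$; the crucial structural feature is that the scheme never enumerates the $\binom{N}{t}$ subsequences, which is precisely what keeps the decoding exponent independent of $t$.

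Finally I would assemble the pieces: fix $B$ and the per-level hash lengths, bound the recursion depth by $O(\log\log N)$, sum the redundancy to $O(t\log N)$, argue correctness (each level recovers its clean blocks with certainty while the syndrome cleans up the $O(t)$ misaligned ones, so Bob reconstructs $x$ exactly), and bound the runtime; the systematic $t$-deletion code statement then follows from the reduction in the first paragraph.
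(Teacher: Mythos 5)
This theorem is not proved in the paper at all: it is cited as a black-box result of Cheng, Jin, Li, and Wu \cite{cheng2022deterministic} and invoked only as an ingredient in the proof of Lemma~\ref{lem:equiv}. There is therefore no internal argument of the paper to compare your attempt against. Your first paragraph, which reduces the systematic-code claim to the document-exchange claim by protecting the hash itself with an auxiliary adversarial $t$-deletion code (you correctly reach for Theorem~\ref{thm:b15} rather than circularly invoking the theorem under proof) and appending it to the message, is the standard equivalence and is exactly the move the paper itself makes inside Lemma~\ref{lem:equiv}.

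The remaining paragraphs give a plausible high-level sketch of the hierarchical block-hashing paradigm from the cited literature: synchronization strings for alignment, Reed--Solomon-type syndromes over per-block hashes, and recursion on the concatenation of the $O(t)$ unclean blocks. But as a proof it leaves the actual technical content unverified --- the explicit construction of the self-matching/synchronization object, the precise deterministic alignment guarantee that only $O(t)$ of Bob's block boundaries go wrong, and the redundancy recursion. In particular, note that the paper's own Table~\ref{tab:adv} credits \cite{cheng2022deterministic} with hash size $O(t\log^2(n/t))$, not $O(t\log n)$; your geometric-decay summation aiming directly for $O(t\log n)$ is therefore optimistic relative to what that line of work is usually stated to give and would need to be checked carefully (for the paper's use in Lemma~\ref{lem:equiv} the distinction is harmless, since the theorem is applied only to strings of length $r=\poly\log n$ and either bound is negligible there). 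For the purposes of this paper the result is simply imported; filling in your sketch to a full proof would amount to reproving \cite{cheng2022deterministic}.
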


We now prove Lemma~\ref{lem:equiv} which is restated for convenience.
\EquivDocOblLem*
\begin{proof}
    For the first part, suppose we have a systematic oblivious $t$-deletion code with error $\varepsilon$ with encoding $\Enc(x) = x\circ h(x)$ for some hash $h:\{0,1\}^n\to \{0,1\}^r$ and decoder $\Dec:\{0,1\}^{n+r-t}\to \{0,1\}^n$.
    Let $\Dec':\{0,1\}^{n-t}\times \{0,1\}^r$ be the decoder $\Dec'(x,h) = \Dec(x\circ h)$.
    By definition, $(h,\Dec')$ is a randomized document exchange protocol with length $n$, distance $t$, hash length $t$, and error $\varepsilon$.

    Now suppose we have a randomized document exchange protocol with length $n$, hash $h:\{0,1\}^n\to\{0,1\}^r$ of length $r$, distance $t$, error $\varepsilon$, and decoder $\Dec_h:\{0,1\}^{n-t}\times \{0,1\}^r\to\{0,1\}^n$.
  Let $\Enc_0:\{0,1\}^{r}\to \{0,1\}^{r+g(r,t)}$ and $\Dec_0:\{0,1\}^{r+g(r,t)-t}\to \{0,1\}^{r}$ be the encoding and decoding functions, respectively, for the deletion code given by Theorem~\ref{thm:cjlw22}, where $g(r,t)\le O(t\log r)$ and the encoding and decoding functions running in $\poly r$ time (the exponent in the polynomial is independent of $t$).
  We define our oblivious deletion code's encoding and decoding functions as follows.

  \paragraph*{Encoding.}
  Let the encoding for our $t$ oblivious deletion code $\Enc:\{0,1\}^n\to \{0,1\}^{n+r+g(r,t)}$ be defined as $\Enc(m) = m\circ \Enc_0(h(m))$.

  \paragraph*{Decoding.}
  Suppose that $z$ is a subsequence of $m\circ\Enc_0(h)$ of length $n+r+g(r,t) - t$ (if it is a longer subsequence, apply $t$ deletions arbitrarily).
  \begin{itemize}
    \item Let $z_0$ be the first $n-t$ bits of $z$, and let $z_1$ be the last $r+g(r,t)-t$ bits of $z$.
    \item Compute $\Dec_h(z_0,\Dec_0(z_1))$. 
  \end{itemize}

  \paragraph*{Correctness.}
  Fix a deletion pattern $\tau$ and a message $m$.
  Again, let $z_0$ be the first $n-t$ bits of $z$, and let $z_1$ be the last $r+g(r,t)-t$ bits of $z$.
  It's easy to see that $z_0$ is a subsequence of $m$ and $z_1$ is a subsequence of $\Enc_0(h(m))$.
  Over the randomness of the encoder, $m$ is fixed, so $z_0$ is fixed for a fixed $\tau$.
  Since $(\Enc_0,\Dec_0)$ corrects $t$ adversarial deletions, we know that the $\Dec_0(z_1)=h(m)$.
  Further, since $(h,\Dec_h)$ gives a randomized document exchange protocol, $\Dec_h(z_0, h(m))$ returns $m$ with probability at least $1-\varepsilon$, as desired.

  \paragraph*{Runtime.} Encoder $\Enc_0$ and decoder $\Dec_0$ runs in $\poly r$ time (because we chose the code from \cite{cheng2022deterministic}), hash $h$ runs in $T_E$ time, and $\Dec_h$ runs in $T_D$ time, so the total encoding time is $T_E+O(n)+\poly r$ and decoding time is $T_D + O(n)+\poly r$.
\end{proof}

\subsection{Proof of Theorem~\ref{thm:kms-lower-bound}}
As discussed above, Theorem~\ref{thm:kms-lower-bound} is merely a reformulation of \cite[Theorem 2.2]{kalai2010tight}. Thus, for completeness, we reproduce their original proof here, altering only the parameters slightly.

Here, we will define a deletion pattern $\tau \in [n]$ that represents $t$ deletions to be the set of indices that are \emph{not} deleted. 
The set of all $t$-deletion patterns is denoted by $P_{t,n} = \{ \{a_1, a_2, \ldots, a_{n-t}\}\mid a_1 <a_2 <\cdots < a_{n-t}\}$.
For a string $x\in \zo^{n}$, $\tau(x)$ denotes $x$ after applying the deletion pattern to it. We abuse notation and denote by $\tau$ the set of indices that are not affected by the deletion pattern and also the function that applies the deletions.

The first definition presents a \emph{distance} function between two deletion patterns. 
\begin{definition}
    Let $\tau = \{a_1,\ldots,a_{n-t}\}$ and $\tau'=\{b_1,\ldots,b_{n-t}\}$ be two $t$-deletion patterns. Denote by $\Delta (\tau, \tau')$, the number of disagreements between $\tau$ and $\tau'$:
    \[
    \Delta(\tau,\tau') = |\{i \mid a_i\neq b_i\}|
    \]
\end{definition}

Next, we give the definition of an \emph{$(\ell, t)$-bad string}. This is a string in $\zo^n$ for which there are two ``$\ell$-far apart'' deletion patterns $\tau,\tau'$ such that $\tau(x)=\tau'(x)$. Formally,  
\begin{definition}
    Let $\ell \geq 1$. A string $x\in \zo^{n}$ is $(\ell,t)$-bad if there are two distinct $t$-deletion patterns such that $\tau(x) = \tau'(x)$ and $\Delta(\tau, \tau') \geq \ell$.
\end{definition}
In \cite{kalai2010tight}, the authors showed that the number of $(\ell,t)$ bad strings is not big. Specifically, 
\begin{lemma}\cite[Lemma 2.3]{kalai2010tight}
    Let $\ell \geq 1$. There are at most $\binom{n}{t}^2 2^{n-\ell}$ different $(\ell, t)$-bad strings of length $n$.
\end{lemma}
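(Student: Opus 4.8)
```latex
\textbf{Proof plan for \cite[Lemma 2.3]{kalai2010tight} (the final statement).}
The goal is to bound the number of strings $x\in\zo^n$ that are $(\ell,t)$-bad, i.e.\ for which there exist distinct $t$-deletion patterns $\tau=\{a_1,\dots,a_{n-t}\}$ and $\tau'=\{b_1,\dots,b_{n-t}\}$ with $\tau(x)=\tau'(x)$ and $\Delta(\tau,\tau')\ge\ell$. The plan is to fix the pair $(\tau,\tau')$ first and count, for that fixed pair, how many $x$ satisfy $\tau(x)=\tau'(x)$; then union bound over all $\binom{n}{t}^2$ choices of ordered pairs of patterns. So the heart of the argument is the claim: if $\Delta(\tau,\tau')\ge\ell$, then the number of $x\in\zo^n$ with $\tau(x)=\tau'(x)$ is at most $2^{n-\ell}$.

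To prove that claim, I would think of the equality $\tau(x)=\tau'(x)$ as a system of constraints forcing certain coordinates of $x$ to be equal. Writing $y=\tau(x)=\tau'(x)\in\zo^{n-t}$, the $j$-th symbol gives $x_{a_j}=y_j=x_{b_j}$ for every $j\in[n-t]$. Thus $x$ must be constant on each block of the partition of $\{a_1,\dots,a_{n-t}\}\cup\{b_1,\dots,b_{n-t}\}$ generated by the relations $a_j\sim b_j$. The key quantitative step is: the number of these equality relations that are \emph{nontrivial} (i.e.\ $a_j\neq b_j$) is exactly $\Delta(\tau,\tau')\ge\ell$, and each nontrivial relation $a_j\sim b_j$ merges two classes (or is redundant), so after imposing all of them the number of free coordinates of $x$ — namely the $t$ deleted coordinates plus the number of equivalence classes among the retained coordinates — is at most $n-\ell$. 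More carefully: $x$ is determined freely on the $t$ indices outside $\tau\cup\tau'$-complement, ... the cleanest bookkeeping is to say the number of degrees of freedom of $x$ is $n$ minus the number of \emph{independent} equality constraints, and the $\Delta(\tau,\tau')$ constraints coming from indices $j$ with $a_j\ne b_j$ contribute at least $\ell$ independent constraints. Hence at most $2^{n-\ell}$ such $x$.

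The step I expect to be the main obstacle — or at least the one requiring the most care — is arguing that the $\ell$ constraints $\{x_{a_j}=x_{b_j}: a_j\ne b_j\}$ are "independent enough" to cut the count by a factor $2^\ell$, rather than collapsing into far fewer independent constraints. The natural way to handle this: order the indices $j$ with $a_j\neq b_j$ and observe that, since $a_1<a_2<\cdots$ and $b_1<b_2<\cdots$ are both strictly increasing, each such $j$ introduces, via the relation $a_j\sim b_j$, at least one coordinate ($\min(a_j,b_j)$ or $\max(a_j,b_j)$, whichever was not yet seen) that is new relative to the constraints from smaller indices — so processing the constraints in increasing order of $j$ reduces the number of free coordinates by at least $1$ each time. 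This monotonicity/greedy argument is exactly the "core combinatorial lemma" and is where Kalai–Mitzenmacher–Sudan put the work.

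Finally, assembling: there are at most $\binom{n}{t}$ choices for $\tau$ and at most $\binom{n}{t}$ for $\tau'$, and for each ordered pair with $\Delta(\tau,\tau')\ge\ell$ at most $2^{n-\ell}$ bad strings; every $(\ell,t)$-bad string is counted in at least one such pair, so the number of $(\ell,t)$-bad strings is at most $\binom{n}{t}^2 2^{n-\ell}$, as claimed. (One minor subtlety to note in passing: pairs with $\Delta<\ell$ simply do not contribute, and $\tau=\tau'$ is excluded, so the crude $\binom{n}{t}^2$ over all ordered pairs is a safe overcount.)
```
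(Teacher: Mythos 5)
Your argument is correct, and since the paper cites \cite[Lemma 2.3]{kalai2010tight} without reproducing its proof, this is precisely the argument one would need to supply; the strategy (fix an ordered pair $(\tau,\tau')$ with $\Delta(\tau,\tau')\ge\ell$, bound the number of $x$ with $\tau(x)=\tau'(x)$ by $2^{n-\ell}$, then union-bound over $\binom{n}{t}^2$ ordered pairs) is the standard one and matches Kalai--Mitzenmacher--Sudan.

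The only place worth sharpening is the phrase ``at least one coordinate ($\min(a_j,b_j)$ or $\max(a_j,b_j)$, whichever was not yet seen) that is new,'' which suggests some case analysis about which endpoint is fresh. In fact $\max(a_j,b_j)$ is \emph{always} new: since $a_1<a_2<\cdots$ and $b_1<b_2<\cdots$ are strictly increasing, for every $j'<j$ we have $\max(a_j,b_j)\ge a_j>a_{j'}$ and $\max(a_j,b_j)\ge b_j>b_{j'}$, so $\max(a_j,b_j)$ strictly exceeds every index appearing in any earlier constraint and therefore still sits in a singleton class when constraint $j$ is processed. Consequently each of the $\Delta(\tau,\tau')\ge\ell$ nontrivial constraints is never redundant and drops the number of equivalence classes by exactly one, giving at most $n-\ell$ classes and hence at most $2^{n-\ell}$ strings $x$ with $\tau(x)=\tau'(x)$. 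Your union bound then yields the stated $\binom{n}{t}^2\,2^{n-\ell}$.
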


An important step in their proof was to bound for a given $t$-deletion pattern $\tau$, how many other $t$-deletion patterns are $\ell$-close to it. Formally,
\begin{lemma}\cite[Lemma 2.2]{kalai2010tight} \label{lem:close-distance-patterns}
    Let $\ell>0$ be an integer. For any $t$-deletion pattern $\tau$, the number of $t$-deletion pattern $\tau'$ such that $\Delta (\tau, \tau')\leq \ell$ is at most 
    \[
    (\ell + 1) \binom{2t + \ell + 1}{2t+1} \binom{t+\ell}{t}\;.
    \]
\end{lemma}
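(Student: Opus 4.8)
The plan is to recast deletion patterns as monotone lattice paths (equivalently, non-decreasing integer functions), under which $\Delta$ becomes ordinary Hamming distance, and then bound the number of nearby patterns by decomposing the region of disagreement into a small number of ``blocks.'' Concretely, associate to a $t$-deletion pattern $\tau=\{a_1<\dots<a_{n-t}\}$ its \emph{drift function} $\delta_\tau\colon[n-t]\to\{0,1,\dots,t\}$ defined by $\delta_\tau(i)=a_i-i$, the number of deleted positions before the $i$-th surviving symbol. This is a bijection between $t$-deletion patterns and non-decreasing functions $[n-t]\to\{0,\dots,t\}$ (given such a function $\delta$, put $a_i=\delta(i)+i$), and it satisfies $a_i=b_i\iff\delta_\tau(i)=\delta_{\tau'}(i)$. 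Hence $\Delta(\tau,\tau')$ is exactly the Hamming distance between $\delta_\tau$ and $\delta_{\tau'}$, and it suffices to fix a non-decreasing $f\colon[N]\to\{0,\dots,t\}$ (with $N=n-t$, extended by the conventions $f(0)=0$, $f(N+1)=t$) and to upper bound the number of non-decreasing $g$ with $d_H(f,g)\le\ell$.

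The next step is a structural analysis of such a $g$. Let $S=\{i:f(i)\ne g(i)\}$, so $|S|\le\ell$, and write $S$ as a disjoint union of maximal intervals (``blocks'') $I_1<\dots<I_k$, with widths $m_j=|I_j|$ and heights $h_j=f(\max I_j+1)-f(\min I_j-1)$. On a block $I_j$ the function $g$ is a non-decreasing sequence of length $m_j$ with values in an interval of $h_j+1$ integers (since $g$ equals $f$, hence is determined, just outside $I_j$). The key observation is that every block has $h_j\ge1$: if $h_j=0$ then $f$ is constant on the closure of $I_j$, which forces $g$ to be that same constant, contradicting $I_j\subseteq S$. Because the blocks are separated, maximality gives $\max I_j+1\le\min I_{j+1}-1$, so the heights telescope: $\sum_j h_j\le f(N+1)-f(0)=t$, whence $k\le t$ and $d:=|S|=\sum_j m_j\le\ell$. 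Now $g$ is recovered from (i) the block structure and (ii) the value sequence inside each block, and given the block structure the number of choices in (ii) is at most $\prod_{j=1}^{k}\binom{m_j+h_j}{h_j}$, which by iterated Vandermonde is at most $\binom{\sum_j m_j+\sum_j h_j}{\sum_j h_j}\le\binom{d+t}{t}\le\binom{t+\ell}{t}$, uniformly over block structures of total width $d$.

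It remains to bound the number of block structures of a given total width $d$. Here the idea is to anchor the blocks to the (at most $t$) up-steps of $f$: each block contains a non-empty consecutive run of up-steps of $f$, these runs are disjoint and ordered, and once the runs are fixed a block is determined by how far it extends to the left and to the right. This presents a width-$d$ block structure as a distribution of the $d$ units of width — the forced interior right-steps (determined by $f$ once the up-step runs are chosen) plus the chosen left/right extensions — across $O(t)$ positional slots attached to $f$'s up-steps and the gaps between them, and a stars-and-bars count then yields at most $\binom{2t+\ell+1}{2t+1}$ such block structures. Putting the pieces together and summing over the exact distance $d$,
\[
\#\{\tau':\Delta(\tau,\tau')\le\ell\}=\#\{g:d_H(f,g)\le\ell\}\le\sum_{d=0}^{\ell}\binom{2t+\ell+1}{2t+1}\binom{t+\ell}{t}=(\ell+1)\binom{2t+\ell+1}{2t+1}\binom{t+\ell}{t},
\]
as claimed, with the factor $\ell+1$ coming precisely from the sum over $d\in\{0,1,\dots,\ell\}$.

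I expect the last step to be the main obstacle: the stars-and-bars bookkeeping for the block structures is delicate because within a block the ``interior'' width is forced by $f$ rather than a free parameter, so one must carefully separate the genuinely independent degrees of freedom (which up-step runs, and the two extensions per block) from the forced ones, and also handle repeated jumps of $f$ (where an up-step run cannot be split between two blocks) and blocks that touch the boundary of $[N]$ (handled by the conventions $f(0)=0$, $f(N+1)=t$). The other two ingredients — the drift-function reduction and the Vandermonde bound $\binom{t+\ell}{t}$ on the per-structure value count — should be routine.
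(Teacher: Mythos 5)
The paper does not prove this lemma: it imports it verbatim as \cite[Lemma~2.2]{kalai2010tight}, and the appendix only reproduces the proof of \cite[Theorem~2.2]{kalai2010tight}, which consumes the lemma as a black box. So there is no ``paper's proof'' to compare your argument against, and your write-up would have to stand on its own.

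Your framework is on solid ground through the penultimate step. The drift-function map $\delta_\tau(i)=a_i-i$ is indeed a bijection onto non-decreasing functions $[n-t]\to\{0,\dots,t\}$ under which $\Delta$ becomes Hamming distance; the block decomposition, the observation that every block has height $h_j\ge 1$ (including boundary blocks, via $f(0)=g(0)=0$ and $f(N+1)=g(N+1)=t$), the telescoping $\sum_j h_j\le t$ hence $k\le t$, and the Vandermonde bound $\prod_j\binom{m_j+h_j}{h_j}\le\binom{\sum m_j+\sum h_j}{\sum h_j}\le\binom{t+\ell}{t}$ are all correct. The genuine gap is exactly where you flag it: the claim that, for each fixed total width $d\le\ell$, the number of admissible block structures is at most $\binom{2t+\ell+1}{2t+1}$. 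As written this is only a heuristic. The parametrization you describe has two coupled distributions --- a composition of $t$ into up-step runs and gaps (which fixes the ``forced interior widths'' $p_{v_j}-p_{u_j}$), and a composition of the remaining width among $2k$ left/right extensions --- and there is no single stars-and-bars count that obviously dominates the sum over $k$ of the product of those two counts; a naive version ($\sum_k\binom{t+k}{2k}\binom{\ell+2k}{2k}$) already exceeds $\binom{2t+\ell+1}{2t+1}$ for, e.g., $t=2,\ell=1$. You would need a genuinely different encoding (e.g.\ one that interleaves the $t$ up-steps and the $\le\ell$ width units into a single sequence of $2t+1$ slots) to make the $\binom{2t+\ell+1}{2t+1}$ factor appear honestly, and you would also have to handle the multiplicity issue (repeated jump positions forcing runs to stay together) and the constraints that the extensions stay within bounds. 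Until that counting is carried out, the proof is incomplete, even though the surrounding structure is sound.
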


Another useful lemma that \cite{kalai2010tight} used was the following
\begin{lemma} \label{lem:helping-joint-prob}
    Let $\rho$ be a joint distribution over $S\times T$ for finite sets $S$ and $T$ such that the marginal distribution over $S$ is uniform. Let $g:T \to S$ be a function. Then, $\Pr_{(a,b)\sim \rho}[g(b) = a]\leq |T|/|S|$.
\end{lemma}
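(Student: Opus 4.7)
The plan is to rewrite the target probability as a sum indexed by elements of $T$ (exploiting that $g$ is a function on $T$), and then bound each term using the uniformity of the marginal on $S$. The key observation is that for any joint distribution $\rho$ on $S \times T$, the entry $\rho(a,b)$ is always upper bounded by the marginal $\Pr_\rho[A=a]$, which here equals $1/|S|$.

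Concretely, first I would write
\[
\Pr_{(a,b)\sim\rho}[g(b)=a] \;=\; \sum_{(a,b)\in S\times T:\, g(b)=a} \rho(a,b).
\]
Because $g$ is a function with domain $T$, the pairs satisfying $g(b)=a$ are exactly $\{(g(b),b) : b \in T\}$, so the sum collapses to $\sum_{b\in T}\rho(g(b),b)$. Next, I would bound each term by the $S$-marginal: $\rho(g(b),b)\le \Pr_\rho[A=g(b)] = 1/|S|$, using the uniformity hypothesis. Summing over $b\in T$ gives $|T|/|S|$, as desired.

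The proof is essentially a one-line calculation and I do not anticipate a real obstacle. The only subtle point worth flagging in the write-up is choosing the right parametrization: if one instead indexes by $a \in S$ and writes $\Pr[g(B)=a,\,A=a]$, one needs to invoke both $\Pr[A=a]=1/|S|$ and a careful handling of $\Pr[B\in g^{-1}(a)]$, which is messier. Indexing by $b$ avoids this entirely because each $b$ contributes exactly once. Since the lemma makes no independence assumption between $A$ and $B$, it is important that the bound $\rho(a,b)\le\Pr[A=a]$ holds unconditionally, which it does.
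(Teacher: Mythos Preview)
Your argument is correct: indexing by $b\in T$, writing $\Pr_{(a,b)\sim\rho}[g(b)=a]=\sum_{b\in T}\rho(g(b),b)$, and bounding each term by the $S$-marginal $1/|S|$ is clean and complete. Note, however, that the paper does not supply its own proof of this lemma; it only states the result and attributes it to \cite{kalai2010tight}, so there is no in-paper argument to compare against.
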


We are now ready to present the proof of Theorem~\ref{thm:kms-lower-bound}. The theorem is restated for convenience.

\KMSTheorem*

\begin{proof}
    Set $\ell = 3t\log n$. The number of $(\ell,t)$-bad strings is at most
    \[
    \binom{n}{t}^2\cdot 2^{n-3t\log n}\leq \left(\frac{n^t}{t!}\right)^{2} \cdot \left( \frac{1}{n}\right)^{3t} \cdot 2^n = \frac{1}{(t!)^2}\cdot \frac{2^n}{n^t}\;,
    \]
    where the inequality follows by using the upper bound $\binom{n}{t}\leq n^t/t!$.
    Let $C$ be the code guaranteed by the theorem, and let $\mathcal{A}$ be the decoding algorithm of the code.
    
    We will next define an algorithm $\mathcal{G}$ that gets as input $\tau(c)$ for some random codeword $c$ and random $t$-deletion pattern $\tau$ and outputs $(c, \tau)$ with nonnegligible probability. On input $s$, $\mathcal{G}$ runs the decoder $\mathcal{A}$ on $s$ and then returns $\mathcal{G}(s) = (\mathcal{A}(s), \tau')$ where $\tau'$ is the lexicographically first deletion pattern for which $s = \tau'(\mathcal{A}(c))$. Now, the probability that the decoder $\mathcal{A}$ succeeds and that the codeword is \emph{not} an $(\ell,t)$-bad string is at least
    \[
    (1 - \delta) - \frac{2^n}{(t!)^2 \cdot n^t} \cdot \frac{1}{|C|} \geq \frac{1-\delta}{2}\;.
    \]
    Indeed, at the worst-case scenario, all the $(\ell,t)$-bad strings are codewords.
    The inequality above holds since otherwise we would have that $|C|\leq \frac{2^n}{(t!)^2 \cdot n^t}\cdot \frac{2}{1-\delta}$ which implies that the redundancy of $C'$ is at least $t\log n + 2\log (t!) -\log(2/(1-\delta)) > \log\binom{n}{t} + t -\log(3t) - \log(2/(1-\delta))$ for all $t$ and so the claim holds.
    
    Fix $\tau$. Conditioned on $c$ not being $(\ell,t)$-bad, for all $\tau'$ such that $\tau'(c) = \tau(c)$, we have that $\Delta(\tau, \tau')\leq \ell-1$. By Lemma~\ref{lem:close-distance-patterns}, the number of such close patterns is at most
    \begin{align*}
    \ell \cdot \binom{2t+\ell}{2t+1}\cdot \binom{t + \ell -1}{t} 
    &\leq \ell \cdot \left(e\cdot\frac{2t + \ell}{2t+1}\right)^{2t+1}\left(e\cdot\frac{t+\ell}{t}\right)^t 
    \leq \ell \cdot \left(\frac{6\ell}{t}\right)^{3t+1} \;,
    \end{align*}
    where the first inequality follows by the upper bound $\binom{n}{t}\leq (en/t)^t$ which holds for $t<n/2$ and by noting that $\ell > 2t$.
    Now, since each deletion pattern is equally likely, the probability that $\mathcal{G}(\tau(c)) = (c, \tau)$ is at least $(1-\delta)/ (2\alpha)$ where $\alpha = \ell \cdot \left(\frac{6\ell}{t}\right)^{3t+1}$.

    Now, using Lemma~\ref{lem:helping-joint-prob} with the sets $S=C\times P_{t,n}$ and $T=\zo^{n-t}$, we get the probability that $g(\tau(c)) = (c, \tau)$ is at most $\frac{2^{n-t}}{|C|\cdot \binom{n}{t}}$. Therefore, we have
    \[
    \frac{2^{n-t}}{|C| \cdot \binom{n}{t}} \geq \frac{1-\delta}{2\alpha}
    \]
    and by taking logarithm on both sides and rearranging terms, we find that
    \[
    \log(|C|) \leq n-t -\log \binom{n}{t} -\log((1-\delta)/2) + \log\alpha \;
    \]
    and the claim follows by noting that $\log(\alpha) = \log(3t\log n) + (3t+1)\log(18 \log n) = O(t\log \log n)$.
\end{proof}

\end{document}